\DeclarePairedDelimiter{\ceil}{\lceil}{\rceil}
\newtheorem{lemma}{Lemma}
\newtheorem{result}{Result}
\newcommand{\id}{\mathds{1}}
\newcommand{\ii}{\mathrm{i}}
\begin{document}

\title{Selective and noise-resilient wave estimation with quantum sensor networks}

\author{Arne Hamann\,\orcidlink{0000-0002-9016-3641}}
\affiliation{Universit\"at Innsbruck, Institut f\"ur Theoretische Physik, Technikerstra{\ss}e 21a, 6020 Innsbruck, Austria}

\author{Paul Aigner\,\orcidlink{0009-0004-4277-2209}}
\affiliation{Universit\"at Innsbruck, Institut f\"ur Theoretische Physik, Technikerstra{\ss}e 21a, 6020 Innsbruck, Austria}

\author{Wolfgang D\"ur\,\orcidlink{0000-0002-0234-7425}}
\affiliation{Universit\"at Innsbruck, Institut f\"ur Theoretische Physik, Technikerstra{\ss}e 21a, 6020 Innsbruck, Austria}

\author{Pavel Sekatski\,\orcidlink{0000-0001-8455-020X}}
\affiliation{University of Geneva, Department of Applied Physics, 1211 Geneva, Switzerland}
\date{\today}

\begin{abstract}
    We consider the selective sensing of planar waves in the presence of noise. We present different methods to control the sensitivity of a quantum sensor network, which allow one to decouple it from arbitrarily selected waves while retaining sensitivity to the signal. Comparing these methods with classical (non-entangled) sensor networks we demonstrate two advantages.
    First, entanglement increases precision by enabling the Heisenberg scaling. Second, entanglement enables the elimination of correlated noise processes corresponding to waves with different propagation directions, by exploiting decoherence-free subspaces. 
    We then provide a theoretical and numerical analysis of the advantage offered by entangled quantum sensor networks, which is not specific to waves and can be of general interest. We demonstrate an exponential advantage in the regime where the number of sensor locations is comparable to the number of noise sources. 
    Finally, we outline a generalization to other waveforms, e.g., spherical harmonics and general time-dependent fields. 
\end{abstract}

\maketitle
\section{Introduction}
Waves are a fundamental concept in physics. Their detection is a crucial ingredient for many modern technologies and scientific experiments. Radios and smartphones use waves to transmit information. Radar uses waves to probe the environment~\cite{hulsmeyerVerfahrenUmEntfernte1905}, and telescopes detect waves to investigate their source~\cite{BurkeBernardF.2019AItR}. The importance of waves is not limited to the electromagnetic spectrum. Gravitational waves give a new perspective on the universe~\cite{ligoscientificcollaborationandvirgocollaborationObservationGravitationalWaves2016}, and seismic waves, as an example from the acoustic domain, probe earth's internal structure~\cite{2003PMSI}. At the fundamental level, the coupling of waves to matter is governed by quantum physics. Since waves are specific space-time dependent fields,  it is natural to consider quantum sensor networks for their detection. 

A quantum sensor network is an ensemble of local point-like sensors distributed in space and connected by a quantum network. The quantum network distributes entanglement, which can be used as a resource for the estimation of various space-dependent signals ~\cite{eldredgeOptimalSecureMeasurement2018,qianHeisenbergscalingMeasurementProtocol2019,geDistributedQuantumMetrology2018,sidhuGeometricPerspectiveQuantum2020,rubioQuantumSensingNetworks2020,qianOptimalMeasurementField2021,bringewattProtocolsEstimatingMultiple2021,zhangDistributedQuantumSensing2021,shettellPrivateNetworkParameter2022, qianOptimalMeasurementField2021,zhangDistributedQuantumSensing2021,bugalhoPrivateRobustStates2024,hassaniPrivacyNetworksQuantum2024,proctorMultiparameterEstimationNetworked2018,eldredgeOptimalSecureMeasurement2018,geDistributedQuantumMetrology2018,rubioQuantumSensingNetworks2020,sekatskiOptimalDistributedSensing2020,hamannApproximateDecoherenceFree2022,wolkNoisyDistributedSensing2020,hamannMultiparameter2024,BateExperimental2024}. Due to the presence of entanglement the sensitivity of quantum sensor networks can exhibit the so-called Heisenberg scaling~\cite{giovannettiAdvancesQuantumMetrology2011}, which provides a quadratic advantage in terms of the number of sensor locations, as compared to the standard quantum limit governing classical sensor networks. This scaling advantage is however fragile and can be quickly lost when noise is present.

In this paper, we specifically discuss how quantum sensor networks can be used for probing waves selectively. We consider a setting where the amplitudes of one or several plane waves play the role of the signal to be estimated while other waves are treated as noise. We then show how to design quantum sensor networks which are protected from noise while retaining the best possible sensitivity to the signal waves, maintaining Heisenberg scaling.

As already mentioned, noise significantly reduces estimation precision of sensor networks, and several techniques have been developed to overcome this effect. Quantum error-correction techniques, originally considered in quantum computing, have been adapted for the sensing tasks~\cite{PhysRevLett.112.080801,arradIncreasingSensingResolution2014,demkowicz-dobrzanskiAdaptiveQuantumMetrology2017a, sekatskiQuantumMetrologyFull2017,zhouAchievingHeisenbergLimit2018,faistTimeEnergyUncertaintyRelation2023}.
In general, error-correction for sensing requires fast universal control, which in the context of sensor networks means performing global entangled operations at a fast rate. For commuting spatially correlated noise error-correction offers no advantage~\cite{hamannMultiparameter2024} over an alternative less demanding technique based on decoherence-free subspaces~\cite{sekatskiOptimalDistributedSensing2020, wolkNoisyDistributedSensing2020, hamannApproximateDecoherenceFree2022, hamannMultiparameter2024,BateExperimental2024}. Here, the quantum sensor network is intrinsically protected from noise by preparing the physical sensors in a well chosen entangled state and applying local control (see Sec.~\ref{sec:dfs} for a formal definition).
Remarkably, in presence of correlated noise the advantage offered by quantum sensor networks is even more pronounced.
For certain examples it is known that the usage of decoherence-free subspaces can lead up to an exponential advantage in sensitivity of quantum sensor networks as compared to their classical counterparts~\cite{sekatskiOptimalDistributedSensing2020,hamannApproximateDecoherenceFree2022,wangExponentialEntanglementAdvantage2024}.

\begin{figure}[b]
    \centering
    \includegraphics[width=0.75\linewidth]{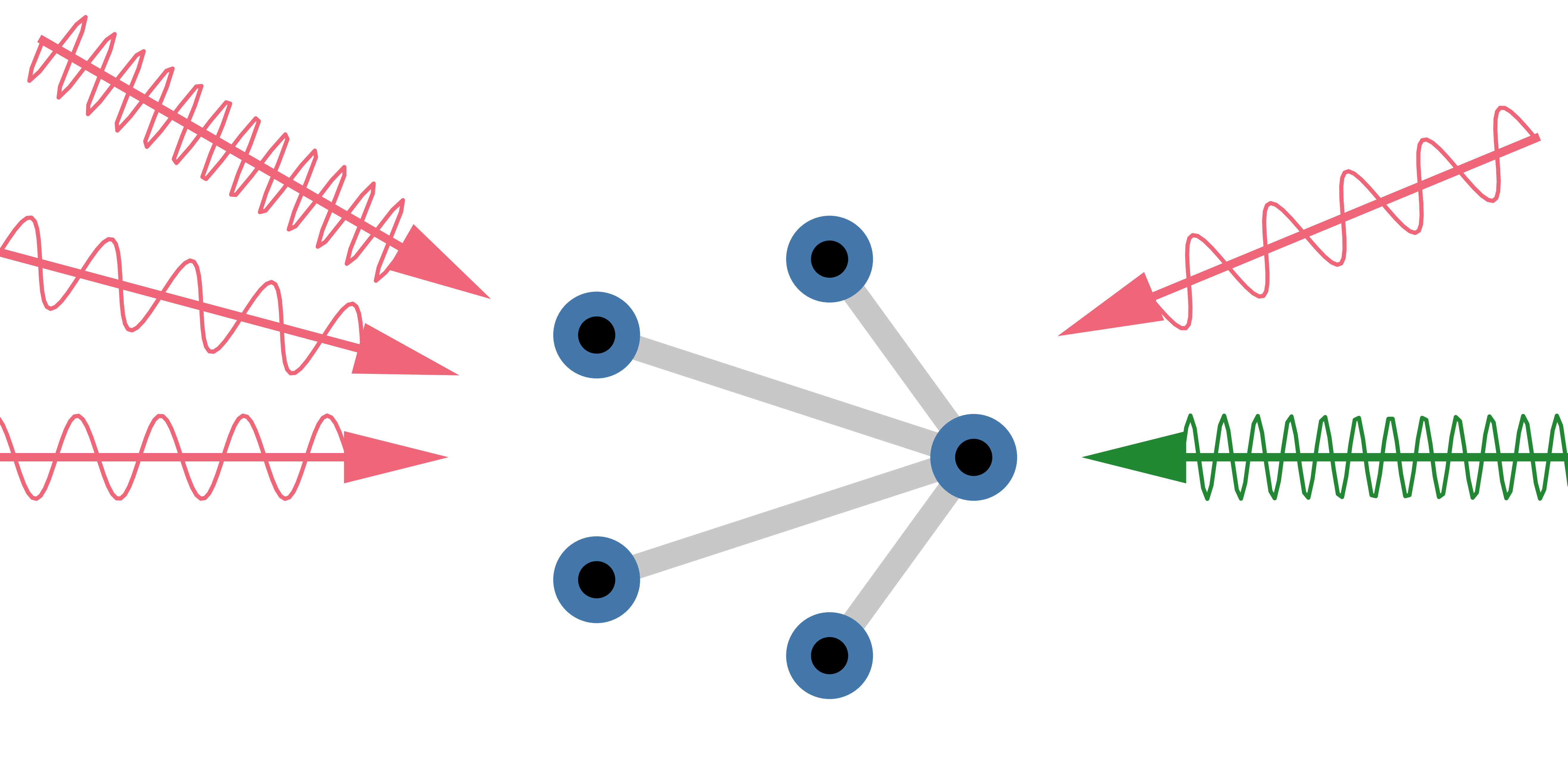}
    \caption{A network of quantum sensors (blue points) is used to estimate a signal wave (green) in the presence of noise waves (red), which may have different frequencies and different propagation directions. The sensor network can be tailored to any wave pattern by changing the applied control sequence.}
    \label{fig:set_up}
\end{figure}

 Here, we combine two control techniques: lock-in amplification, allowing one to filter the signals of the desired frequency, and decoherence-free subspaces (DFS), allowing one to filter the signals with the desired wavevectors.  The paper is structured as follows. In Sec.~\ref{sec: setup} we formally introduce the quantum sensor network setup described above. Then, in Sec.~\ref{sec: methods} we present the necessary background information on lock-in amplification, decoherence-free subspaces, and multi-parameter quantum metrology 
(in presence of correlated noise). Section~\ref{sec: results} presents the main results of the paper, namely:
\begin{itemize}
    \item A method to engineer control sequences and decoherence-free subspaces tailored to specific signals in the presence of specific sources of noise.
    \item A general theoretical analysis of the quantum advantages offered by entangled sensor arrays, which are not specific to waves and can be of independent interest.
\end{itemize} In Sec.~\ref{ref: examples} we illustrate our methods by applying them to various examples. Finally, Section~\ref{sec: generalization}
discusses possible generalization of these methods to other ``wavelets'' like spherical harmonics or signals with arbitrary fixed time dependence. Here we also show how one can increase the dimension of the DFS, which is beneficial in the context of Bayesian metrology. We finish with a conclusion in Sec.~\ref{sec:conslusion}.


\section{setup}
\label{sec: setup}

We will consider $d+1$ plane waves, described by fields of the form
\begin{equation}\label{eq:field}
f_j(\vec{x},t) = \cos\left({\vec{k}_j^\intercal\Vec{x} -\omega_j t +\phi_j}\right),
\end{equation}
coupled to a quantum sensor network of $n$ qubit sensors distributed at locations $\{\Vec{x}_1,\cdots,\Vec{x}_n\}$. It can be  initialized in an arbitrary state $\vert\psi\rangle$.\\

Each plane wave $j$ couples to the sensor qubits via the commuting local operators
\begin{align}\label{eq: W}
	\hat{W}_j(t)                 & =\sum_{i=1}^n f_j(\vec{x},t) \sigma_z^i,
\end{align}
where $\sigma_z^i$ is the Pauli operator acting the $i$-th qubit. For convenience we label the computational basis states with $z_i=\pm 1$ such that $\sigma_z^i\ket{z_i} = z_i \ket{z_i}$, and denote the elements of the product basis with $\ket{\bm z}$ for $\bm z = (z_1,\dots, z_n)$.
Generally, one can not assume knowledge of the phases $\phi_j$ in Eq.~\eqref{eq: W}, while it might be a valid assumption in other cases. Therefore, we will study both cases, if the phase is unknown it follows a flat prior distribution over the interval $[0,2\pi]$.

Additionally, we allow for bit-flip control, i.e., applying an $X$ gate locally to the sensors. This bit-flip gates are assumed instantaneous and effectively flip the sign of the local interaction. We will model this with an effective local interaction strength 
\begin{equation}
	C_i(t)=(-1)^{\eta_i(t)}
\end{equation}
where $\eta_i(t)$ is an integer counting the number of flips applied to the $i$-th sensor before time $t$. 
We talk of fast control if the flips can be applied with a frequency $\Omega\gg \omega_j$, which is fast enough so that the waves can be approximated as a constant between two flips. In this case, the control can be effectively approximated to take any integrable function $C_i(t)\in [-1,1]$ in the interval, which will be convenient for some calculations.
On the other hand, we talk of slow control if the frequency of flips $\Omega\approx \omega_j$ is of the order of the waves frequencies. In this case we stick with integer-valued $C_i(t)\in \{-1,+1\}$. \\

The amplitudes of each plane wave are given by $\beta_j$, so that the total Hamiltonian reads 
\begin{equation}
H(t) = \sum_{j=0}^{d} \beta_j \sum_{i=1}^n C_i(t) \, f_j(\vec{x}_i,t) \sigma_z^i,
\end{equation}
note that for simplicity we do not write the physical coupling constant of the field with the  sensor qubits, absorbing the later in the field amplitude.
Since all the operators in the Hamiltonian commute, the time evolution of duration $T$ results in the global unitary of the form

\begin{align}
U(T) &= \exp(- \ii \sum_{j=0}^n \beta_j \hat G_j) 
\\
\label{eq: G eig}
\hat G_j 
= \hat{G}(\omega_j,\vec{k}_j,\phi_j) 
&= \sum_{\bm z} g_{\bm{z}}(\omega_j,\vec{k}_j,\phi_j) \ket{\bm{z}}\bra{\bm{z}} \\
\label{eq: G int} 
 g_{\bm{z}}(\omega_j,\vec{k}_j,\phi_j)&= \int_{-\frac{T}{2}}^{\frac{T}{2}} \sum_{i=1}^n z_i f_j(\vec{x}_i,t) C_i(t) dt.
\end{align}
A generator $\hat{G}(\omega,\vec{k},\phi)$ with eigenvalues $g_{\bm{z}}(\omega,\vec{k},\phi)$, corresponds to a wave described by $\omega, \vec{k}$ and $\phi$, and is diagonal in the product basis $\ket{\bm z}$. 

For concreteness, let us consider the single-parameter estimation scenario where the parameter of interest $\theta=\beta_0$ is the amplitude of the first plane-wave. The other plane-waves are considered as noise. In particular, we focus on the scenario where their amplitudes $\bm \beta= (\beta_1,\dots,\beta_d)$ are distributed accordingly to some probability density $p(\bm \beta)$.
The overall time evolution 
\begin{equation}
    \rho(\theta) = \mathcal{N}[U_\theta\,  \rho_0 \, U_\theta^\dag] = U_\theta\,\mathcal{N}[  \rho_0 ]\, U_\theta^\dag
\end{equation}
can thus be decomposed as the unitary part $U_\theta =e^{-\ii \theta \hat G_0}$ encoding the parameter and the (commuting) completely positive trace preserving (CPTP) noise map
\begin{equation}\label{eq: noise channel}
\mathcal{N}_{\bm{\phi}}[\rho] = \int e^{-\ii \sum_{j=1}^d \beta_j \hat G_j}\, \rho \, e^{\ii \sum_{j=1}^d \beta_j \hat G_j} p(\bm \beta )\dd \bm \beta.
\end{equation}
In the case where the phases $\phi_j$ of the noise fields are unknown, one has to average over their values in addition. Then, the CPTP map becomes
\begin{equation}
\mathcal{N}[\rho] = \int \mathcal{N}_{\bm{\phi}}[\rho] \prod_{j=1}^d \frac{\dd \phi_j}{2\pi}.
\end{equation}

\section{Methods} 

\label{sec: methods}

In this section we introduce the necessary notation and review some results from the literature that we use later. In Sec.~\ref{sec: lock-in}
we discuss the lock-in amplification technique, which consists of applying a specific sequence of control pulses during the sensing, effectively decoupling the sensor from signals of unwanted frequency. In Sec.~\ref{sec:dfs} we introduce the formalism of DFS. The later is a subspace of the Hilbert space associated to the quantum sensor, which is invariant under the action of the noise generators and hence protected from noise. We then combine a protection method for correlated noise in time (lock-in amplification) with one for correlations in space (decoherence-free subspaces) to obtain decoherence-free subspaces for waves.
In Sec.~\ref{sec: metro} we review some results on noisy (multi)parameter, and specifically on the role of  DFS formalism in this context. In particular, we identify the coupling strength to waves $g_{\bm{z}}(\omega,\vec{k},\phi)$  as the relevant property and relate it to typical metrology concepts such as the mean squared error, the quantum Fisher information and the score function in Bayesian estimation.

\subsection{Fourier transform}
We now briefly discuss how to compute the eigenvalues of a general generator $\hat{G}(\omega,\vec{k},\phi)\ket{\bm z}$ in Eq.~\eqref{eq: G eig}. One notes that it can be conveniently expressed with the help of Fourier transforms of the control sequences as
\begin{equation}
    g_{\bm{z}}(\omega,\vec{k},\phi) = \Re\left(e^{i\phi} \sum_i z_i e^{i \vec k ^\intercal \vec{x}}\, \mathfrak{F}\left( C_i(t)\right)\right),
\end{equation}
where
\begin{equation}
	\mathfrak{F}\left(f(t)\right) = \int dt \ e^{- i\omega t}f(t).
\end{equation}
To ensure that the integration bounds here are consistent with Eq.~\eqref{eq: G int} and with the duration of the experiment, from now on we set $C_i(t)=0$ for $t\leq \frac{T}{2}$ and $t\geq \frac{T}{2}$

\subsection{Lock-in amplification}
\label{sec: lock-in}

\begin{figure*}
	\centering
    \includegraphics[width=0.7\linewidth]{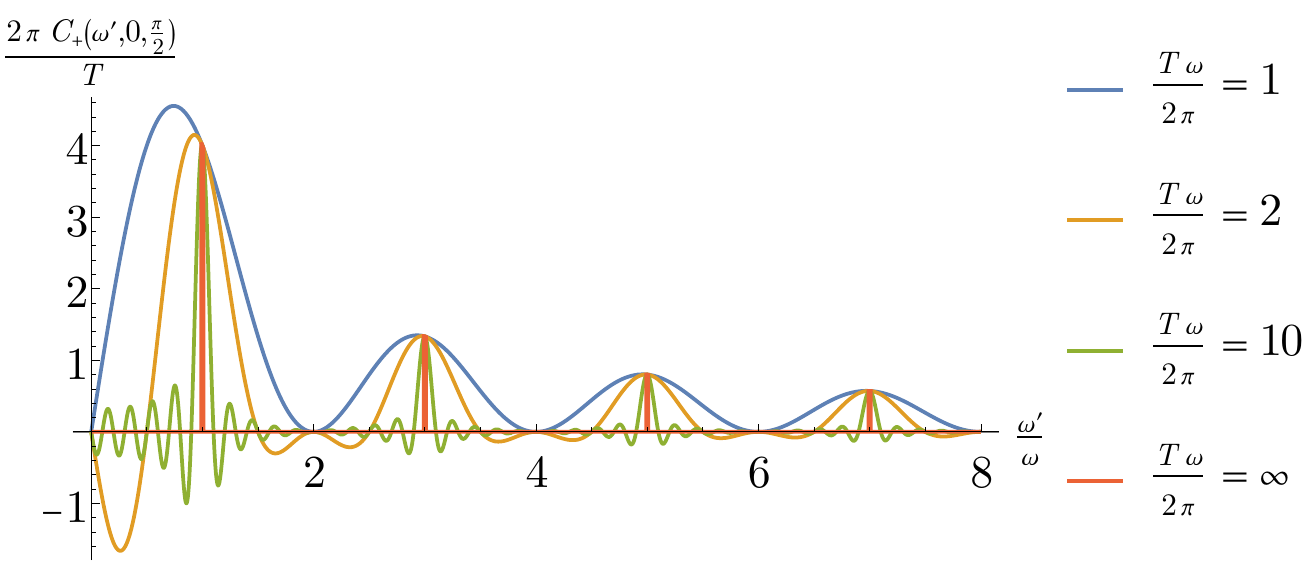}{}\\
	\includegraphics[width=0.7\linewidth]{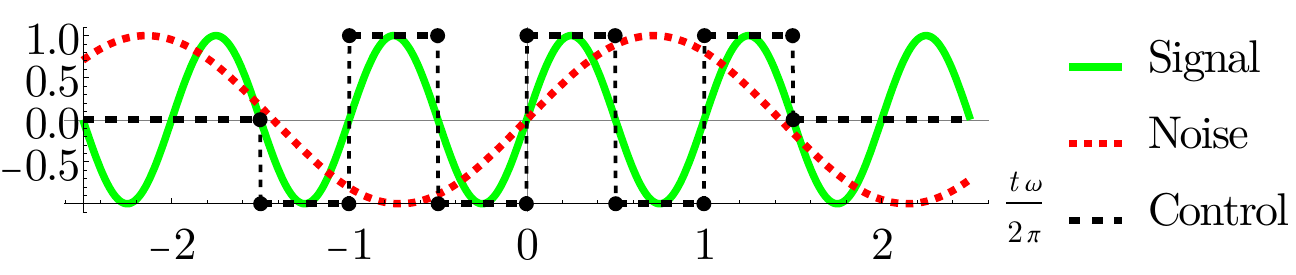}{}
    \caption{
 \textbf{Bottom}: Dynamical Decoupling is used to lock into a certain frequency. 
 Flipping the sensor to \enquote{align} with the signal makes the signal effectively positive. Therefore, the total phase accumulates over time. 
 Here the case of $\frac{T\omega}{2\pi}=3$ is shown.
 \textbf{Top}: The coupling to generic frequencies $\omega'$ for sensor locked into $\omega$. The qubit couples to $\omega$ and its odd harmonics $(2\mathbb{Z}+1)\omega$. The longer $T$, the sharper the coupling gets. For $\lim T\to\infty$, fields with generic frequencies are flipped each time at different phases, effectively accumulating to zero.
  }
	\label{fig:frequence_locking}
\end{figure*}

A typical technique in (quantum) control and metrology is lock-in amplification.
Let us recall this idea with the example of a a single qubit sensor located at position $\vec x=\vec{0}$.
To lock-in the qubit to an AC signal of frequency $\omega$, bit flip gates ($\pi/2$ pulses) are applied to the sensor qubit with frequency $2\omega$, as illustrated in Fig.~\ref{fig:frequence_locking}.
In our notation, this corresponds to the control sequence of the form 
\begin{align}
    C(t) & =
    \begin{cases}
\Pi_{\frac{\pi}{2}}(\omega t-\frac{\pi}{2}) & t\in \left[-\frac{T}{2},\frac{T}{2}\right]\\
0& t \notin \left[-\frac{T}{2},\frac{T}{2}\right]
\end{cases},
\end{align}
where  
\begin{equation}
    \Pi_{\gamma}(t) = \begin{cases}1 \quad \text{if} -\gamma < t + 2\pi \mathbb{Z} <\gamma \\ -1 \quad \text{else}\end{cases}
\end{equation}
is a rectangular wave.
To compute the resulting coupling factors $g_{\pm1}(\omega',\phi)$ to a wave of frequency $\omega'$, is in convenient to choose an anti-symmetric control (the $-\frac{\pi}{2}$ shift) and express it as  $C(t)=\sum_{l=1}^{\frac{T\omega}{2\pi}}\mathrm{rect}(\frac{\omega t}{\pi}+\frac{2l-1}{2})-\mathrm{rect}(\frac{\omega t}{\pi}-\frac{2l-1}{2})$. Then using  $\mathfrak{F}(\mathrm{rect}) = \mathrm{sinc}\left(\frac{\omega}{2}\right)$ and the linearity of the Fourier tansform one finds 
\begin{align}
    &g_{\pm 1}(\omega',\phi) = \mathrm{Re}\left( e^{\ii \phi} \mathfrak{F}\left(\pm C(t)\right)\right) =\\
    &= \mp\frac{2\pi }{\omega}\sin(\phi)\mathrm{sinc}\left(\frac{\pi}{2}\frac{\omega'}{\omega}\right)\sum_{l=1}^{\frac{T\omega}{2\pi}}(-1)^l\sin( \frac{(2l-1)\pi}{2} \frac{\omega'}{\omega}), \nonumber
\end{align}
depicted in Fig.~\ref{fig:frequence_locking}.
In the limit of $T\to\infty$
\begin{align}
    &g_{\pm 1}(\omega',\phi) =\\ 
     &=\mp\frac{2\pi }{\omega}\sin(\phi)\mathrm{sinc}\left(\frac{\pi}{2}\frac{\omega'}{\omega}\right)\sum_{l=1}^{\infty}(-1)^l\delta\left( \frac{\omega'}{\omega}-(2l-1) \right)
\end{align}
it only couples to $\omega$ and its harmonics. An asymptotic behavior that only couples to $\omega'$ and not to it's harmonics can be obtained with the control sequence given by $C(t)=\sin(\omega t)$ instead of the rectangular function, but it comes at the cost of a smaller prefactor (residual sensitivity to the AC signal).


\subsection{Decoherence-free subspaces for waves}\label{sec:dfs}
Another known technique to decouple from spatially correlated noises, is to prepare the state inside a decoherence-free subspaces (DFS)~\cite{lidar2014review}. It has been studied in the context of distributed sensing~\cite{sekatskiOptimalDistributedSensing2020, wolkNoisyDistributedSensing2020, hamannApproximateDecoherenceFree2022, hamannMultiparameter2024}, and can be used in combination with lock-in amplification. Here we discuss the decoherence-free subspaces for the specific case of waves.

A state $\ket{\psi}= \sum_{\bm z} c_{\bm z} \ket{\bm z}$ is protected from a wave generator $\hat{G}(\omega,\vec{k},\phi)$ if all the states $\ket{\bm z}$ on which it is supported have the same eigenvalue (coupling strength).
Formally, given $d$ such generators we define a so called affine DFS 
\begin{equation}\label{eq: DFS def}
	\mathrm{DFS}_{\bm \kappa} = \left\{\bm z \Big\vert g_{\bm z}(\omega_j, \vec{k}_j, \phi_j) = \kappa_j \quad \forall 1\leq j \leq d \right\}.
\end{equation}
as the set of all vectors $\bm z$ with fixed couplings $\bm \kappa = (\kappa_1,\dots,\kappa_d)$ to all the waves.

An important special case, that we simply call the (standard) decoherence-free subspace and write 
\begin{equation}
\mathrm{DFS}=\mathrm{DFS}_{\bm{\kappa}=0},
\end{equation} 
is where all the couplings are zero.
Notice that if $\bm z$ is inside the DFS, so is $-\bm z$, therefore the GHZ-type state $\ket{\phi_{\bm z}}=\frac{\ket{\bm{z}}+\ket{-\bm z}}{\sqrt{z}}$ are also inside the DFS. Such states are useful resources for noise-protected sensing. In particular, they are optimal in the single-parameter Fisher regime~\cite{sekatskiOptimalDistributedSensing2020}, and optimal up to a factor of at most $4$ in the multiparameter Fisher regime \cite{hamannMultiparameter2024}.


In the limit of strong noise~\cite{hamannMultiparameter2024}, i.e. where  the variance of $p(\bm{\beta})$ in Eq.~\eqref{eq: noise channel} is very large, the noise acts as a twirling map projecting the initial state onto all the affine DFS. Formally, we can write 
\begin{equation}\label{eq: noise strong}
    \mathcal{N}[\rho_0] = \sum_{\mathrm{DFS}_{\bm \kappa}}\, \Pi_{\bm\kappa} \, \rho_0 \, \Pi_{\bm\kappa} =\bigoplus_{\bm \kappa} p_{\bm \kappa} \rho_{\bm \kappa}
\end{equation}
with $\Pi_{\bm\kappa} =\sum_{\bm z\in \rm{DFS}_{\bm \kappa}} \ketbra{\bm z}$,  $p_{\bm \kappa} = \tr(P_{{\bm \kappa}} \rho_0)$, and $\rho_{\bm \kappa}=\frac{\Pi_{\bm\kappa} \rho_0 \Pi_{\bm\kappa}}{p_{\bm \kappa}}$. Since the signal commutes with the noise, the final state of the sensors $\rho(\theta) = \bigoplus_{\bm \kappa} p_{\bm \kappa} \rho_{\bm \kappa}(\theta)$ is then also a mixture of states over different affine DFS. Since sensing problems are, in general, convex, it is already clear that the optimal strategy here is to prepare the initial state inside some affine ${\rm DFS}_{\bm \kappa}$ (often the standard one).\\


In certain cases, it can happen that if a state is perfectly protected from noise, it is also decoupled from the signal. Then, it can be very useful to introduce the notion of approximate decoherence-free subspaces~\cite{hamannApproximateDecoherenceFree2022}; see the example Section~\ref{sec:example:approx_dfs}. In all cases, for our purpose here, the decoherence-free subspaces can be analyzed and understood by looking at the coupling strengths to waves $g_{\bm z}(\omega, \vec{k}, \phi)$.

\subsection{Connection to Metrology}
\label{sec: metro}
Quantum metrology is often studied in the so-called Fisher regime, where the parameter is estimated after many repetitions of the experiment. Here, the quantum Fisher information (matrix) QFI(M) is the central figure of merit. The QFI(M) lower bounds the mean square error of any unbiased estimator $\bm{\hat{\theta}}$ of the parameters $\bm{\theta}$ via the (quantum) Cramer-Rao bound
\begin{equation}
    \mathrm{MSE}(\bm{\hat{\theta}}) \geq \mathrm{COV}(\bm{\hat{\theta}})\geq \mathcal{F}^{-1}.
\end{equation}

For a single parameter, the optimal state in the noiseless case is an equal superposition of an eigenstate with minimal and maximal energy with respect to the generator. As already mentioned, in the limit of strong noise, it is known~\cite{sekatskiOptimalDistributedSensing2020} that the optimal strategy is to prepare the GHZ-like state $\ket{\mathrm{GHZ}_{\bm z}} = \frac{1}{\sqrt{2}}\left(\ket{\bm z}+\ket{-\bm z}\right)$ from the DFS (with best sensitivity), and perform local measuremnts.
In particular, with one DFS the QFI of the state $\ket{\psi} = \sum_{{\bm{z}} \in \mathrm{DFS}_{\bm{\kappa}}} c_{\bm{z}} \ket{\bm z}$ with respect to the signal generated by $\hat{G}(\omega,\vec{k},\phi)$ simply reads
\begin{equation}
    \mathcal{F}= 4 \sum_{\bm z} |c_{\bm z}|^2 g_{\bm z}(\omega,\vec{k},\phi)^2 -  4 \left(\sum_{\bm z} |c_{\bm z}|^2 g_{\bm z}(\omega,\vec{k},\phi)\right)^2,
\end{equation} 
see appendix~\ref{appendix:qfi} for derivation. Hence, maximizing the sensitivity amounts to find the minimal and maximal values of $g_{\bm z}(\omega,\vec{k},\phi)$ within the DFS.

It is important to emphasize that in this strategy the optimal measurement on the final state is always local, i.e. entanglement is only used for state preparation. Of course in general this is not the case, in particular we will see in sec.~\ref{sec:numeric} that for product initial states the final QFI can drasticaly overestimate the best Fisher information obtainable with product measurements.

In the multi-parameter case, it was shown that a strategy where different GHZ-type states from the DFS are prepared sequentially (and measured locally) is optimal up to a factor $4$ at most. Note that, here optimality is with respect to the partial-order of the Fisher information matrices describing the strategies~\cite{hamannMultiparameter2024}.  In the appendix~\ref{appendix:qfi} we give the form of them QFIM for multi-parameter estimation, the QFI for GHZ states in approximate DFS and the QFI for separable states in the noiseless case.
\\

Another well established framework is the Bayesian metrology. Here, one assumes a prior knowledge of the parameter $\bm{\theta}$ given by a probability distribution $p(\bm{\theta})$, and computes the post-measurement distribution using the Bayes rule 
$ p(\bm{\theta}|x) = \frac{p(x|\bm \theta) p(\bm \theta)}{p(x)},$ where $x$ is the observed measurement result. Given a score function ${\rm S}_x ={\rm  Score }(p(\bm \theta|x))$  quantifying how good the updated distribution is (e.g. the MSE of an estimator $\hat{\bm \theta}_x$), one is looking for the strategy maximizing the expected score $\bar{\rm S}= \sum_x p(x) {\rm S}_x$. As already pointed out, in the large noise limit the optimal strategy is always to prepare the initial state within a single ${\rm DFS}_{\rm \kappa}$. The problem of finding the best states in the Bayesian regime has been studied in \cite{vasilyevOptimalMultiparameterMetrology2024, PhysRevResearch.6.023305}. In contrast to the Fisher regime, here, two-dimensional DFS and GHZ-type states are generally not optimal, as the later can at most encode one bit of information~\cite{wolkNoisyDistributedSensing2020}.






In all cases, the optimal sensing strategy for strong noise is to prepare the state within one DFS. The tools for finding the optimal sensor within this DFS, i.e. the noiseless case, are well established, see e.g.~\cite{giovannettiAdvancesQuantumMetrology2011,vasilyevOptimalMultiparameterMetrology2024,goreckiOptimalProbesErrorcorrection2020a}, and can be applied at this stage.\\

It is worth mentioning that even in the case where the noise is weak, and does not completely destroy the coherence between different DFS,
it is beneficial to find states whose coupling to the signal waves is stronger than the coupling to the noise waves. In this case, approximate DFSs formalism can be used to increase this signal-to-noise ratio, see example~\ref{sec:example:approx_dfs}.\\

 To summarise, the metrology aspect of the problem is well understood. Hence, we now focus on the task of designing the DFS in the first place.\\

\section{Results}
\label{sec: results}

\subsection{Construction of a DFS for given sensor positions}\label{sec:DFS_control}
If the noise and the signal have different frequencies, they can be distinguished in a single location. Lock-in amplification is sufficient to cancel the noise, and quantum sensor networks are not required to decouple from the noise.
In Appendix~\ref{sec:wave_locking}, we discuss how this method can be applyed to a quantum network such that a high signal coupling is maintained. This section will focus on cases where the signal and noise have the same frequency and, hence, are locally indistinguishable. In this case, entanglement is required to protect the sensor network from the noise. 
Then, it is reasonable to assume that the interaction time $T$ is a multiple of the period.

\begin{result}
\textit{The DFS that protects $\ket{\bm{z}}$ and $\ket{\bm{-z}}$  from $d$ waves with equal frequency $\omega$ for $n$ given sensor positions is constructed following a three-step method:
\begin{enumerate}
	\item Determine the field matrix $F$ 
	\item Orthogonalize $F$
	\item Identify the control sequence
\end{enumerate}
For known (unknown) wave phases, the protocol is guaranteed to be successful if $n>d$ ($n>2d$). For point symmetric sensors, the phases can be considered to be known, and $n>d$ is sufficient.
}
\end{result}

\subsubsection{Determine the field matrix $F$}
If the noise phases are known, the field matrix is an $d+1$ times $n$ function-valued matrix with the elements
\begin{equation}
    F_{ji}=\cos(\vec{k}_j^\intercal\vec{x}_i - \omega t +\phi_j).
\end{equation}
The first $d$ rows correspond to the noise fields, and the last row corresponds to the signal field.

For unknown noise phases, notice that 
\begin{align}\label{equ:unknown_phase}
	\cos(\vec{k}_j^\intercal\vec{x}_i - \omega t +\phi_j) &= 
	 A_j\cos(\vec{k}_j^\intercal\vec{x}_i - \omega t) \\
	&+ B_j\sin(\vec{k}_j^\intercal\vec{x}_i - \omega t)\nonumber
\end{align}
can be decomposed in a symmetric and an anti-symmetric component, where the coefficients $A_j$ and $B_j$ are independent of the sensor $i$. Hence a plane wave with unknown phase $\phi$ is in the subspace spanned by $\cos(\vec{k}_j^\intercal\vec{x}_i - \omega t)$ and $\sin(\vec{k}_j^\intercal\vec{x}_i - \omega t)$.
We can now define $F$ similarly to the case with known phases, with just $2d+1$ rows containing cosine and sine components for each noise generator. This way a scenario with unknown phases reduces to a setup with twice as many noise generators.

\subsubsection{Orthogonalize $F$}
To construct a DFS that contains the states $\ket{\bm z}$ and $\ket{-\bm z}$,
use the Gram–Schmidt algorithms with rows of $F$ as vectors and with the scalar product 
\begin{equation}\label{equ:scalar-product}
	\langle x, y\rangle = \sum_{i=1}^n z_i \int_{-\frac{T}{2}}^{\frac{T}{2}}  x_i(t) y_i(t) dt
\end{equation} to orthogonalize $F$. Notice that this scalar product is chosen to reproduce the eigenvalues in Eq.~\eqref{eq: G int}, for certain input vectors.
After orthogonalization, the last row 
\begin{equation}
\bm{s}^\perp= \big(s_1^{\perp}(t),\dots, s_n^{\perp}(t)\big)
\end{equation}
is the orthogonal signal component, i.e. the component of the signal $(F_{(d+1)1},\dots,F_{(d+1)n})$ orthogonal to all the noises. 

Notice that $\bm{s}^\perp$ is only non-zero if the signal field restricted on the sensor locations is linearly independent from the noise. For this independence, it is necessary that the number of sensor locations is $n>d$ for the case of known phases and $n>2d$ for unknown phases (recall that $d$ is the number of noise sources).

\subsubsection{Identify the control sequence}
\label{eq: time control}
\paragraph{For fast control,} i.e., when $C^{fast}_i(t)$ can effectively be any integrable function between $-1$ and $1$, we can directly use a normalized orthogonal signal component as a control sequence
\begin{equation}
    C^{fast}_{i}(t) = \frac{s^\perp_i(t)}{\max_{i'}\sup_{-\frac{T}{2}\leq t\leq \frac{T}{2}} |s^{\perp}_{i'}(t)|}.
\end{equation}
The normalization comes from the fact that the absolute value of the control can be at most one. Additionally, we want to saturate this value, as this increases the residual signal and therefore the sensitivity.

The states $\ket{\bm z}$ and $\ket{-\bm z}$ will be contained in the DFS (see Eq.~\eqref{eq: DFS def}), as the vector of the control sequence is orthogonal to the noise fields $F_j$
\begin{equation}
    0 = \langle \bm{C}^{fast} , F_j \rangle = g_{\bm{z}}(\omega_j, \vec{k}_j, \phi_j) \quad \forall j, 
\end{equation}
and the scalar product in Eq.~\eqref{equ:scalar-product} was precisely chosen such that $\langle \bm{C}^{fast} , F_j \rangle$ gives the eigenvalue of the noise generator $\hat G_j$ in equation~\eqref{eq: G int}.

Notice that each element  of the orthogonal signal component $\bm s^\perp$ is a linear combination of the local fields, so it can be expanded as
\begin{equation}
    s^\perp_i =\sum_{j=1}^{d+1} A_{ij} \cos(-\omega t +\varphi_{ij})= A'_i \cos(-\omega t +\varphi_i)
\end{equation}
with trigonometric functions with equal frequency and different phases. Due to the harmonic addition theorem \cite{weisstein}, the control sequence at each sensor is a phase-shifted harmonic oscillation. 
Therefore the control of each sensor is given by  
\begin{equation}
    C^{fast}_i = A_i \cos(wt + \varphi_i) \propto s^\perp_i(t),
\end{equation}
with $A_i= \frac{ A_i'  }{\max_i \vert A'_i\vert}$ being the normalized amplitudes. 

\paragraph{For slow control,}
we will use the periodic function
\begin{equation}\label{eq: Pi cont}
    \Pi_{\gamma}(t) = \begin{cases}1 \quad \text{if} -\gamma < t + 2\pi \mathbb{Z} <\gamma \\ -1 \quad \text{else}\end{cases}
\end{equation}
which only requires two flips per period. Now the fast control sequence is replaced with 
\begin{equation}
    C^{slow}_i = \Pi_{\sin^{-1}{A_i}}(\omega t + \varphi_i).
\end{equation}
The slow control strategy will be in the decoherence-free subspace as the overlap of the fast and the slow control
\begin{equation}
	\int_{-\frac{T}{2}}^\frac{T}{2} C^{slow}_i(t) F_{ji}(t) dt =\frac{4}{\pi} \int_{-\frac{T}{2}}^\frac{T}{2}  C^{fast}_i(t) F_{ji}(t) dt 
\end{equation}
with the local oscillations $F_{ji} = \cos(\vec{k}_j^\intercal \vec{x}_i -\omega t +\phi_j)$ just differ by a constant (most importantly $i$ independent) value $\frac{4}{\pi}$. The detailed computation can be found in the appendix~\ref{appendix:slow_control}.
Hence, with the slow control the risidual coupling to the signal is  $\frac{4}{\pi}$ times better than for the fast control. Notice however that with the slow control the sensor network is not decoupled from noise fields with harmonic frequencies.

\subsubsection{Point symmetric sensors}
\label{sec:point-sym}

We say that a sensor network with the initial state $\vert \psi\rangle$ is point symmetric with respect to the point $\vec{p}$ if for each sensor position $\vec{x}_i$ there exists another sensor at position $\vec{x}_{i'}$ such that 
\begin{equation}
\vec{x}_i-\vec{p} = -\vec{x}_{i'}+\vec{p} \quad \text{ and } \quad z_i = z_{i'}
\end{equation}
for all $\bm z$ such that $\braket{\bm z}{\psi}\neq 0$.

For a sensor network symmetric around a point $\vec{p}$ it is natural to set this point as the origin of space coordinates. Then, one can easily see that the sensor is only sensitive to the symmetric component of the waves (with $f(\vec x,t )= f(-\vec x,t )$). This guarantees protection from all the sine components of the noise fields in Eq.~\eqref{equ:unknown_phase}. 
Therefore, we can construct the DFS by only considering the symmetric (cosine) noise components and using the symmetric part of the signal as the last row of $F$. The resulting control sequence $C_i^\perp(t)$ will be point symmetric as a linear combination of point symmetric functions. Thus, a point symmetric sensor network only requires $n>d$ sensors to be protected from $d$ noise sources, even if the phases are unknown.

\subsection{Construction of a DFS by sensor placement}\label{sec:no-control}
\begin{figure}
	\centering
	\includegraphics[width=0.5\linewidth]{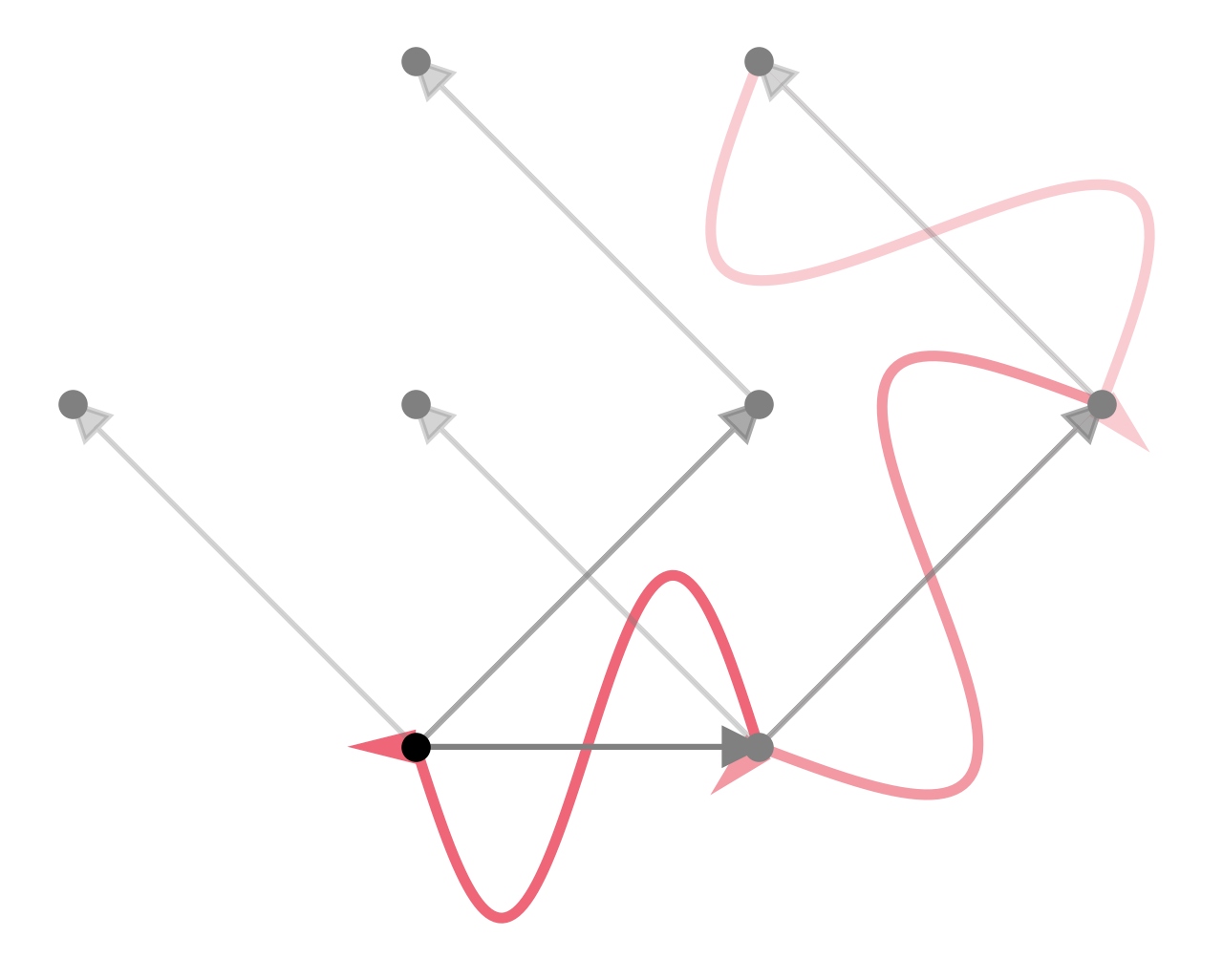}

	\caption{Constructing a sensor protected from 3 waves is done in 3 steps. First, the initial sensor (black dot) is shifted a wavelength in the travel direction of the first wave. Then, these two new sensor locations are shifted a wavelength along the second wave. Last, we shift all four sensors a wavelength in the direction of the third wave. Thereby, we get $2^3=8$ sensor locations, such that the effect of each wave cancels. }
	\label{fig:sensor_placement}
\end{figure}

We now discuss a method that aims to cancel $d$ noise waves exactly, given the freedom to arrange the sensors in space, as illustrated in Fig.~\ref{fig:sensor_placement}. We start with one sensor at an arbitrary position $\Vec{x}_1$. Then, for each new noise wave $f_j(\vec{x},t)$ added, one doubles the number of sensors by placing new ones at a wavelength in the traveling direction. Overall, we obtain a network with $2^d$ sensor locations, such that for every noise wave, we get pairs of sensor locations with opposite field directions. Hence, for both states $\ket{1}^{\otimes n}$ and $\ket{-1}^{\otimes n}$ the contributions of the noise wave will be canceled, and they define a two-dimensional DFS. 
To retain sensitive to the signal the state $\frac{\ket{1}^{\otimes n} + \ket{-1}^{\otimes n}}{\sqrt{2}}$ is prepared and all sensors are flipped with twice the signal frequency and no relative phase (lock-in amplification). Notice that the method will work for all noise frequencies and is independent of the noise phases. The signal will be lost if indistinguishable from the noises on the sensor positions.

\subsection{Comparison with non-entangled strategies}\label{sec:product}

We have now discussed a few techniques to devise noise-resilient sensing strategies. These strategies require to prepare the sensors in entangled states (recall that the measurements can be local), which can be challenging in practice. It is thus natural to ask if the advantages offered by these strategies are worth this effort. In other words compare the entangled strategies which those using a product state of the sensors (and local measurements, see below). In general, this question is very challenging, since showing a clear cut quantum advantage requires to optimize over all possible product strategies. 
Nevertheless, in some cases general results have been derived. The conclusions depend drastically on the considered estimation task, as we now discuss.

First, let us recall that in the noiseless case when estimating a single parameter, like a global field, classical sensor networks are fundamentally limited by the so-called standard quantum limit,  while quantum sensors are limited by the Heisenberg limit~\cite{giovannetti2011advances}. This suggest that a quadratic improvement in precision is possible in the noiseless case, and this advantage survives when estimating a single signal within a DFS.


In contrast, in the special case where there is no noise and the goal is to estimate the values of the field at all sensor locations, it has been shown that entanglement is not bringing any advantage~\cite{proctorMultiparameterEstimationNetworked2018}. A similar heuristic conclusion has been reached in \cite{PhysRevX.14.011033}, where the task was to estimate the pairwise difference of field values at all pairs of sensor locations in the presence of a constant filed noise. In fact, one can understand this task as estimating ``almost'' the entire vector of field values, with the exception of the sum of these values, which corresponds to the noise generator. Both results, thus, apply to the setting of low-rank (or no) noise and an ``almost full'' estimation of the field values vector, which seems to offer no room for significant quantum advantage.

In contrast, \cite{sekatskiOptimalDistributedSensing2020} considered the somehow opposite setting,  where $n$ sensors are used to estimate a single parameter (constant field) against $n-1$ correlated noise sources (standing waves with different wavelength). Here, it was shown that the DFS-based entangled strategies offer an exponential advantage (in $n$) over all product-state strategies.

 We see from these examples that depending on the estimation problem, the impact of entanglement ranges from no advantage at all to an exponential one. In the next subsection, we give some elements of response to where this advantage could be found for a generic estimation task. The following section~\ref{sec:product_general} makes some general remarks on product strategies, works out criteria for proven exponential improvements and summarizes known scenarios where these criteria are fulfilled. In section~\ref{sec:numeric} we compare the entangled strategy to particular product state strategies which seem good (without claim of their optimality).

\subsubsection{General arguments on product state strategies}\label{sec:product_general}

To compare the entangled and product-state strategies, we consider the worst-case setting, where each noise signal can be treated as a random variable with large enough fluctuations. In this case, as discussed in section~\ref{sec:dfs}, the effect of noise can be modelled by applying the dephasing map $\mathcal{N}$ in Eq.~\eqref{eq: noise strong}. This map commutes with the unitary encoding of the signal, and can thus be seen as random projection of the initial onto all the DFS 

\begin{equation}
\tilde \rho_0= \mathcal{N}[\Psi_0] = \bigoplus_{{\rm DFS}_{\bm \kappa}} \Pi_{\bm \kappa} \Psi_0 \Pi_{\bm \kappa}, 
\end{equation}
where $\Pi_{\bm  \kappa}$ is a projector onto the corresponding DFS, and $p_{\bm \kappa} =\tr \left( \Psi_0 \Pi_{\bm \kappa}\right)$ is the probability that the pure state $\Psi_0$ is projected there. Moreover, for any signal generator $G$, the QFI of the final state also decomposes accordingly 
\begin{align} \label{equ:FischerdecDFS}
    \mathcal{F}_G &= 4 \sum_{\bm \kappa} p_{\bm \kappa}  {\rm Var}_{\Psi_k} (G_{\bm \kappa}) 
\end{align}
where $\Psi_{\bm \kappa} = \frac{1}{p_{\bm \kappa}} \Pi_{\bm \kappa} \Psi_0 \Pi_{\bm \kappa}$, $G_{\bm \kappa} = \Pi_{\bm \kappa} G \Pi_{\bm \kappa}$, which is explicitly shown in the appendix \ref{Proof:equ:FischerdecDFS}. The general idea, is that in sensing tasks which exhibit quantum advantage there is only a limited number of DFS that are useful for sensing. Hence one has to design states which project onto these DFS with hight probability. As will show below, for any set of DFS (defined by the control strategy) and a product initial state $\rho_0$ one is very restricted in the control of the distribution $p_{\bm \kappa}$ of the subspaces on which the state is projected. \\

Clearly to make such a claim, the sensing task must be nontrivial enough. For instance, if one can protect from noise by simply placing one (or all sensors) at some position, the construction of the DFS becomes obsolete.
To avoid such degenerate cases and quantify the intrinsic complexity of the ``noise-protection'' task, we define  $m$ as the minimal number of sensors such that a DFS exists. Section~\ref{sec:DFS_control} upper-bounds $m\leq d+1$ by the number of different noise generators $d$, in the case of bit-flip control. Furthermore, it directly implies the following relation on the states $\ket{\bm z}$ belonging to the same DFS. 

\begin{lemma}
    {\it Consider a sensor array with $n$ sensors subject to noise. Let $m \leq n$ be the size of the smallest subset of sensors, for which a {\rm DFS} exists. Let $\bm z=(z_1, \dots,z_n),\bm z'=(z_1',\dots,z_n')\in {\rm DFS}$ with $z_i,z_i'\in\{-1,1\}$ be two vector from some DFS (not necessarily the ``minimal one''), then
\begin{equation}
    D_H(\bm z,\bm z'):=\sum_{i=1}^n\frac{1}{2}| z_i - z_i'| \geq m,
\end{equation}
where $D_H$ is the Hamming distance between the strings.}
\end{lemma}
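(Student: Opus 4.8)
The plan is to exploit the fact that the coupling strength $g_{\bm z}(\omega_j,\vec k_j,\phi_j)$ is a \emph{linear} function of the sign string $\bm z$. Pulling the sum out of the integral in Eq.~\eqref{eq: G int} gives
\begin{equation}
g_{\bm z}(\omega_j,\vec k_j,\phi_j)=\sum_{i=1}^n a_{ji}\, z_i,\qquad a_{ji}=\int_{-T/2}^{T/2} f_j(\vec x_i,t)\,C_i(t)\,\dd t,
\end{equation}
so that the whole tuple of couplings is encoded in a single $d\times n$ matrix $A=(a_{ji})$ acting as $\bm z\mapsto A\bm z$. The defining condition $\bm z\in\mathrm{DFS}_{\bm\kappa}$ then becomes the affine-linear constraint $A\bm z=\bm\kappa$.

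First I would use this linear structure to recast the hypothesis as a kernel statement. If $\bm z,\bm z'$ lie in the same $\mathrm{DFS}_{\bm\kappa}$, then $A\bm z=A\bm z'=\bm\kappa$, so their halved difference
\begin{equation}
\bm w=\tfrac12(\bm z-\bm z')
\end{equation}
satisfies $A\bm w=0$. Since $z_i,z_i'\in\{-1,1\}$, the entries of $\bm w$ lie in $\{-1,0,1\}$, and $w_i=0$ holds exactly when $z_i=z_i'$. Hence the support $S=\{i:w_i\neq0\}$ is precisely the set of positions where $\bm z$ and $\bm z'$ disagree, with $|S|=D_H(\bm z,\bm z')$ and $w_i=\pm1$ for every $i\in S$.

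The next step is the restriction argument, which I expect to be the conceptual crux. Because $\bm w$ vanishes off $S$, the identity $A\bm w=0$ involves only the columns of $A$ indexed by $S$; equivalently, keeping the same control sequences $C_i$ but discarding the sensors outside $S$, the string $\bm w|_S\in\{-1,1\}^{|S|}$ has vanishing coupling to every noise generator. Thus $\bm w|_S$ and $-\bm w|_S$ both belong to the standard (zero-coupling) DFS of the reduced array on $S$, which means \emph{a DFS exists on a subset of $|S|$ sensors}. By the definition of $m$ as the minimal size of such a subset, this forces $|S|\geq m$, i.e. $D_H(\bm z,\bm z')\geq m$.

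The rest is bookkeeping rather than substance. I would remark that the argument is insensitive to the value of $\bm\kappa$, so it applies to any DFS and not merely the standard one, and that in the unknown-phase case the only change is that each physical wave contributes a sine and a cosine row to $A$ (cf.~Eq.~\eqref{equ:unknown_phase}), enlarging the number of rows while preserving the linear-in-$\bm z$ structure on which the whole argument rests. The single point worth stating explicitly is that a DFS on the reduced array is genuinely witnessed by one nonzero $\pm1$ string together with its negation, matching the notion of ``a DFS exists'' used to define $m$.
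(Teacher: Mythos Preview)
Your proof is correct and is essentially the same argument as the paper's: both restrict to the set $S$ of positions where $\bm z$ and $\bm z'$ disagree and exhibit a nontrivial DFS on the reduced sensor array, contradicting the minimality of $m$. The only difference is cosmetic---you phrase the restriction via the linear map $\bm z\mapsto A\bm z$ and its kernel (avoiding the WLOG relabelling), whereas the paper works with projectors and a partial trace over the agreeing qubits; the underlying idea is identical.
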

\begin{proof}
    The proof is done by contradiction.
Assume $\bm z,\bm z' \in {\rm DFS}$ and $D_H(\bm z,\bm z')=m'<m$.
Without loss of generality we assume $\bm{z}=\bm{1}_{m'}\oplus \bm{1}_{n-m'}$ and $\bm{z'} = \bm{0}_{m'} \oplus \bm{1}_{n-m'}$, where $\bm{0}_x$ ($\bm{1}_x$) is the constant $0$ ($1$) tuple of length $x$.
Then we find
\begin{equation}
    (\ket{\bm{1}_{m'}}+\ket{\bm{0}_{m'}})/\sqrt{2} =  \tr_{n-m'}\left((\ket{\bm{z}}+\ket{\bm{z'}})/\sqrt{2}\right),
\end{equation}
where $\tr_{n-m'}$ is the partial trace over $n-m'$ qubits with identical eigenstate $z_i=z_i'$. Let $\Pi=\sum_{\bm z\in{\rm DFS}}\ketbra{\bm z} $ be the projector on to DFS
\begin{equation}
    (\ket{\bm{1}_{m'}}+\ket{\bm{0}_{m'}})/\sqrt{2} = \tr_{n-m'}\left(\Pi(\ket{\bm{z}}+\ket{\bm{z'}})/\sqrt{2}\right)
\end{equation}
and $\Tilde{\Pi} = (\id \otimes \bra{\bm{1}_{n-m'}})\Pi (\id \otimes \ket{\bm{1}_{n-m'}})=\sum_{\Tilde{\bm{z}}\oplus\bm{1}_{n-m'}\in{\rm DFS}}\ketbra{ \tilde{\bm z}} $, be effective reduction onto the first $m'$ sensors.
\begin{equation}
    (\ket{\bm{1}_{m'}}+\ket{\bm{0}_{m'}})/\sqrt{2} = \Tilde{\Pi}(\ket{\bm{1}_{m'}}+\ket{\bm{0}_{m'}})/\sqrt{2}
\end{equation}
Notice that $\Pi$ and $\Tilde{\Pi}$ are diagonal in the $\sigma_z$ basis. Hence, there exists a two dimensional $ \widetilde{\mathrm{DFS}}$ spanned by $\ket{\bm{1}_{m'}}$ and $\ket{\bm{0}_{m'}}$ on $m'$ qubits. This contradicts the assumption that $m$ is minimal.
\end{proof}

The projectors $\Pi_{\bm \kappa} =\sum_{\bm z\in{\rm DFS}_\kappa} \ketbra{\bm z}$ onto the different DFS are diagonal in the computational basis. Hence, all product computational states are entirely within a single DFS, and one can, in principle, choose a state such that $p_\kappa=1$ for any ${\rm DFS}_{\bm \kappa}$. These states are, however, also eigenstates of the signal generators and are completely useless for sensing. Thus to contribute to sensing, a product state $\ket{\Psi}$ must project onto several computational states within the same DFS.  This explains the motivation of the following result.
\begin{result} 
    \label{result:single_dfs_exponantially_small} {\it Consider a product state $\ket{\Psi}$, and a {\rm DFS} containing $k$ vectors $\bm z_1,\dots,\bm z_k$ at Hamming distance at least $m$. We arrange the vectors in order of decreasing probability, i.e. such that $p_i=|\braket{\bm z_i}{\Psi}|^2\geq p_{i+1}=|\braket{\bm z_{i+1}}{\Psi}|^2$. Then the following bound holds}
    \begin{equation} \label{equ: Boundprodprobdfs}
        \sum_{i=2}^k p_i \leq \frac{1}{\sum_{\ell=0}^{\lfloor \frac{m-1}{2}\rfloor} \binom{m}{\ell}}\leq \left(\frac{ m}{\lfloor \frac{m-1}{2}\rfloor}\right)^{-m}  \leq 2^{-m}\\
    \end{equation}
    It follows that the contribution from this DFS to the QFI  $\mathcal{F}_G$ in Eq.~\eqref{equ:FischerdecDFS} is bounded by
    \begin{equation}
    4\, p_{\bm \kappa} {\rm Var}_{\Psi_k} (G_{\bm \kappa})  \leq \| G_{\bm \kappa}  \|_{\rm Spec}^2\, 2^{-m+1},
    \end{equation}
    where $G_{\bm k} = \Pi_{\bm \kappa} G \Pi_{\bm \kappa}$ is the restriction of the signal generator on the DFS (with $\Pi_{\bm \kappa} = \sum_{i=1}^k  \ketbra{\bm z_i}$), and $\|G_{\bm k} \|_{\rm Spec}$ is the difference between its maximal and minimal eigenvalues (a pseudo-norm on the vector space of Hermitian operators).
\end{result}
The proof of the  inequality~\eqref{equ: Boundprodprobdfs} is given in the appendix~\ref{section:single_dfs_exponantially_small}. Intuitively one can think of this result as follows: either the state $\ket{\Psi}$ is (almost) product in the computation basis and useless for sensing, or the probability that it projects onto a single DFS is exponentially small in $m$. In both cases however the contribution to the QFI is exponentially small, since one can show that
\begin{align} \label{equ:fisher_info_term_ineq}
 4 \, p_{\bm \kappa} {\rm Var}_{\Psi_{\bm \kappa}} (G_\kappa) &\leq 2 \| G_{\bm \kappa}  \|_{\rm Spec}^2\left(\sum_{i=2}^k p_i\right)\\
&\leq \| G_{\bm \kappa}  \|_{\rm Spec}^2\, 2^{-m+1}.
\end{align}

These inequalities are easy to understand, once realized that $p_1\leq \frac{p_{\bm{\kappa}}}{2}$, to be maximally useful for sensing.  Their proof is given in the appendix \ref{Proof:equ:fisher_info_term_ineq}.
Note also the following inequality $\|G_{\bm \kappa}\|_{\rm Spec}\leq \|G\|_{\rm Spec}$.

Now we have the necessary tools to discuss the sufficient conditions for an exponential quantum advantage. We have seen that by entangling the sensors one can prepare a GHZ state inside the optimal DFS, yielding 
\begin{equation}
    \mathcal{F}_G^{\rm ent} = \max_{\bm \kappa} \|G_{\bm \kappa}\|_{\rm Spec}^2.
\end{equation}
In contrast for separable states  Eq.~\eqref{equ:FischerdecDFS} together with the bound~\eqref{equ:fisher_info_term_ineq} imply
\begin{equation}
\mathcal{F}_G^{\rm sep} \leq 2^{1-m} \sum_{\bm \kappa} \| G_{\bm \kappa}  \|_{\rm Spec}^2.
\end{equation}
Hence, an exponential advantage in $m$ always occurs if there is at most a polynomial number of decoherence-free subspaces with $\| G_{\bm \kappa}  \|_{\rm Spec}\neq 0$ (in particular, all trivial one-dimensional decoherence-free subspaces obviously give $\| G_{\bm \kappa}  \|_{\rm Spec}=0$). In full generality, we found that it is not straightforward to further simplify this statement in terms of the properties of the signal $G=G_0$ and noise $G_{j}$  generator.
However, one such case where the exponential advantage occurs is when the sensor network is of minimal size $n=m$. Here, there is only one two-dimensional DFS contributing to the QFI, while all the other ones are trivial, as shown in appendix~\ref{Proof:m=n_DFS}. 
One example of this exponential advantage was presented in~\cite{sekatskiOptimalDistributedSensing2020} another is experimentally demonstrated in~\cite{BateExperimental2024}. A similar advantage can be observed in the setting when the parameter is encoded via a correlated dephasing channel~\cite{wangExponentialEntanglementAdvantage2024}, which can also be analyzed with the bounds derived above.



\subsubsection{Numerical comparison  and entanglement of measurements}\label{sec:numeric}

In the previous section, we showed that an exponential advantage occurs when the number nontrivial DFS scales polynomially with the number of sensors. This is, however, not always the case, and moreover, it is not necessary for a quantum advantage. In particular, an aspect that we have so far ignored when analyzing the QFI of the final state is the entanglement of the measurements required to read it out. For the entangled GHZ strategies, we have seen that no entangled measurements are required. In contrast, product states are projected onto several DFS, and the optimal measurement needed to resolve different DFS (e.g. the POVM $\{\Pi_{\bm \kappa}\}$) and to saturate the QFI of a product state is in general entangled. 

We here explore these considerations with a specific example,  for which we compare the optimal entangled strategy with the one using the product state $\ket{+}^{\otimes n}$. 
The sensor configuration we consider is depicted in Fig.~\ref{fig:compare}(top). There are $n$ qubit sensors equally distributed on a circle\footnote{The orientation is such that the base of the resulting polygon is horizontal (parallel to the $x$-Axis).} with radius $10\pi$. The signal and noise fields 
\begin{equation}
f_j(\vec{x},t)=\beta_j \cos\left({\vec{k}_j^\intercal\Vec{x} -\omega t }\right) 
\end{equation}
are 2D plane-waves with the same frequency $\omega=1$ and different wave-vectors $\vec{k}_j=(\cos{\alpha_j},\sin{\alpha_j})$ parametrized by the angles $\alpha_j$. There are 19 noise field with the angles $\alpha_j$ equally distributed in the interval $\alpha_j\in[\frac{3\pi}{4}+\delta, \frac{7\pi}{4}+\delta]$, where $\delta=\frac{\pi}{20}$ is a small offset angle\footnote{Otherwise it can happen for some sensors size $n$ that the signal is linearly dependent with the noise waves on the sensor locations.}, and a single signal field with the angle $\alpha_0 = \frac{\pi}{4}$.

To find a good entangled strategy, a DFS is constructed by the method in Section~\ref{sec:DFS_control} (starting from the state $\bm z = (1,...,1)$). To optimize the QFI for the product state $\ket{+}^{\otimes n}$, one, in principle, needs to search through all possible control strategies, which is unfeasible. A simple naive solution uses the two-dimensional DFS derived for the entangled strategy. This is, however, unfair since the method is tailored to find a single DFS, while more DFSs might be realizable with other controls (which is favorable for product strategies, as we have seen in the last section). For a fair comparison, we thus slightly adjusted the method before applying it to product strategies. Concretely, we partition all the $n$ sensors into independent groups of size $m$ (minimal required to protect from noise). Then, for each of the groups, a DFS is constructed individually with the method of Section~\ref{sec:DFS_control}.

We investigate the sensing task of Fig.~\ref{fig:compare}(top) with $n$ sensors and a varying number $d$ of different noise sources, choosing $d$ such that the rank for the noise generators is $m-1$. The $d$ noise sources are a subset of the previously defined 19 directions.
Specifically for the first noise source $\alpha_1$, we choose the angle closest to $\alpha= \frac{\pi}{4}+\pi+\delta$, while the remaining $d-1$ noise are chosen with the angles closest to the signal $\alpha_0$. This construction ensures that the setup with $d+1$ contains one additional and all previous $d$ noise sources.  

The results of the comparison are reported in Fig.~\ref{fig:compare}(bottom). Here for varying number of noise sources parametrized by $m$ we change the number of sensors $n$ with three different scaling: minimal ($n=2 \ceil{\frac{m}{2}}$), linear ($n=4 \ceil{\frac{m}{2}}$) and quadratic ($n=4 \ceil{\frac{m}{2}}^2$).  Note that the scale of the QFI is logarithmic, and the QFI is divided by the number of sensors $n$. We see that in this example, the entangled strategy performs exponentially better than the classical approach in terms of the number of linearly independent noise generators (full lines vs dotted lines). In addition to computing the QFI of the product state, we also compute the classical Fisher information obtained with product states and product measurement by optimizing it with respect to all projective qubit measurements. Remarkably, the gap is even larger between the QFI of the product state and the classical Fisher information for the same states but obtained with product measurement (dotted lines vs points). This highlights that the QFI analysis of the product strategies is overly optimistic, as extracting it requires performing entangling measurements (which is not the case for the GHZ strategies).

\begin{figure}
    \centering
    \includegraphics[width=0.7\linewidth]{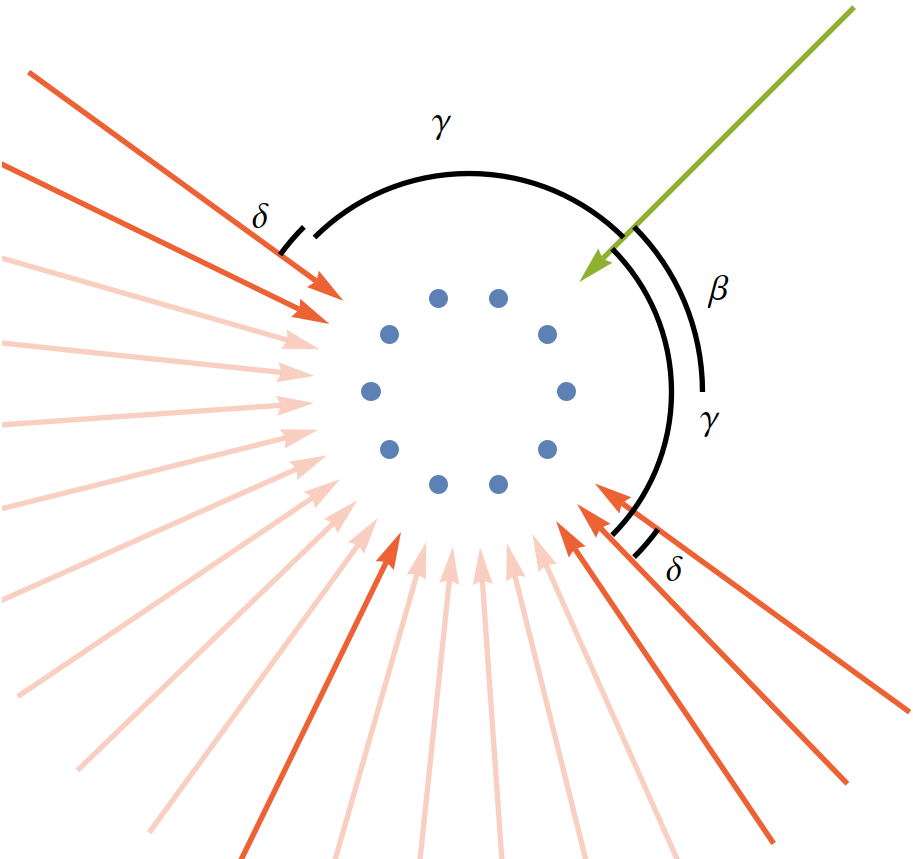}
    \includegraphics[width=\linewidth]{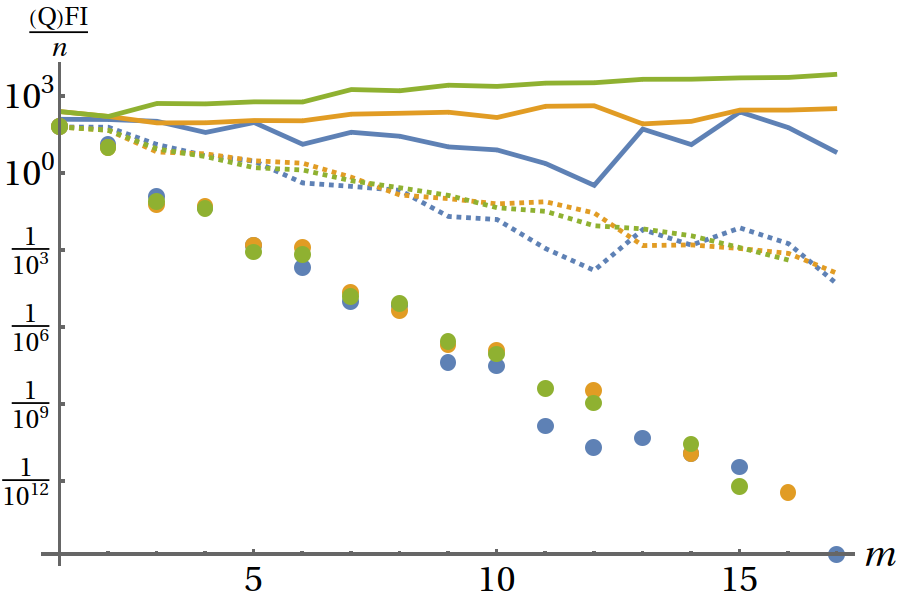}
        \caption{ \textbf{Top:} The setup for $n=10$ and $d=6$ noise waves. Sensors are placed on circle, with angle $2\pi/n$. \textbf{Bottom:} The Quantum Fisher information (QFI) and the Fisher information maximized over separable measurements (max F.I.) for an entangled and a product state for a different rank $m$ of the noise generators on the sensor locations. Each quantity was computed for constant, linear, and quadratically scaling sensor size and normalized to the number of sensor locations. The QFI equals the max F.I. for entangled states, and only the QFI is shown. The entangled state performs exponentially better than the separable state with entangling measurements (QFI) and separable measurements (max F.I.).
    }
    \label{fig:compare}
\end{figure}

\section{Examples}
\label{ref: examples}
In this section, we will present three examples to illustrate our methods. The first example considers a generic 6-sensor quantum network without additional symmetries. The second example considers a circular sensor with a point symmetry. There, we will show that the method can be used to construct an approximate DFS. The last example demonstrates the DFS construction through sensor placement.

\subsection{General DFS}\label{sec:example:generalDFS}

\begin{figure*}
	\begin{minipage}{0.45\linewidth}
		\centering
		\includegraphics[width=\linewidth]{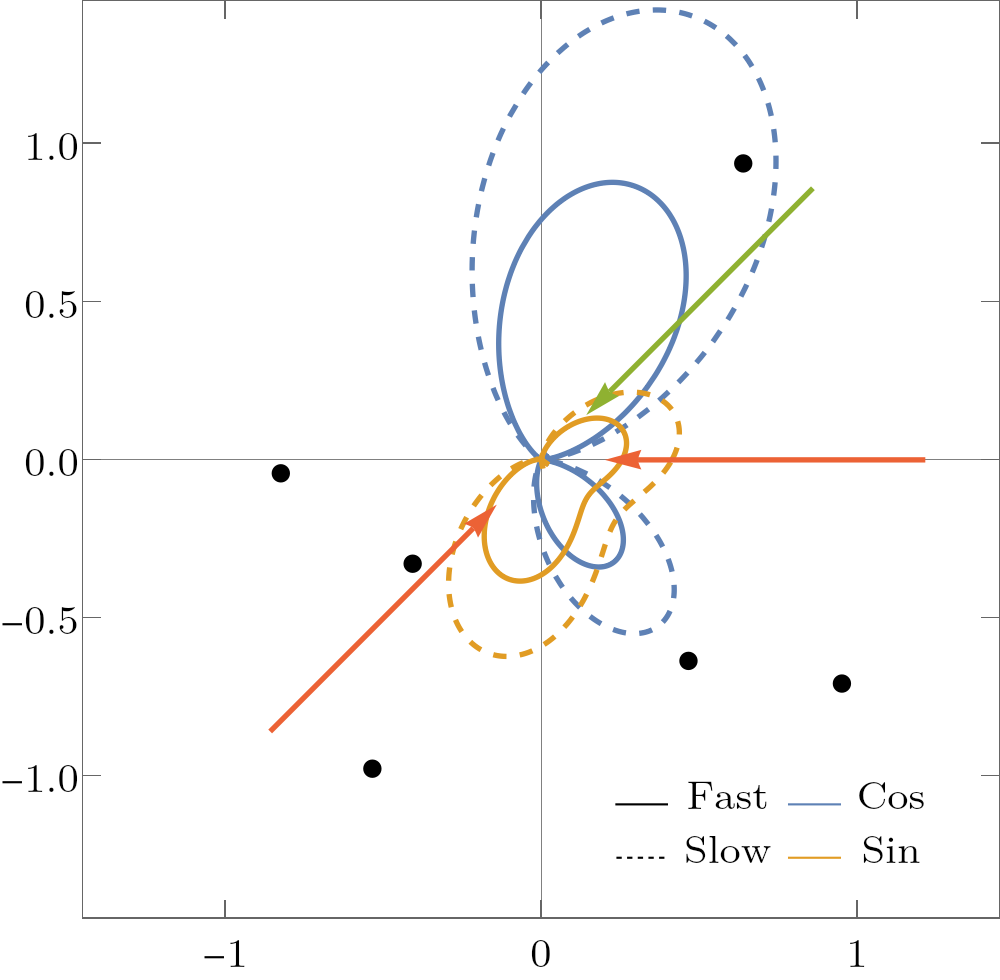}\\
		\includegraphics[width=\linewidth]{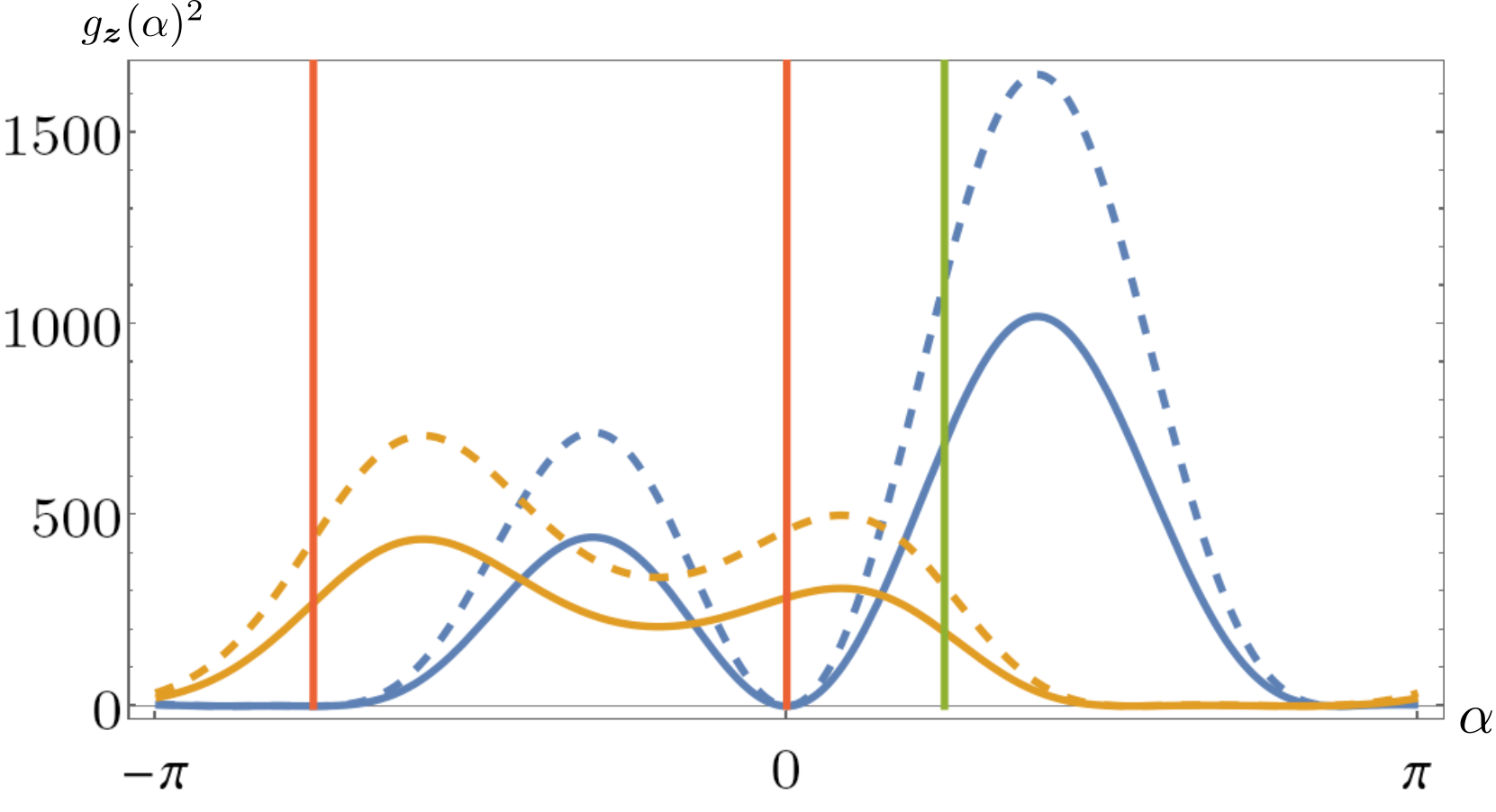}
		\caption{\textbf{Top}: Sensor positions and wave vectors for the general DFS example with known phases~\ref{sec:example:generalDFS}. 
        \textbf{Bottom}: The squared coupling $g_{\bm z}(\alpha)=g_{\bm z}(1,\sqrt{2}(\cos{\alpha},\sin{\alpha}),0)$ to cosine and sine waves as a function of the direction $\alpha$ for the fast and the slow control sequence for known phases. The green (orange) line marks the signal (noise) direction. The corresponding control sequences are shown in fig.~\ref{fig:example_general_phases_knwon_control}.
		}
		\label{fig:example_general_phases_knwon}
	\end{minipage}
	\hspace{1cm}
	\begin{minipage}{0.45\linewidth}
		\centering
		\includegraphics[width=\linewidth]{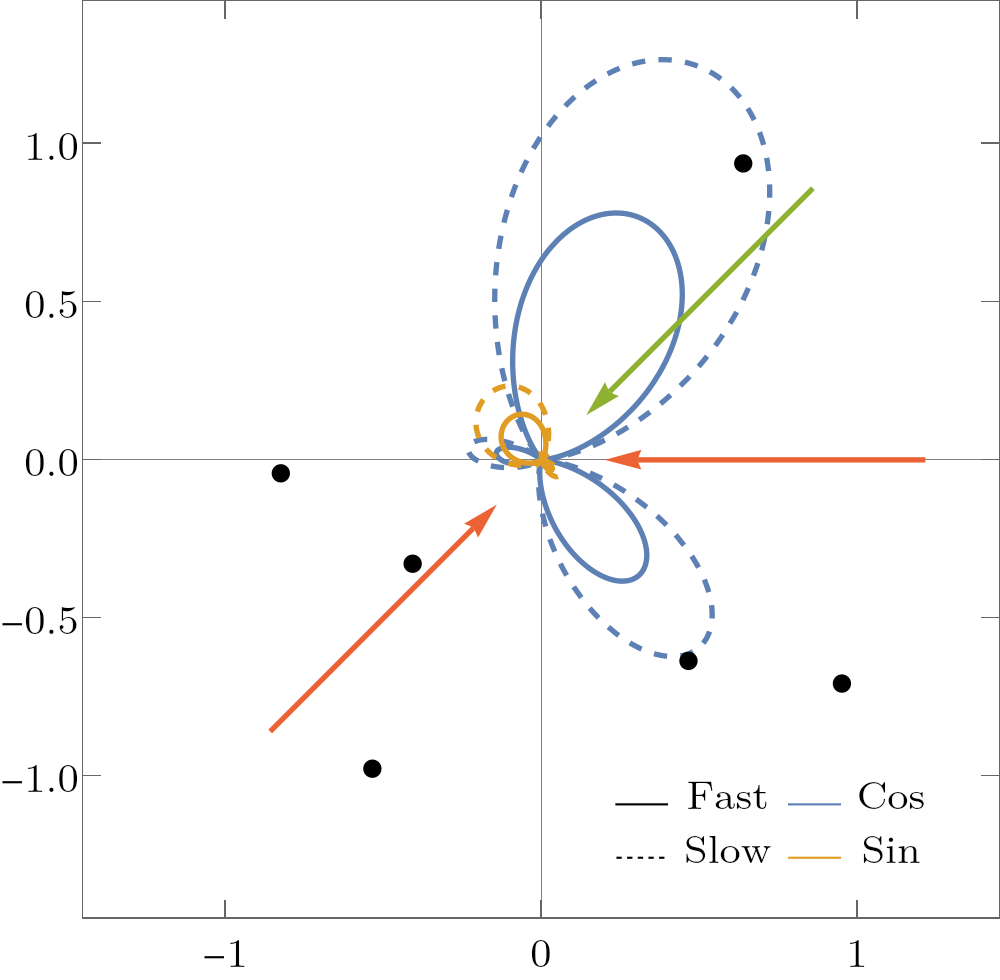}\\
		\includegraphics[width=\linewidth]{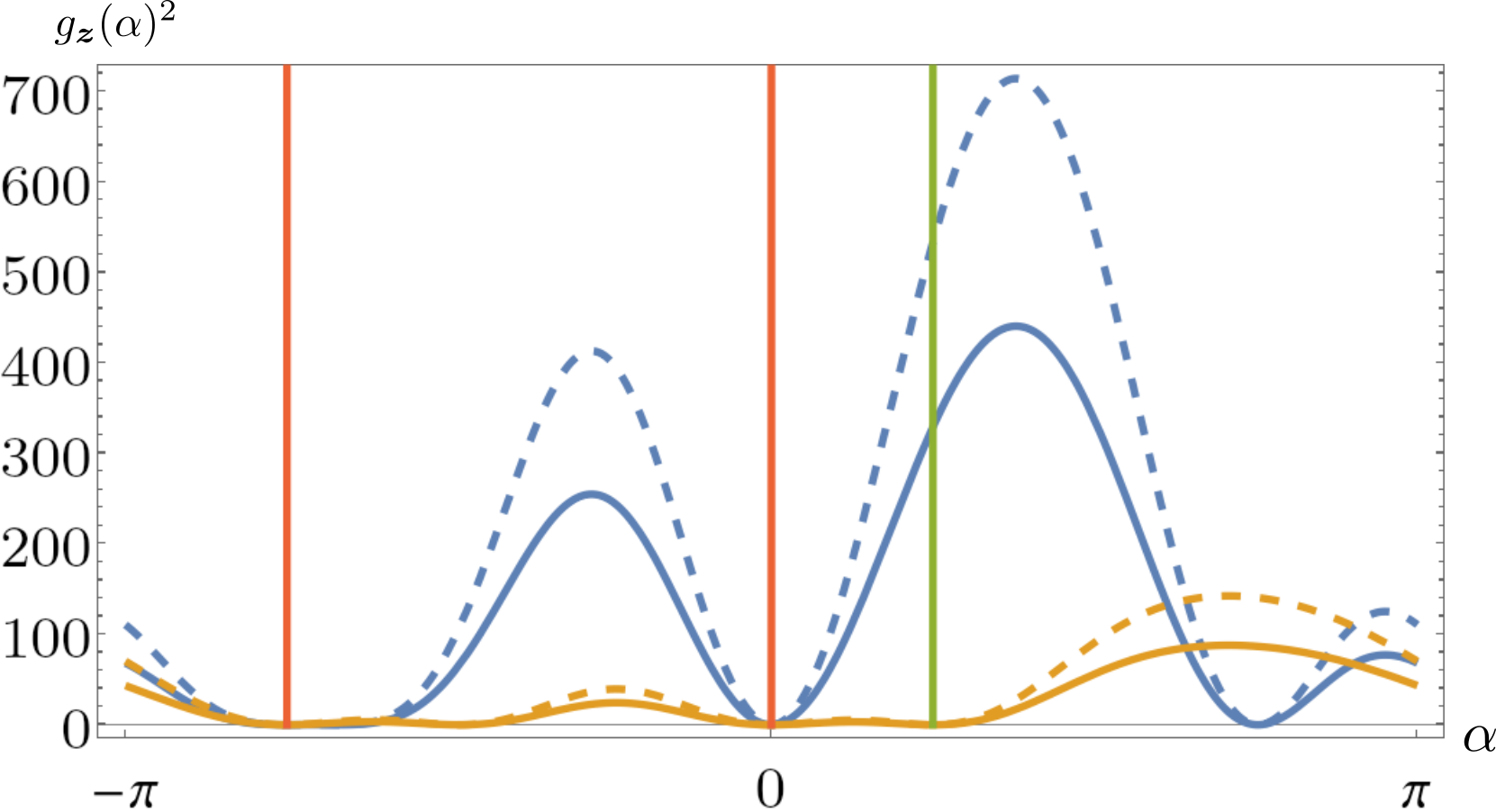}
		\caption{\textbf{Top}: Sensor positions and wave vectors of the general DFS example with unknown phases~\ref{sec:example:generalDFS}. 
        \textbf{Bottom}: The squared coupling $g_{\bm z}(\alpha)=g_{\bm z}(1,\sqrt{2}(\cos{\alpha},\sin{\alpha}),0)$ to cosine and sine waves as a function of the direction $\alpha$ for the fast and the slow control sequence for unknown phases. The green (orange) line marks the signal (noise) direction. The corresponding control sequences are shown in fig.~\ref{fig:example_general_phases_unknwon_control}.
		}
		\label{fig:example_general_phases_unknwon}
	\end{minipage}
\end{figure*}

We will use this example to demonstrate the three steps introduced in \ref{sec:DFS_control} and find a DFS for given sensor locations for a generic 6-sensor quantum network. The six arbitrarily chosen sensor locations are shown in fig.~\ref{fig:example_general_phases_knwon}(a). The frequency is the same for all generators and normalized to $\omega = 1$.
The interaction time $T=6\pi$ is three periods long. Three waves are interacting with the sensor network. One signal from the upper right $\vec{k}_s=\left(1,1\right)$ and two noise waves. One noise wave from the lower left $\vec{k}_1 = \left(-1,-1\right)$ is running in the opposite direction as the signal. The other comes from the right $\vec{k}_2 =\left(\sqrt{2},0\right)$. In Appendix~\ref{appendix:example:generalDFS}, we apply the three steps to construct a DFS from Sec~\ref{sec:DFS_control} explicitly.

In Fig.~\ref{fig:example_general_phases_knwon} we see the resulting squared coupling 
\begin{align}
g_{\bm z}(\omega=1,\vec{k}=\sqrt{2}(\cos{\alpha},\sin{\alpha}),\phi=0)^2
\end{align}
to a cosine ($\phi=0$) and sine ($\phi=\pi/2$) waves with propagation direction given by the angle $\alpha$, and  for the control sequences $C_k^{slow}$ and $C_k^{fast}$ for known phases. The control sequences themselves are shown in Fig.~\ref{fig:example_general_phases_knwon_control}. Fig.~\ref{fig:example_general_phases_unknwon} (Fig.~\ref{fig:example_general_phases_unknwon_control}) shows the coupling (control pulses) for $C_u^{slow}$ and $C_u^{fast}$ for unknown phases. 
As we expect, we see that the coupling to the cosine noise waves is zero in both cases, while for the sine noise wave in is only zero in the cases of unknown phases (in the first case such noise is ruled by the assumption on the phase). 
Additionally, the coupling for the slow sequence is as predicted $\frac{4}{\pi}$ higher than the fast sequence. The coupling sequence for the known phases cases additionally protects from cos wave coming from the left, i.e., $-k_2 =\left(-\sqrt{2},0\right)$ direction. In this example, protection from unknown phases comes with a cost of roughly a $\sqrt{2}$ factor in the coupling to this signal.

\subsection{Approximate DFS with a circular sensor network}  \label{sec:example:approx_dfs}
In some cases (e.g. if the signal filed is linearly dependent on the noises) all states in the DFS are necessarily decoupled from the signal. In these cases one can not be perfectly protected from noise and sensitive to the signal at the same time. Nevertheless, introducing and using the so-called approximate DFS 
\begin{equation}
	\mathrm{aDFS}_\epsilon = \left\{\bm z \Big \vert \vert g_{\bm z}(\omega_j, \vec{k}_j, \phi_j)  \vert < \epsilon \quad \forall 1\leq j \leq d \right\},
\end{equation}
can improve the sensor network significantly~\cite{hamannApproximateDecoherenceFree2022}. Analogously to the ideal situation, an approximate affine DFS can be defined.

In this example, we construct an approximate DFS using methods introduced in~\cite{hamannApproximateDecoherenceFree2022}. We use a point-symmetric circular sensor, where an even number of sensors is placed at angles $2\pi/n$ on a circle. A network with $n=16$ sensors is shown in Fig.~\ref{fig:circle-sensor}. We choose the states $\ket{1}^{\otimes n}$ and $\ket{-1}^{\otimes n}$ to span the DFS. As the sensor network is point symmetric, it is only sensitive to the cosine waves (see Sec.\ref{sec:point-sym}), and we can assume that all phases $\phi=0$ are zero. The signal is coming from the top right, and given by the wave vector $\vec{k}_s=\frac{1}{\sqrt{2}}(1,1)$. The signal and noise waves have the same frequency normalized to $\omega=1$. We now assume that the direction of the noise $\vec{k}_\alpha=(\cos(\alpha),\sin(\alpha))$ is not known exactly but it is restricted to lower left half-plane, i.e $\alpha\in \left[\frac{3\pi}{4},\frac{7\pi}{4}\right]$. To reduce the impact of such unknown noise, $n/2$ virtual noise waves are placed in the noisy half space, and the DFS is constructed to protect from these virtual noise waves exactly. The resulting sensitivity $g_{\bm z}(\omega=1,\vec{k}_\alpha,\phi=0)^2$ for different angles $\alpha$ is shown in fig.~\ref{fig:circle-sensor}(bottom). 

We see that the sensitivity for the virtual noise sources is zero. For arbitrary noise angles $\alpha$  the signal-to-noise ratio
\begin{equation} \label{eq: SNR}
{\rm SNR} = \frac{g_{\bm z}(\omega=1,\vec{k}_s,\phi=0)^2}{\underset{\frac{3}{4}\pi \leq \alpha\leq \frac{7}{4}\pi}{\max} g_{\bm z}(\omega=1,\vec{k}_\alpha,\phi=0)^2 }
\end{equation}
 is a relevant figure of merit to see how well the  aDFS can be used for sensing, as it is proportional to the optimal obtainable QFI~\cite{hamannApproximateDecoherenceFree2022}. The signal-to-noise ratio is depicted in Fig.~\ref{fig:circle-sensor}(c), one sees that it increases from $1$ for $n=2$ to $10^8$ for $n=26$. For completeness, we also show the remaining signal, which stays roughly constant around $10^3$ up to $n=12$ and then decays by three orders of magnitude to roughly $1$ for $n=16$. Notice that the loss in the signal can be compensated without changing the signal-to-noise ratio by increasing the interaction time or adding more qubits. 

\begin{figure}
	\centering
	\includegraphics[width=\linewidth]{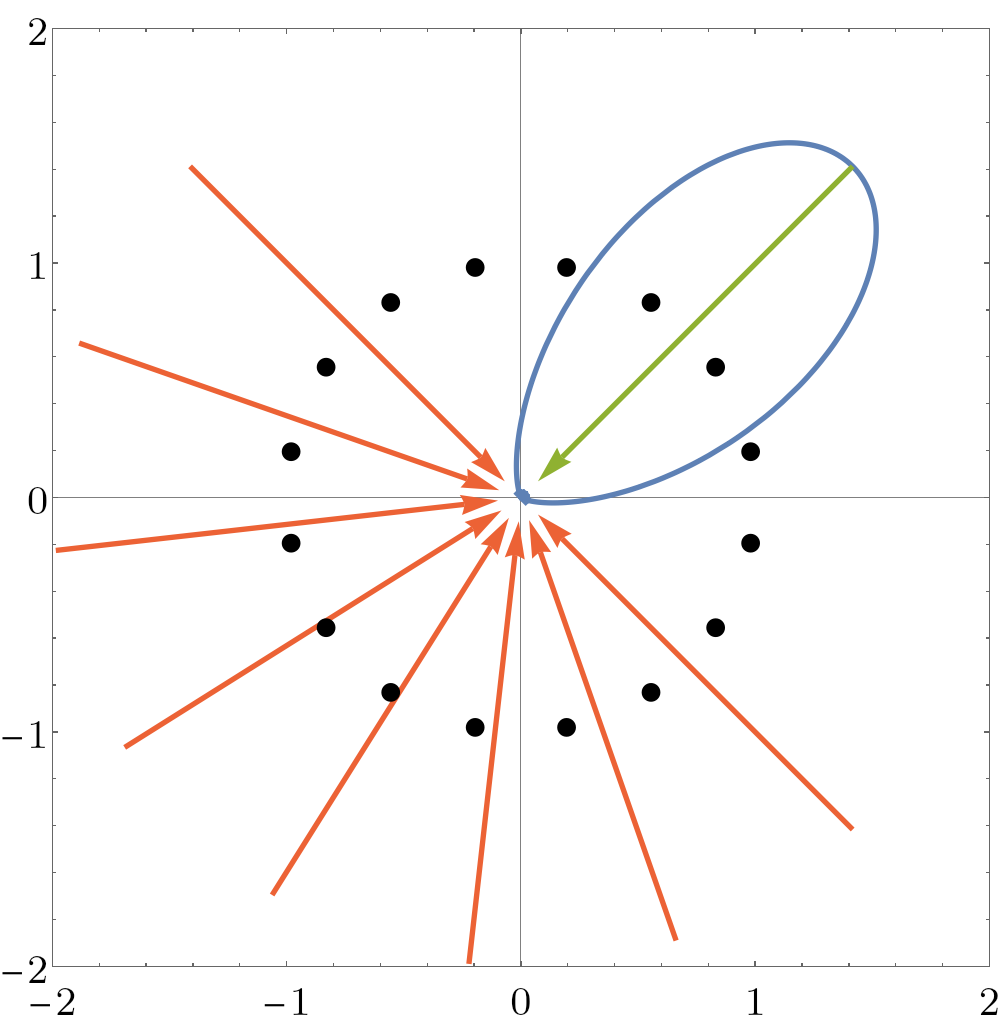}\\
	\includegraphics[width=\linewidth]{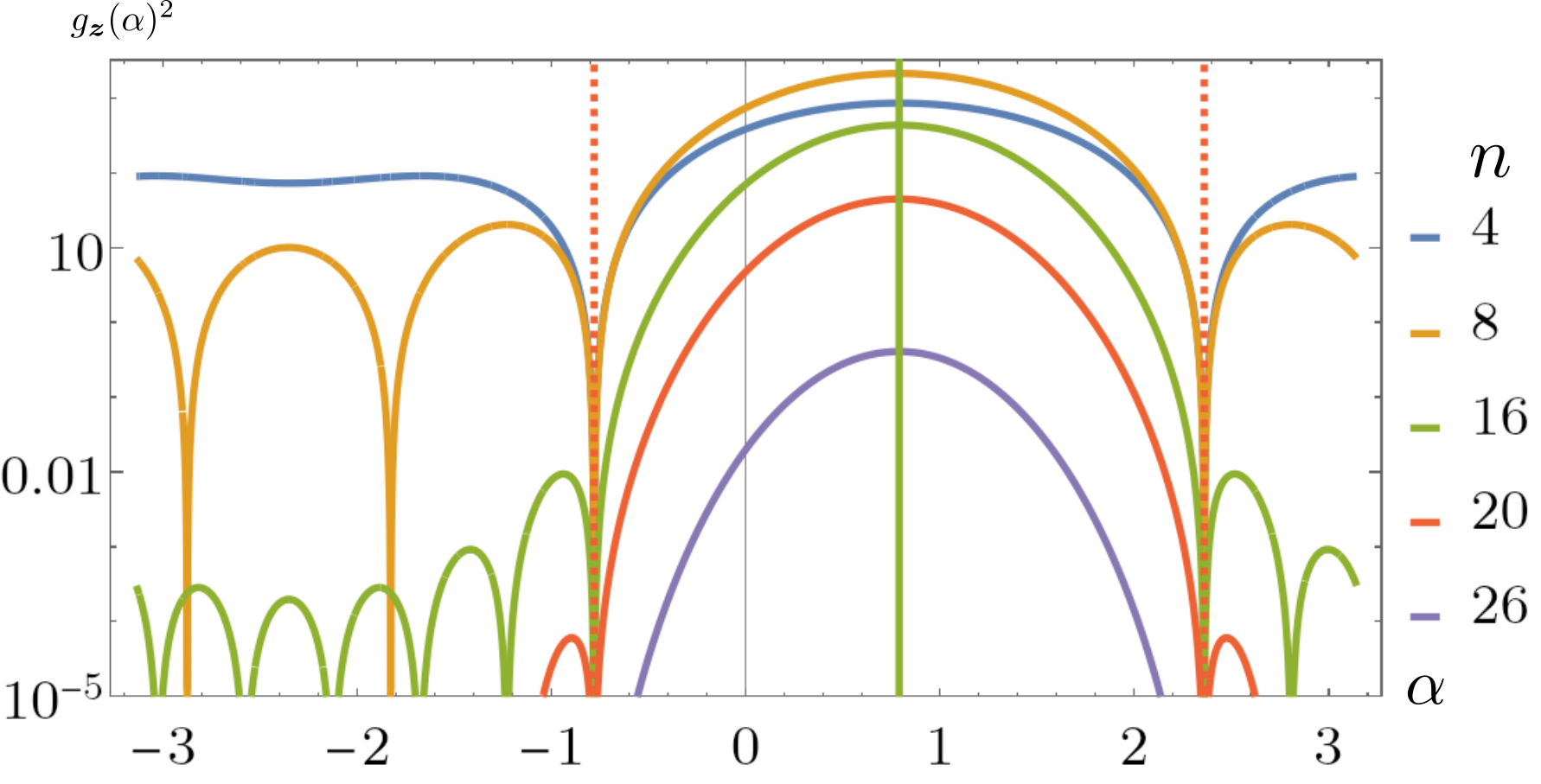}\\
	\includegraphics[width=\linewidth]{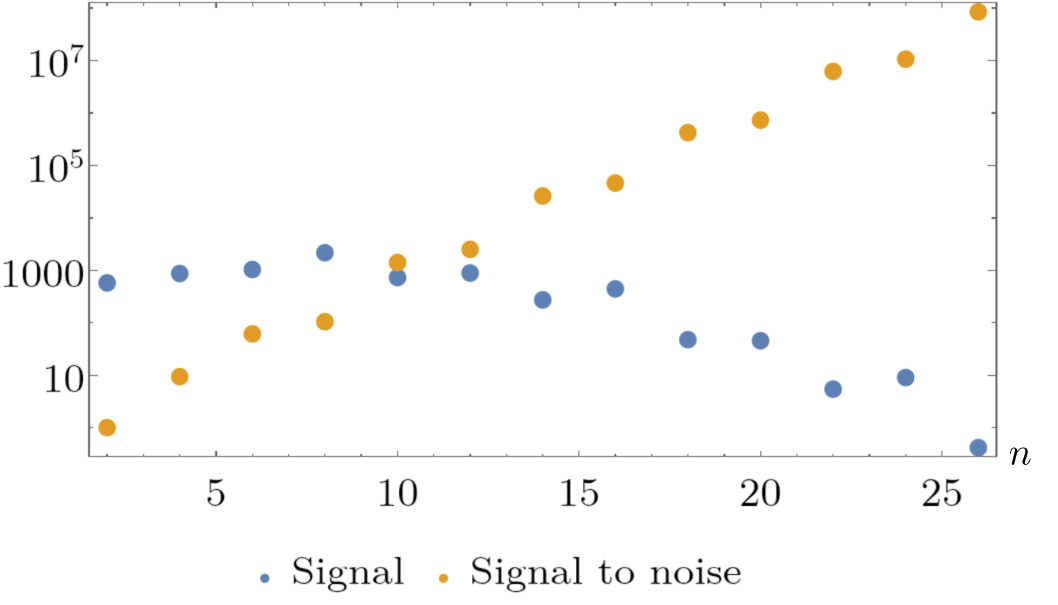}
	\caption{\textbf{Top:} The circular sensor (black point), signal (green arrow), and virtual noise (orange arrow) waves in the $n=16$ instance of the approximate DFS example (\ref{sec:example:approx_dfs}). The blue line shows the direction-dependent sensitivity.
 \textbf{Center:} Sensitivity as a function of the direction $\alpha$ for different $n$. The green (orange) lines mark the signal (noisy area). Virtual noises are perfectly suppressed to zero, while the remaining noisy area is highly suppressed. This suppression increases with $n$.  \textbf{Bottom:} Signal $g_{\bm z}(1,\vec{k}_s,0)^2$ and signal-to-noise ratio (Eq.~\eqref{eq: SNR}) as a function of $n$. The protection increases drastically with $n$, while the signal decays much slower.    }
	\label{fig:circle-sensor}
\end{figure}

\subsection{Sensor placement}
As discussed in  sec.~\ref{sec:no-control} it is possible to construct a DFS by just choosing the suitable sensor locations. In figure \ref{fig:example-sensor-placement}, we provide a example for three noise waves with a unit-length wave vector. The GHZ state has zero coupling to the noise sources and a high coupling/sensitivity in the remaining directions. 
\begin{figure}
	\centering
	\includegraphics[width=\linewidth]{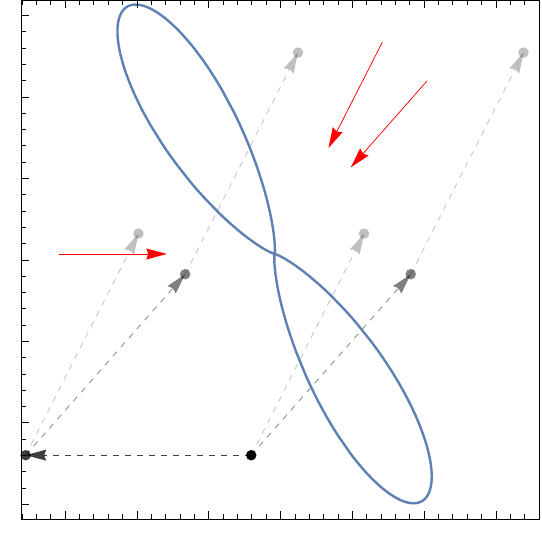}
	\caption{In this example, the sensor positions are chosen such that the sensor is insensitive to the three incoming red waves. The blue line shows the resulting direction-dependent sensitivity for a GHZ state. 
    }
	\label{fig:example-sensor-placement}
\end{figure}

\section{Generalizations}

\label{sec: generalization}

In this section, we discuss the application of our methods to time-dependent signals and noise beyond the case of plane waves.

\subsection{Other monochromatic waves}

The plane waves discussed so far are the most prominent example of field expansion in monochromatic waves
\begin{align}
    f_j(\vec{x},t,\phi) &= \Re( f_j(\vec{x})  e^{-i (\omega_j t + \phi_j)})
\end{align}
for which the space and time dependence factorize, and the latter has a fixed frequency. Plane waves are not the only relevant monochromatic wave expansion; e.g. spherical harmonics are another prominent one. As before if the phases $\phi_j$ of different fields are unknown, one can always write $f_j(\vec{x}) = | f_j(\vec{x}) | e^{\ii \varphi_j(\vec {x})}$,  split each field 
\begin{align*}
    f_j(\vec{x},t,\phi) &=\cos(\phi_j) f_j^{(1)}(\vec{x},t) + \sin(\phi_j) f_j^{(2)}(\vec{x},t) 
\end{align*}
into two monochromatic components $f_j^{(1)}(\vec{x},t)  = |f_j(\vec x)| \cos(\varphi_j(\vec{x})-\omega_j t)$  
with $  f_j^{(2)}(\vec{x},t)  = |f_j(\vec x)| \sin(\varphi_j(\vec{x})-\omega_j t)$, and treat those as independent  monochromatic fields.


It is not difficult to see that our methods from section~\ref{sec:DFS_control} naturally generalize to any monochromatic waves. 
One can define the field matrix
\begin{equation}
    F_{ji}=f_j(\Vec{x}_i,t,\phi_j),
\end{equation}
and construct the DFS by diagonalizing it. Since we are dealing with monochromatic waves, the time dependence of the generator is the same as for plane waves, hence one can employ the same control sequences, e.g. one can use the rectangular $\Pi$ function in Eq.~\eqref{eq: Pi cont}  again to construct the slow control sequence.



\subsection{Arbitrary time dependencies}
More generally one can consider the construction of DFS for fields $f_j(\vec{x},t)$ with an arbitrary temporal dependence. Notice that in this case one can not always  assume that the fields are synchronized. In general one must treat two delayed fields
\begin{equation}
f_j^{(\tau)}(\vec{x},t)=f_j(\vec{x},t+\tau),
\end{equation}
with different $\tau$ and $\tau'$, as coming from difference sources. Here, the delay parameter $\tau$ is analogous to the unknown phase $\phi$ in the case of waves, with the important difference that in general $f_j^{(\tau)}(\vec{x},t)$ can not be decomposed as a linear combination of two (or more) fields with fixed delays. Nevertheless, ideal noise-protected sensing is possible if the signal field $f_s(\vec{x},t)$ (restricted on sensor's positions) is not a linear combination of the noise fields $f_j^{(\tau)}(x,t)$ for all possible delays, for example, if the time translations of noise fields spans a finite-dimensional space. Even when this is not the case, one can construct an approximate DFS, by imposing protection from fields with finitely many different delays. This approach was shown to be very beneficial for sensing in the case where the position of the noise sources is unknown (spatially translated fields)~\cite{hamannApproximateDecoherenceFree2022}.\\

Let us now focus on the case of synchronized fields.
On the one hand, when fast control is available one can always construct the DFS by orthogonalizing the field matrix $F_{ji} = f_j(\vec{x}_i,t)$,
where the first $d$ rows correspond to the noise fields and the last row to the signal field. 
The fast control sequence is then given by the perpendicular signal component, e.g. the last row of the orthogonalized field matrix. On the other hand, given such a fast control sequence for arbitrary fields it is not straightforward to construct a slow control sequence. Designing a general and efficient method to do so remains an open problem.

\subsection{Higher dimensional DFS}
In certain cases,  most notably for Bayesian estimation~\cite{wolkNoisyDistributedSensing2020}, it is advantageous to construct DFS with a dimension higher than two. In this section, we present two methods to obtain these higher dimensional DFS. 

\subsubsection{Artificially adding noise generators to the field matrix}\label{sec:high_dim_dfs_noise}
Let $\bm{z}^l$ be a set of $L$ distinct strings labeling the states we want within the DFS, with $\bm z^1 = (1...1)$. Accordingly to Eq.~\eqref{eq: G int}, to construct such a DFS we need to find a control sequence $\bm C$ satisfying 
\begin{equation}\label{eq: hight DFS}
\int_{-\frac{T}{2}}^{\frac{T}{2}} \sum_{i=1}^n z_i^l f_j(\vec{x}_i,t) C_i(t) dt = 0
\end{equation}
for all noise fields $f_j$ and all states $\ket{\bm z^l}$. This can be done as follwos

First, for each state $\bm z^l$ we construct the field matrix
\begin{equation}
    F^{\bm{z}^l}_{ji} = z_i^l f_j(\vec{x}_i,t)
\end{equation}
of the $d$ noise fields. Then, we define a global field matrix $F$ by stacking all $F^{\bm{z}^l}$ contains the noise matrices for every state
$$F = \begin{pmatrix}
    F^{\bm z^1}\\
    \vdots\\
    F^{\bm z^l}\\
    F^{\mathrm{signal}}
\end{pmatrix}
$$
and adding the last row vector
\begin{equation}
    F^{\mathrm{signal}}_i = f_s(\vec{x}_i,t) 
\end{equation}
 corresponding to the signal field. The fast control sequence $\bm C^{fast}$ satisfying Eq.~\eqref{eq: hight DFS} can then be found by orthogonality $F$ with the scalar product
\begin{equation}
    \langle x,y\rangle = \sum_i \int_{-\frac{T}{2}}^{\frac{T}{2}} x_i(t)y_i(t) dt.
\end{equation}
The last row of the orthogonal matrix can be used to define the control sequence 
$\bm{C}^{fast}$ (after normalization). Note that this procedure also grantees that at least the state $\frac{1}{\sqrt 2}(\ket{\bm z^1}+\ket{-\bm z^1})$ inside the DFS retains some sensitivity to the signal (unless $F_i^{\rm signal}$ is not linearly independence from all $F_{ji}^{\bm z^l}$).


As last remarks, notice that given the fast control sequence one can again define a slow control one as discussed in \ref{eq: time control}. Analogously the results on point symmetric sensor networks from sec~\ref{sec:point-sym} apply here, provided that all $\bm z^l$ obey the point symmetry.

\subsubsection{By adding a control system}

Let us assume that in addition to the sensing qubit (on which $\sigma_z^{(i)}$ acts) each sensor in the network contains an auxiliary qubit. These qubits can then be prepared in an entangled state of the form 
\begin{equation}
\alpha \ket{0}_L+\beta \ket{1}_L= \alpha \ket{0}^{\otimes n} + \beta \ket{1}^{\otimes n}.
\end{equation}
Using an initially entangled state of sensing qubits and local operations, any two DFS constructions containing the state $\bm{z}$ and $\bm{z'}$ and realized with control sequences $\bm C$ and $\bm C'$, can now be realized in superposition controlled  by the state of the delocalize logical qubit $\ket{0}_L$ or $\ket{1}_L$. This is done by initially preparing the ``dressed'' state $\frac{1}{\sqrt{2}}(\alpha \ket{0}_L(\ket{\bm z }+\ket{-\bm z }) + \beta \ket{1}_L(\ket{\bm z '}+\ket{-\bm z '})$, and 
replacing the local gate sequences by gates coherently controlled on the state of the auxiliary qubit at each sensor. This ensures that the state remains protected from noise during the evolution, while the phase accumulated by the state $\ket{\bm z}$ and $\ket{\bm z'}$ can vary depending on the residual strength of the signal in their respective DFS.

In particular, one can always choose the same initial states $\ket{\bm z'}=\ket{\bm z}$ and set the control sequence $\bm C' = \lambda \bm C$ to be a ``slowdown'' of the original one $\bm C$, with $0<\lambda< 1$ (for slow control one can equivalently map $\Pi_{\sin^{-1}{A_i}}\mapsto\Pi_{\sin^{-1}{\lambda A_i}}$.) In this case the evolved final state acquires phases 
$U(T) \ket{0}_L\ket{\pm \bm z} = e^{\mp \ii \beta_0 \varphi }\ket{0}_L\ket{\pm \bm z}$ and $U(T) \ket{1}_L\ket{\pm \bm z} = e^{\mp \ii \beta_0 \lambda \varphi }\ket{1}_L\ket{\pm \bm z}$, where $\varphi$ is the phase resulting from the residual signal in the original DFS. This strategy thus allows one to obtain a four-dimensional DFS where the signal generator is given by the operator  $\hat G_{\rm DFS} \propto (\ketbra{\bm z} -\ketbra{-\bm z})\otimes (\ketbra{0}_L+\lambda \ketbra{1}_L)$. By replicating this construction with local auxiliary systems of dimension $D$, one can obtain a $2D$-dimensional DFS with the signal generator 
\begin{align}
\hat G_{\rm DFS} \propto \bigoplus_{\ell=0}^{D-1} \lambda_\ell (\ketbra{\bm z,\ell}-\ketbra{-\bm z,\ell})
\end{align}
for any $1=\lambda_0\geq \lambda_1\dots\geq \lambda_{D-1}\geq 0$

\section{Conclusion}
\label{sec:conslusion}

In this work, we have investigated the detection of waves with quantum sensor networks.
We considered the setting where qubit sensors distributed at different locations $\vec{x}_i$ are coupled to the field of amplitude $f(\vec{x},t)$ via the interaction   
 $H(\vec{x}_i,t) \propto f(\vec{x_i},t) \sigma_z^{(i)}$.
Utilizing entanglement between the sensors and local dynamical decoupling techniques, we have established a method that allows one to protect the sensor network from the influence of several known noise waves while retaining its sensitivity to the signal waves. The method is easy to use and relies on matrix orthogonalization.

It is well known that entangling the sensors allows the sensitivity to the signal field to improve from a linear to a quadratic scaling in the number of sensors (Heisenberg scaling). Remarkably, in a noisy scenario with the number of noise fields comparable to the number of sensors an exponential advantage of quantum sensors networks over classical ones is possible. We here identified sufficient conditions for such an exponential advantage to arise.

The setting and methods we have presented are not specific to a particular platform realizing the sensors, nor to a specific physical realization of the qubit-field interaction. Recently, such a table-top sensor network was experimentally realized with trapped ions to probe magnetic fields~\cite{BateExperimental2024}. In the future, with the advent of large-scale quantum networks, one can envision the realization of large-scale sensor networks to e.g. detect gravitational waves via local time dilation.




\section*{Acknowledgments}
A.H., P.A. and W.D. acknowledge support from the Austrian Science Fund (FWF). This research was funded in whole or in part by the Austrian Science Fund (FWF) 10.55776/P36009, 10.55776/P36010 and 10.55776/COE1.  P.S. acknowledges support from the Swiss National Science Foundation (project 192244). For open access purposes, the author has applied a CC BY public copyright license to any author accepted manuscript version arising from this submission. Finanziert von der Europäischen Union.

\bibliography{ref.bib}

\begin{thebibliography}{36}%
\makeatletter
\providecommand \@ifxundefined [1]{%
 \@ifx{#1\undefined}
}%
\providecommand \@ifnum [1]{%
 \ifnum #1\expandafter \@firstoftwo
 \else \expandafter \@secondoftwo
 \fi
}%
\providecommand \@ifx [1]{%
 \ifx #1\expandafter \@firstoftwo
 \else \expandafter \@secondoftwo
 \fi
}%
\providecommand \natexlab [1]{#1}%
\providecommand \enquote  [1]{``#1''}%
\providecommand \bibnamefont  [1]{#1}%
\providecommand \bibfnamefont [1]{#1}%
\providecommand \citenamefont [1]{#1}%
\providecommand \href@noop [0]{\@secondoftwo}%
\providecommand \href [0]{\begingroup \@sanitize@url \@href}%
\providecommand \@href[1]{\@@startlink{#1}\@@href}%
\providecommand \@@href[1]{\endgroup#1\@@endlink}%
\providecommand \@sanitize@url [0]{\catcode `\\12\catcode `\$12\catcode
  `\&12\catcode `\#12\catcode `\^12\catcode `\_12\catcode `\%12\relax}%
\providecommand \@@startlink[1]{}%
\providecommand \@@endlink[0]{}%
\providecommand \url  [0]{\begingroup\@sanitize@url \@url }%
\providecommand \@url [1]{\endgroup\@href {#1}{\urlprefix }}%
\providecommand \urlprefix  [0]{URL }%
\providecommand \Eprint [0]{\href }%
\providecommand \doibase [0]{http://dx.doi.org/}%
\providecommand \selectlanguage [0]{\@gobble}%
\providecommand \bibinfo  [0]{\@secondoftwo}%
\providecommand \bibfield  [0]{\@secondoftwo}%
\providecommand \translation [1]{[#1]}%
\providecommand \BibitemOpen [0]{}%
\providecommand \bibitemStop [0]{}%
\providecommand \bibitemNoStop [0]{.\EOS\space}%
\providecommand \EOS [0]{\spacefactor3000\relax}%
\providecommand \BibitemShut  [1]{\csname bibitem#1\endcsname}%
\let\auto@bib@innerbib\@empty
\bibitem [{\citenamefont
  {H{\"u}lsmeyer}(1905)}]{hulsmeyerVerfahrenUmEntfernte1905}%
  \BibitemOpen
  \bibfield  {author} {\bibinfo {author} {\bibfnamefont {C.}~\bibnamefont
  {H{\"u}lsmeyer}},\ }\href@noop {} {\enquote {\bibinfo {title} {Verfahren,
  {{Um Entfernte Metallische Gegenst{\"a}nde Mittels Elektrischer Wellen Einem
  Beobachter Zu Melden}}},}\ } (\bibinfo {year} {1905})\BibitemShut {NoStop}%
\bibitem [{\citenamefont {Burke}\ \emph {et~al.}(2019)\citenamefont {Burke},
  \citenamefont {Graham-Smith},\ and\ \citenamefont
  {Wilkinson}}]{BurkeBernardF.2019AItR}%
  \BibitemOpen
  \bibfield  {author} {\bibinfo {author} {\bibfnamefont {B.~F.}\ \bibnamefont
  {Burke}}, \bibinfo {author} {\bibfnamefont {F.}~\bibnamefont {Graham-Smith}},
  \ and\ \bibinfo {author} {\bibfnamefont {P.~N.}\ \bibnamefont {Wilkinson}},\
  }\href@noop {} {\emph {\bibinfo {title} {An Introduction to Radio
  Astronomy}}},\ \bibinfo {edition} {fourth edition}\ ed.\ (\bibinfo
  {publisher} {Cambridge University Press},\ \bibinfo {address} {Cambridge},\
  \bibinfo {year} {2019})\BibitemShut {NoStop}%
\bibitem [{\citenamefont {{LIGO Scientific Collaboration and Virgo
  Collaboration}}(2016)}]{ligoscientificcollaborationandvirgocollaborationObservationGravitationalWaves2016}%
  \BibitemOpen
  \bibfield  {author} {\bibinfo {author} {\bibnamefont {{LIGO Scientific
  Collaboration and Virgo Collaboration}}},\ }\href {\doibase
  10.1103/PhysRevLett.116.061102} {\bibfield  {journal} {\bibinfo  {journal}
  {Physical Review Letters}\ }\textbf {\bibinfo {volume} {116}},\ \bibinfo
  {pages} {061102} (\bibinfo {year} {2016})}\BibitemShut {NoStop}%
\bibitem [{\citenamefont {Shearer}(2003)}]{2003PMSI}%
  \BibitemOpen
  \bibfield  {author} {\bibinfo {author} {\bibfnamefont {P.~M.}\ \bibnamefont
  {Shearer}},\ }\href@noop {} {\bibfield  {journal} {\bibinfo  {journal}
  {Journal of seismology}\ }\textbf {\bibinfo {volume} {7}},\ \bibinfo {pages}
  {137} (\bibinfo {year} {2003})}\BibitemShut {NoStop}%
\bibitem [{\citenamefont {Eldredge}\ \emph {et~al.}(2018)\citenamefont
  {Eldredge}, \citenamefont {{Foss-Feig}}, \citenamefont {Gross}, \citenamefont
  {Rolston},\ and\ \citenamefont
  {Gorshkov}}]{eldredgeOptimalSecureMeasurement2018}%
  \BibitemOpen
  \bibfield  {author} {\bibinfo {author} {\bibfnamefont {Z.}~\bibnamefont
  {Eldredge}}, \bibinfo {author} {\bibfnamefont {M.}~\bibnamefont
  {{Foss-Feig}}}, \bibinfo {author} {\bibfnamefont {J.~A.}\ \bibnamefont
  {Gross}}, \bibinfo {author} {\bibfnamefont {S.~L.}\ \bibnamefont {Rolston}},
  \ and\ \bibinfo {author} {\bibfnamefont {A.~V.}\ \bibnamefont {Gorshkov}},\
  }\href {\doibase 10.1103/PhysRevA.97.042337} {\bibfield  {journal} {\bibinfo
  {journal} {Physical Review A}\ }\textbf {\bibinfo {volume} {97}},\ \bibinfo
  {pages} {042337} (\bibinfo {year} {2018})}\BibitemShut {NoStop}%
\bibitem [{\citenamefont {Qian}\ \emph {et~al.}(2019)\citenamefont {Qian},
  \citenamefont {Eldredge}, \citenamefont {Ge}, \citenamefont {Pagano},
  \citenamefont {Monroe}, \citenamefont {Porto},\ and\ \citenamefont
  {Gorshkov}}]{qianHeisenbergscalingMeasurementProtocol2019}%
  \BibitemOpen
  \bibfield  {author} {\bibinfo {author} {\bibfnamefont {K.}~\bibnamefont
  {Qian}}, \bibinfo {author} {\bibfnamefont {Z.}~\bibnamefont {Eldredge}},
  \bibinfo {author} {\bibfnamefont {W.}~\bibnamefont {Ge}}, \bibinfo {author}
  {\bibfnamefont {G.}~\bibnamefont {Pagano}}, \bibinfo {author} {\bibfnamefont
  {C.}~\bibnamefont {Monroe}}, \bibinfo {author} {\bibfnamefont {J.~V.}\
  \bibnamefont {Porto}}, \ and\ \bibinfo {author} {\bibfnamefont {A.~V.}\
  \bibnamefont {Gorshkov}},\ }\href {\doibase 10.1103/PhysRevA.100.042304}
  {\bibfield  {journal} {\bibinfo  {journal} {Physical Review A}\ }\textbf
  {\bibinfo {volume} {100}},\ \bibinfo {pages} {042304} (\bibinfo {year}
  {2019})}\BibitemShut {NoStop}%
\bibitem [{\citenamefont {Ge}\ \emph {et~al.}(2018)\citenamefont {Ge},
  \citenamefont {Jacobs}, \citenamefont {Eldredge}, \citenamefont {Gorshkov},\
  and\ \citenamefont {{Foss-Feig}}}]{geDistributedQuantumMetrology2018}%
  \BibitemOpen
  \bibfield  {author} {\bibinfo {author} {\bibfnamefont {W.}~\bibnamefont
  {Ge}}, \bibinfo {author} {\bibfnamefont {K.}~\bibnamefont {Jacobs}}, \bibinfo
  {author} {\bibfnamefont {Z.}~\bibnamefont {Eldredge}}, \bibinfo {author}
  {\bibfnamefont {A.~V.}\ \bibnamefont {Gorshkov}}, \ and\ \bibinfo {author}
  {\bibfnamefont {M.}~\bibnamefont {{Foss-Feig}}},\ }\href {\doibase
  10.1103/PhysRevLett.121.043604} {\bibfield  {journal} {\bibinfo  {journal}
  {Physical Review Letters}\ }\textbf {\bibinfo {volume} {121}},\ \bibinfo
  {pages} {043604} (\bibinfo {year} {2018})}\BibitemShut {NoStop}%
\bibitem [{\citenamefont {Sidhu}\ and\ \citenamefont
  {Kok}(2020)}]{sidhuGeometricPerspectiveQuantum2020}%
  \BibitemOpen
  \bibfield  {author} {\bibinfo {author} {\bibfnamefont {J.~S.}\ \bibnamefont
  {Sidhu}}\ and\ \bibinfo {author} {\bibfnamefont {P.}~\bibnamefont {Kok}},\
  }\href {\doibase 10.1116/1.5119961} {\bibfield  {journal} {\bibinfo
  {journal} {AVS Quantum Science}\ }\textbf {\bibinfo {volume} {2}},\ \bibinfo
  {pages} {014701} (\bibinfo {year} {2020})}\BibitemShut {NoStop}%
\bibitem [{\citenamefont {Rubio}\ \emph {et~al.}(2020)\citenamefont {Rubio},
  \citenamefont {Knott}, \citenamefont {Proctor},\ and\ \citenamefont
  {Dunningham}}]{rubioQuantumSensingNetworks2020}%
  \BibitemOpen
  \bibfield  {author} {\bibinfo {author} {\bibfnamefont {J.}~\bibnamefont
  {Rubio}}, \bibinfo {author} {\bibfnamefont {P.~A.}\ \bibnamefont {Knott}},
  \bibinfo {author} {\bibfnamefont {T.~J.}\ \bibnamefont {Proctor}}, \ and\
  \bibinfo {author} {\bibfnamefont {J.~A.}\ \bibnamefont {Dunningham}},\ }\href
  {\doibase 10.1088/1751-8121/ab9d46} {\bibfield  {journal} {\bibinfo
  {journal} {Journal of Physics A: Mathematical and Theoretical}\ }\textbf
  {\bibinfo {volume} {53}},\ \bibinfo {pages} {344001} (\bibinfo {year}
  {2020})}\BibitemShut {NoStop}%
\bibitem [{\citenamefont {Qian}\ \emph {et~al.}(2021)\citenamefont {Qian},
  \citenamefont {Bringewatt}, \citenamefont {Boettcher}, \citenamefont
  {Bienias},\ and\ \citenamefont {Gorshkov}}]{qianOptimalMeasurementField2021}%
  \BibitemOpen
  \bibfield  {author} {\bibinfo {author} {\bibfnamefont {T.}~\bibnamefont
  {Qian}}, \bibinfo {author} {\bibfnamefont {J.}~\bibnamefont {Bringewatt}},
  \bibinfo {author} {\bibfnamefont {I.}~\bibnamefont {Boettcher}}, \bibinfo
  {author} {\bibfnamefont {P.}~\bibnamefont {Bienias}}, \ and\ \bibinfo
  {author} {\bibfnamefont {A.~V.}\ \bibnamefont {Gorshkov}},\ }\href {\doibase
  10.1103/PhysRevA.103.L030601} {\bibfield  {journal} {\bibinfo  {journal}
  {Physical Review A}\ }\textbf {\bibinfo {volume} {103}},\ \bibinfo {pages}
  {L030601} (\bibinfo {year} {2021})}\BibitemShut {NoStop}%
\bibitem [{\citenamefont {Bringewatt}\ \emph {et~al.}(2021)\citenamefont
  {Bringewatt}, \citenamefont {Boettcher}, \citenamefont {Niroula},
  \citenamefont {Bienias},\ and\ \citenamefont
  {Gorshkov}}]{bringewattProtocolsEstimatingMultiple2021}%
  \BibitemOpen
  \bibfield  {author} {\bibinfo {author} {\bibfnamefont {J.}~\bibnamefont
  {Bringewatt}}, \bibinfo {author} {\bibfnamefont {I.}~\bibnamefont
  {Boettcher}}, \bibinfo {author} {\bibfnamefont {P.}~\bibnamefont {Niroula}},
  \bibinfo {author} {\bibfnamefont {P.}~\bibnamefont {Bienias}}, \ and\
  \bibinfo {author} {\bibfnamefont {A.~V.}\ \bibnamefont {Gorshkov}},\ }\href
  {\doibase 10.1103/PhysRevResearch.3.033011} {\bibfield  {journal} {\bibinfo
  {journal} {Physical Review Research}\ }\textbf {\bibinfo {volume} {3}},\
  \bibinfo {pages} {033011} (\bibinfo {year} {2021})}\BibitemShut {NoStop}%
\bibitem [{\citenamefont {Zhang}\ and\ \citenamefont
  {Zhuang}(2021)}]{zhangDistributedQuantumSensing2021}%
  \BibitemOpen
  \bibfield  {author} {\bibinfo {author} {\bibfnamefont {Z.}~\bibnamefont
  {Zhang}}\ and\ \bibinfo {author} {\bibfnamefont {Q.}~\bibnamefont {Zhuang}},\
  }\href {\doibase 10.1088/2058-9565/abd4c3} {\bibfield  {journal} {\bibinfo
  {journal} {Quantum Science and Technology}\ }\textbf {\bibinfo {volume}
  {6}},\ \bibinfo {pages} {043001} (\bibinfo {year} {2021})}\BibitemShut
  {NoStop}%
\bibitem [{\citenamefont {Shettell}\ \emph {et~al.}(2022)\citenamefont
  {Shettell}, \citenamefont {Hassani},\ and\ \citenamefont
  {Markham}}]{shettellPrivateNetworkParameter2022}%
  \BibitemOpen
  \bibfield  {author} {\bibinfo {author} {\bibfnamefont {N.}~\bibnamefont
  {Shettell}}, \bibinfo {author} {\bibfnamefont {M.}~\bibnamefont {Hassani}}, \
  and\ \bibinfo {author} {\bibfnamefont {D.}~\bibnamefont {Markham}},\ }\href
  {\doibase 10.48550/arXiv.2207.14450} {\enquote {\bibinfo {title} {Private
  network parameter estimation with quantum sensors},}\ } (\bibinfo {year}
  {2022}),\ \Eprint {http://arxiv.org/abs/2207.14450} {arXiv:2207.14450}
  \BibitemShut {NoStop}%
\bibitem [{\citenamefont {Bugalho}\ \emph {et~al.}(2024)\citenamefont
  {Bugalho}, \citenamefont {Hassani}, \citenamefont {Omar},\ and\ \citenamefont
  {Markham}}]{bugalhoPrivateRobustStates2024}%
  \BibitemOpen
  \bibfield  {author} {\bibinfo {author} {\bibfnamefont {L.}~\bibnamefont
  {Bugalho}}, \bibinfo {author} {\bibfnamefont {M.}~\bibnamefont {Hassani}},
  \bibinfo {author} {\bibfnamefont {Y.}~\bibnamefont {Omar}}, \ and\ \bibinfo
  {author} {\bibfnamefont {D.}~\bibnamefont {Markham}},\ }\href {\doibase
  10.48550/arXiv.2407.21701} {\enquote {\bibinfo {title} {Private and {{Robust
  States}} for {{Distributed Quantum Sensing}}},}\ } (\bibinfo {year} {2024}),\
  \Eprint {http://arxiv.org/abs/2407.21701} {arXiv:2407.21701} \BibitemShut
  {NoStop}%
\bibitem [{\citenamefont {Hassani}\ \emph {et~al.}(2024)\citenamefont
  {Hassani}, \citenamefont {Scheiner}, \citenamefont {Paris},\ and\
  \citenamefont {Markham}}]{hassaniPrivacyNetworksQuantum2024}%
  \BibitemOpen
  \bibfield  {author} {\bibinfo {author} {\bibfnamefont {M.}~\bibnamefont
  {Hassani}}, \bibinfo {author} {\bibfnamefont {S.}~\bibnamefont {Scheiner}},
  \bibinfo {author} {\bibfnamefont {M.~G.~A.}\ \bibnamefont {Paris}}, \ and\
  \bibinfo {author} {\bibfnamefont {D.}~\bibnamefont {Markham}},\ }\href
  {\doibase 10.48550/arXiv.2408.01711} {\enquote {\bibinfo {title} {Privacy in
  networks of quantum sensors},}\ } (\bibinfo {year} {2024}),\ \Eprint
  {http://arxiv.org/abs/2408.01711} {arXiv:2408.01711} \BibitemShut {NoStop}%
\bibitem [{\citenamefont {Proctor}\ \emph {et~al.}(2018)\citenamefont
  {Proctor}, \citenamefont {Knott},\ and\ \citenamefont
  {Dunningham}}]{proctorMultiparameterEstimationNetworked2018}%
  \BibitemOpen
  \bibfield  {author} {\bibinfo {author} {\bibfnamefont {T.~J.}\ \bibnamefont
  {Proctor}}, \bibinfo {author} {\bibfnamefont {P.~A.}\ \bibnamefont {Knott}},
  \ and\ \bibinfo {author} {\bibfnamefont {J.~A.}\ \bibnamefont {Dunningham}},\
  }\href {\doibase 10.1103/PhysRevLett.120.080501} {\bibfield  {journal}
  {\bibinfo  {journal} {Physical Review Letters}\ }\textbf {\bibinfo {volume}
  {120}},\ \bibinfo {pages} {080501} (\bibinfo {year} {2018})}\BibitemShut
  {NoStop}%
\bibitem [{\citenamefont {Sekatski}\ \emph {et~al.}(2020)\citenamefont
  {Sekatski}, \citenamefont {W{\"o}lk},\ and\ \citenamefont
  {D{\"u}r}}]{sekatskiOptimalDistributedSensing2020}%
  \BibitemOpen
  \bibfield  {author} {\bibinfo {author} {\bibfnamefont {P.}~\bibnamefont
  {Sekatski}}, \bibinfo {author} {\bibfnamefont {S.}~\bibnamefont {W{\"o}lk}},
  \ and\ \bibinfo {author} {\bibfnamefont {W.}~\bibnamefont {D{\"u}r}},\ }\href
  {\doibase 10.1103/PhysRevResearch.2.023052} {\bibfield  {journal} {\bibinfo
  {journal} {Physical Review Research}\ }\textbf {\bibinfo {volume} {2}},\
  \bibinfo {pages} {023052} (\bibinfo {year} {2020})}\BibitemShut {NoStop}%
\bibitem [{\citenamefont {Hamann}\ \emph {et~al.}(2022)\citenamefont {Hamann},
  \citenamefont {Sekatski},\ and\ \citenamefont
  {D{\"u}r}}]{hamannApproximateDecoherenceFree2022}%
  \BibitemOpen
  \bibfield  {author} {\bibinfo {author} {\bibfnamefont {A.}~\bibnamefont
  {Hamann}}, \bibinfo {author} {\bibfnamefont {P.}~\bibnamefont {Sekatski}}, \
  and\ \bibinfo {author} {\bibfnamefont {W.}~\bibnamefont {D{\"u}r}},\ }\href
  {\doibase 10.1088/2058-9565/ac44de} {\bibfield  {journal} {\bibinfo
  {journal} {Quantum Science and Technology}\ }\textbf {\bibinfo {volume}
  {7}},\ \bibinfo {pages} {025003} (\bibinfo {year} {2022})}\BibitemShut
  {NoStop}%
\bibitem [{\citenamefont {W{\"o}lk}\ \emph {et~al.}(2020)\citenamefont
  {W{\"o}lk}, \citenamefont {Sekatski},\ and\ \citenamefont
  {D{\"u}r}}]{wolkNoisyDistributedSensing2020}%
  \BibitemOpen
  \bibfield  {author} {\bibinfo {author} {\bibfnamefont {S.}~\bibnamefont
  {W{\"o}lk}}, \bibinfo {author} {\bibfnamefont {P.}~\bibnamefont {Sekatski}},
  \ and\ \bibinfo {author} {\bibfnamefont {W.}~\bibnamefont {D{\"u}r}},\ }\href
  {\doibase 10.1088/2058-9565/ab9ba5} {\bibfield  {journal} {\bibinfo
  {journal} {Quantum Science and Technology}\ }\textbf {\bibinfo {volume}
  {5}},\ \bibinfo {pages} {045003} (\bibinfo {year} {2020})}\BibitemShut
  {NoStop}%
\bibitem [{\citenamefont {Hamann}\ \emph {et~al.}(2024)\citenamefont {Hamann},
  \citenamefont {Sekatski},\ and\ \citenamefont
  {Dür}}]{hamannMultiparameter2024}%
  \BibitemOpen
  \bibfield  {author} {\bibinfo {author} {\bibfnamefont {A.}~\bibnamefont
  {Hamann}}, \bibinfo {author} {\bibfnamefont {P.}~\bibnamefont {Sekatski}}, \
  and\ \bibinfo {author} {\bibfnamefont {W.}~\bibnamefont {Dür}},\ }\href
  {\doibase 10.1088/2058-9565/ad37d5} {\bibfield  {journal} {\bibinfo
  {journal} {Quantum Science and Technology}\ }\textbf {\bibinfo {volume}
  {9}},\ \bibinfo {pages} {035005} (\bibinfo {year} {2024})}\BibitemShut
  {NoStop}%
\bibitem [{\citenamefont {Bate}\ \emph {et~al.}(Prep)\citenamefont {Bate},
  \citenamefont {Hamann}, \citenamefont {Canteri}, \citenamefont {Winkler},
  \citenamefont {Koong}, \citenamefont {Krutyanskiy}, \citenamefont {D{\"u}r},\
  and\ \citenamefont {Lanyon}}]{BateExperimental2024}%
  \BibitemOpen
  \bibfield  {author} {\bibinfo {author} {\bibfnamefont {J.}~\bibnamefont
  {Bate}}, \bibinfo {author} {\bibfnamefont {A.}~\bibnamefont {Hamann}},
  \bibinfo {author} {\bibfnamefont {M.}~\bibnamefont {Canteri}}, \bibinfo
  {author} {\bibfnamefont {A.}~\bibnamefont {Winkler}}, \bibinfo {author}
  {\bibfnamefont {Z.}~\bibnamefont {Koong}}, \bibinfo {author} {\bibfnamefont
  {V.}~\bibnamefont {Krutyanskiy}}, \bibinfo {author} {\bibfnamefont
  {W.}~\bibnamefont {D{\"u}r}}, \ and\ \bibinfo {author} {\bibfnamefont
  {B.}~\bibnamefont {Lanyon}},\ }\href@noop {} {\enquote {\bibinfo {title}
  {Experimental distributed quantum sensing in a noisy environment},}\ }
  (\bibinfo {year} {In Prep.})\BibitemShut {NoStop}%
\bibitem [{\citenamefont {Giovannetti}\ \emph
  {et~al.}(2011{\natexlab{a}})\citenamefont {Giovannetti}, \citenamefont
  {Lloyd},\ and\ \citenamefont
  {Maccone}}]{giovannettiAdvancesQuantumMetrology2011}%
  \BibitemOpen
  \bibfield  {author} {\bibinfo {author} {\bibfnamefont {V.}~\bibnamefont
  {Giovannetti}}, \bibinfo {author} {\bibfnamefont {S.}~\bibnamefont {Lloyd}},
  \ and\ \bibinfo {author} {\bibfnamefont {L.}~\bibnamefont {Maccone}},\ }\href
  {\doibase 10.1038/nphoton.2011.35} {\bibfield  {journal} {\bibinfo  {journal}
  {Nature Photonics}\ }\textbf {\bibinfo {volume} {5}},\ \bibinfo {pages} {222}
  (\bibinfo {year} {2011}{\natexlab{a}})}\BibitemShut {NoStop}%
\bibitem [{\citenamefont {D\"ur}\ \emph {et~al.}(2014)\citenamefont {D\"ur},
  \citenamefont {Skotiniotis}, \citenamefont {Fr\"owis},\ and\ \citenamefont
  {Kraus}}]{PhysRevLett.112.080801}%
  \BibitemOpen
  \bibfield  {author} {\bibinfo {author} {\bibfnamefont {W.}~\bibnamefont
  {D\"ur}}, \bibinfo {author} {\bibfnamefont {M.}~\bibnamefont {Skotiniotis}},
  \bibinfo {author} {\bibfnamefont {F.}~\bibnamefont {Fr\"owis}}, \ and\
  \bibinfo {author} {\bibfnamefont {B.}~\bibnamefont {Kraus}},\ }\href
  {\doibase 10.1103/PhysRevLett.112.080801} {\bibfield  {journal} {\bibinfo
  {journal} {Phys. Rev. Lett.}\ }\textbf {\bibinfo {volume} {112}},\ \bibinfo
  {pages} {080801} (\bibinfo {year} {2014})}\BibitemShut {NoStop}%
\bibitem [{\citenamefont {Arrad}\ \emph {et~al.}(2014)\citenamefont {Arrad},
  \citenamefont {Vinkler}, \citenamefont {Aharonov},\ and\ \citenamefont
  {Retzker}}]{arradIncreasingSensingResolution2014}%
  \BibitemOpen
  \bibfield  {author} {\bibinfo {author} {\bibfnamefont {G.}~\bibnamefont
  {Arrad}}, \bibinfo {author} {\bibfnamefont {Y.}~\bibnamefont {Vinkler}},
  \bibinfo {author} {\bibfnamefont {D.}~\bibnamefont {Aharonov}}, \ and\
  \bibinfo {author} {\bibfnamefont {A.}~\bibnamefont {Retzker}},\ }\href
  {\doibase 10.1103/PhysRevLett.112.150801} {\bibfield  {journal} {\bibinfo
  {journal} {Physical Review Letters}\ }\textbf {\bibinfo {volume} {112}},\
  \bibinfo {pages} {150801} (\bibinfo {year} {2014})}\BibitemShut {NoStop}%
\bibitem [{\citenamefont {{Demkowicz-Dobrza{\'n}ski}}\ \emph
  {et~al.}(2017)\citenamefont {{Demkowicz-Dobrza{\'n}ski}}, \citenamefont
  {Czajkowski},\ and\ \citenamefont
  {Sekatski}}]{demkowicz-dobrzanskiAdaptiveQuantumMetrology2017a}%
  \BibitemOpen
  \bibfield  {author} {\bibinfo {author} {\bibfnamefont {R.}~\bibnamefont
  {{Demkowicz-Dobrza{\'n}ski}}}, \bibinfo {author} {\bibfnamefont
  {J.}~\bibnamefont {Czajkowski}}, \ and\ \bibinfo {author} {\bibfnamefont
  {P.}~\bibnamefont {Sekatski}},\ }\href {\doibase 10.1103/PhysRevX.7.041009}
  {\bibfield  {journal} {\bibinfo  {journal} {Physical Review X}\ }\textbf
  {\bibinfo {volume} {7}},\ \bibinfo {pages} {041009} (\bibinfo {year}
  {2017})}\BibitemShut {NoStop}%
\bibitem [{\citenamefont {Sekatski}\ \emph {et~al.}(2017)\citenamefont
  {Sekatski}, \citenamefont {Skotiniotis}, \citenamefont {Ko{\l}ody{\'n}ski},\
  and\ \citenamefont {D{\"u}r}}]{sekatskiQuantumMetrologyFull2017}%
  \BibitemOpen
  \bibfield  {author} {\bibinfo {author} {\bibfnamefont {P.}~\bibnamefont
  {Sekatski}}, \bibinfo {author} {\bibfnamefont {M.}~\bibnamefont
  {Skotiniotis}}, \bibinfo {author} {\bibfnamefont {J.}~\bibnamefont
  {Ko{\l}ody{\'n}ski}}, \ and\ \bibinfo {author} {\bibfnamefont
  {W.}~\bibnamefont {D{\"u}r}},\ }\href {\doibase 10.22331/q-2017-09-06-27}
  {\bibfield  {journal} {\bibinfo  {journal} {Quantum}\ }\textbf {\bibinfo
  {volume} {1}},\ \bibinfo {pages} {27} (\bibinfo {year} {2017})}\BibitemShut
  {NoStop}%
\bibitem [{\citenamefont {Zhou}\ \emph {et~al.}(2018)\citenamefont {Zhou},
  \citenamefont {Zhang}, \citenamefont {Preskill},\ and\ \citenamefont
  {Jiang}}]{zhouAchievingHeisenbergLimit2018}%
  \BibitemOpen
  \bibfield  {author} {\bibinfo {author} {\bibfnamefont {S.}~\bibnamefont
  {Zhou}}, \bibinfo {author} {\bibfnamefont {M.}~\bibnamefont {Zhang}},
  \bibinfo {author} {\bibfnamefont {J.}~\bibnamefont {Preskill}}, \ and\
  \bibinfo {author} {\bibfnamefont {L.}~\bibnamefont {Jiang}},\ }\href
  {\doibase 10.1038/s41467-017-02510-3} {\bibfield  {journal} {\bibinfo
  {journal} {Nature Communications}\ }\textbf {\bibinfo {volume} {9}},\
  \bibinfo {pages} {78} (\bibinfo {year} {2018})}\BibitemShut {NoStop}%
\bibitem [{\citenamefont {Faist}\ \emph {et~al.}(2023)\citenamefont {Faist},
  \citenamefont {Woods}, \citenamefont {Albert}, \citenamefont {Renes},
  \citenamefont {Eisert},\ and\ \citenamefont
  {Preskill}}]{faistTimeEnergyUncertaintyRelation2023}%
  \BibitemOpen
  \bibfield  {author} {\bibinfo {author} {\bibfnamefont {P.}~\bibnamefont
  {Faist}}, \bibinfo {author} {\bibfnamefont {M.~P.}\ \bibnamefont {Woods}},
  \bibinfo {author} {\bibfnamefont {V.~V.}\ \bibnamefont {Albert}}, \bibinfo
  {author} {\bibfnamefont {J.~M.}\ \bibnamefont {Renes}}, \bibinfo {author}
  {\bibfnamefont {J.}~\bibnamefont {Eisert}}, \ and\ \bibinfo {author}
  {\bibfnamefont {J.}~\bibnamefont {Preskill}},\ }\href {\doibase
  10.1103/PRXQuantum.4.040336} {\bibfield  {journal} {\bibinfo  {journal} {PRX
  Quantum}\ }\textbf {\bibinfo {volume} {4}},\ \bibinfo {pages} {040336}
  (\bibinfo {year} {2023})}\BibitemShut {NoStop}%
\bibitem [{\citenamefont {Wang}\ \emph {et~al.}(2024)\citenamefont {Wang},
  \citenamefont {Bringewatt}, \citenamefont {Seif}, \citenamefont {Brady},
  \citenamefont {Oh},\ and\ \citenamefont
  {Gorshkov}}]{wangExponentialEntanglementAdvantage2024}%
  \BibitemOpen
  \bibfield  {author} {\bibinfo {author} {\bibfnamefont {Y.-X.}\ \bibnamefont
  {Wang}}, \bibinfo {author} {\bibfnamefont {J.}~\bibnamefont {Bringewatt}},
  \bibinfo {author} {\bibfnamefont {A.}~\bibnamefont {Seif}}, \bibinfo {author}
  {\bibfnamefont {A.~J.}\ \bibnamefont {Brady}}, \bibinfo {author}
  {\bibfnamefont {C.}~\bibnamefont {Oh}}, \ and\ \bibinfo {author}
  {\bibfnamefont {A.~V.}\ \bibnamefont {Gorshkov}},\ }\href@noop {} {\enquote
  {\bibinfo {title} {Exponential entanglement advantage in sensing correlated
  noise},}\ } (\bibinfo {year} {2024}),\ \Eprint
  {http://arxiv.org/abs/2410.05878} {arXiv:2410.05878 [quant-ph]} \BibitemShut
  {NoStop}%
\bibitem [{\citenamefont {Lidar}(2014)}]{lidar2014review}%
  \BibitemOpen
  \bibfield  {author} {\bibinfo {author} {\bibfnamefont {D.~A.}\ \bibnamefont
  {Lidar}},\ }\href@noop {} {\bibfield  {journal} {\bibinfo  {journal} {Quantum
  information and computation for chemistry}\ ,\ \bibinfo {pages} {295}}
  (\bibinfo {year} {2014})}\BibitemShut {NoStop}%
\bibitem [{\citenamefont {Vasilyev}\ \emph {et~al.}(2024)\citenamefont
  {Vasilyev}, \citenamefont {Shankar}, \citenamefont {Kaubruegger},\ and\
  \citenamefont {Zoller}}]{vasilyevOptimalMultiparameterMetrology2024}%
  \BibitemOpen
  \bibfield  {author} {\bibinfo {author} {\bibfnamefont {D.~V.}\ \bibnamefont
  {Vasilyev}}, \bibinfo {author} {\bibfnamefont {A.}~\bibnamefont {Shankar}},
  \bibinfo {author} {\bibfnamefont {R.}~\bibnamefont {Kaubruegger}}, \ and\
  \bibinfo {author} {\bibfnamefont {P.}~\bibnamefont {Zoller}},\ }\href
  {\doibase 10.48550/arXiv.2404.14194} {\enquote {\bibinfo {title} {Optimal
  {{Multiparameter Metrology}}: {{The Quantum Compass Solution}}},}\ }
  (\bibinfo {year} {2024}),\ \Eprint {http://arxiv.org/abs/2404.14194}
  {arXiv:2404.14194 [quant-ph]} \BibitemShut {NoStop}%
\bibitem [{\citenamefont {Bavaresco}\ \emph {et~al.}(2024)\citenamefont
  {Bavaresco}, \citenamefont {Lipka-Bartosik}, \citenamefont {Sekatski},\ and\
  \citenamefont {Mehboudi}}]{PhysRevResearch.6.023305}%
  \BibitemOpen
  \bibfield  {author} {\bibinfo {author} {\bibfnamefont {J.}~\bibnamefont
  {Bavaresco}}, \bibinfo {author} {\bibfnamefont {P.}~\bibnamefont
  {Lipka-Bartosik}}, \bibinfo {author} {\bibfnamefont {P.}~\bibnamefont
  {Sekatski}}, \ and\ \bibinfo {author} {\bibfnamefont {M.}~\bibnamefont
  {Mehboudi}},\ }\href {\doibase 10.1103/PhysRevResearch.6.023305} {\bibfield
  {journal} {\bibinfo  {journal} {Phys. Rev. Res.}\ }\textbf {\bibinfo {volume}
  {6}},\ \bibinfo {pages} {023305} (\bibinfo {year} {2024})}\BibitemShut
  {NoStop}%
\bibitem [{\citenamefont {Gorecki}\ \emph {et~al.}(2020)\citenamefont
  {Gorecki}, \citenamefont {Zhou}, \citenamefont {Jiang},\ and\ \citenamefont
  {{Demkowicz-Dobrzanski}}}]{goreckiOptimalProbesErrorcorrection2020a}%
  \BibitemOpen
  \bibfield  {author} {\bibinfo {author} {\bibfnamefont {W.}~\bibnamefont
  {Gorecki}}, \bibinfo {author} {\bibfnamefont {S.}~\bibnamefont {Zhou}},
  \bibinfo {author} {\bibfnamefont {L.}~\bibnamefont {Jiang}}, \ and\ \bibinfo
  {author} {\bibfnamefont {R.}~\bibnamefont {{Demkowicz-Dobrzanski}}},\ }\href
  {\doibase 10.22331/q-2020-07-02-288} {\bibfield  {journal} {\bibinfo
  {journal} {Quantum}\ }\textbf {\bibinfo {volume} {4}},\ \bibinfo {pages}
  {288} (\bibinfo {year} {2020})},\ \Eprint {http://arxiv.org/abs/1901.00896}
  {arxiv:1901.00896 [quant-ph]} \BibitemShut {NoStop}%
\bibitem [{\citenamefont {Weisstein}()}]{weisstein}%
  \BibitemOpen
  \bibfield  {author} {\bibinfo {author} {\bibfnamefont {E.~W.}\ \bibnamefont
  {Weisstein}},\ }\href
  {https://mathworld.wolfram.com/HarmonicAdditionTheorem.html} {\enquote
  {\bibinfo {title} {Harmonic addition theorem. {From MathWorld---A Wolfram Web
  Resource}},}\ }\bibinfo {note} {Last visited on 17/08/2023}\BibitemShut
  {NoStop}%
\bibitem [{\citenamefont {Giovannetti}\ \emph
  {et~al.}(2011{\natexlab{b}})\citenamefont {Giovannetti}, \citenamefont
  {Lloyd},\ and\ \citenamefont {Maccone}}]{giovannetti2011advances}%
  \BibitemOpen
  \bibfield  {author} {\bibinfo {author} {\bibfnamefont {V.}~\bibnamefont
  {Giovannetti}}, \bibinfo {author} {\bibfnamefont {S.}~\bibnamefont {Lloyd}},
  \ and\ \bibinfo {author} {\bibfnamefont {L.}~\bibnamefont {Maccone}},\
  }\href@noop {} {\bibfield  {journal} {\bibinfo  {journal} {Nature photonics}\
  }\textbf {\bibinfo {volume} {5}},\ \bibinfo {pages} {222} (\bibinfo {year}
  {2011}{\natexlab{b}})}\BibitemShut {NoStop}%
\bibitem [{\citenamefont {Hainzer}\ \emph {et~al.}(2024)\citenamefont
  {Hainzer}, \citenamefont {Kiesenhofer}, \citenamefont {Ollikainen},
  \citenamefont {Bock}, \citenamefont {Kranzl}, \citenamefont {Joshi},
  \citenamefont {Yoeli}, \citenamefont {Blatt}, \citenamefont {Gefen},\ and\
  \citenamefont {Roos}}]{PhysRevX.14.011033}%
  \BibitemOpen
  \bibfield  {author} {\bibinfo {author} {\bibfnamefont {H.}~\bibnamefont
  {Hainzer}}, \bibinfo {author} {\bibfnamefont {D.}~\bibnamefont
  {Kiesenhofer}}, \bibinfo {author} {\bibfnamefont {T.}~\bibnamefont
  {Ollikainen}}, \bibinfo {author} {\bibfnamefont {M.}~\bibnamefont {Bock}},
  \bibinfo {author} {\bibfnamefont {F.}~\bibnamefont {Kranzl}}, \bibinfo
  {author} {\bibfnamefont {M.~K.}\ \bibnamefont {Joshi}}, \bibinfo {author}
  {\bibfnamefont {G.}~\bibnamefont {Yoeli}}, \bibinfo {author} {\bibfnamefont
  {R.}~\bibnamefont {Blatt}}, \bibinfo {author} {\bibfnamefont
  {T.}~\bibnamefont {Gefen}}, \ and\ \bibinfo {author} {\bibfnamefont {C.~F.}\
  \bibnamefont {Roos}},\ }\href {\doibase 10.1103/PhysRevX.14.011033}
  {\bibfield  {journal} {\bibinfo  {journal} {Phys. Rev. X}\ }\textbf {\bibinfo
  {volume} {14}},\ \bibinfo {pages} {011033} (\bibinfo {year}
  {2024})}\BibitemShut {NoStop}%
\end{thebibliography}%

\appendix
\onecolumngrid
\section*{Appendix}

\subsection{Coupling to a wave}
The previous section has shown that the total time evolution factors into the evolution with respect to the individual waves, but what is the relation between the waves parameter, the sensor positions and the applied time control of the network?
This section investigates the relation of an arbitrary individual generator $\hat{G}(\omega,\vec{k},\phi)$ corresponding to a wave described by $\omega, \vec{k}$ and $\phi$. It is diagonal in the product basis $\ket{\bm z}$, and denote the corespondent real eigenvalues as $g_{\bm{z}}(\omega,\vec{k},\phi)$ such that 
\begin{equation}
    g_{\bm{z}}(\omega,\vec{k},\phi) \ket{\bm z} = \hat{G}(\omega,\vec{k},\phi)\ket{\bm z}.
\end{equation}

Any sensor configuration, specified by the sensor positions $\vec{x}_1,...\vec{x}_n$ and the local control sequences $C_i(t)$, is 
Therefore the sensor positions ${\vec{x}_1,...\vec{x}_n}$ and time-control $C_i(t)$ are summarized as a combined spacetime coupling 
\begin{equation}
    C_{\bm{z}}(\vec{x},t) = \mathrm{rect}\left(\frac{t}{T}\right)\sum_i z_i C_i(t) \delta(\vec{x} - \vec{x}_i).
\end{equation}
Here $\delta(\vec{x})$ is the Dirac-delta distribution in the three spatial dimensions. The rectangular function $\mathrm{rect}(x)=(\frac{1}{2}+x)\Theta(\frac{1}{2}-x)$, where $\Theta(t)$ is the Heaviside-function, constrains the interaction time.

With these definitions, the relationship between wave coupling and the spacetime coupling  
\begin{equation}
    g_{\bm{z}}(\omega,\vec{k},\phi) = \Re\left(e^{i\phi} \mathfrak{F}\left( C_{\bm{z}}(\vec{x},t)\right)\right)
\end{equation}
simplifies to a complex component of the Fourier transform
\begin{equation}
	\mathfrak{F}\left(f(\vec x,t)\right) := \int d\vec{x} dt \ e^{i (\Vec{x}\vec{k} - \omega t)}f(\vec{x},t).
\end{equation}

The expression in terms of the Fourier transform has two major advantages. First, when evaluating $C_{\bm{z}}$, one can use known analytical solutions instead of solving bulky integrals, or, when evaluating numerically, use building functions for the Fourier transform, which are often much faster and numerically reliable. 

Secondly, the expression allows for some intuitive understanding. The phase $\phi$ will lead to some global \enquote{pulse} behavior.
The Fourier transform relates to position and momentum operators that fulfill an uncertainty principle.
Similarly, the Fourier transform imposes an uncertainty principle between the diameter of the spacetime coupling $C_{\bm{z}}(\vec{x},t)$ and the precision of the wave coupling $g_{\bm{z}}(\omega,\vec{k},\phi)$. This principle states that the more precise the targeting of a frequency or a direction, the more spread the network must be in time or space. 
The following section demonstrates this on a single quit example.


\section{QFI and QFIM}\label{appendix:qfi}
\subsection{QFI for a GHZ state}
This section contains a straightforward generalization of \cite{hamannApproximateDecoherenceFree2022}
The QFI for a signal with $(\omega,\vec{k},\phi)$ and the GHZ state $\frac{\ket{\boldsymbol{z}}+\ket{-\boldsymbol{z}}}{\sqrt{2}}$
is given by
\begin{equation}
    \mathcal{F}_{\omega, \vec{k}, \phi} = 4 g_{\boldsymbol{z}}(\omega,\vec{k},\phi)^2 \vert d \vert^2  = 4 \left(\sum_i z_i \cos(\phi + \gamma_i) \vert\mathfrak{F}(C_i(\vec{x},t))\vert\right)^2 \vert d\vert^2
\end{equation}
with $\gamma_i = -\ii \log(\frac{\mathfrak{F}(C_i(\vec{x},t))}{\vert\mathfrak{F}(C_i(\vec{x},t))\vert})$ being the phase of $\mathfrak{F}(C_i(\vec{x},t))$

and with
\begin{equation}
    d=\int p(\omega, \vec{k}, \phi, \beta)e^{-\ii \beta g_{\boldsymbol{z}}(\omega, \vec{k}, \phi}) d\omega d\vec{k}d\phi d \beta,
\end{equation}
where $p(\omega, \vec{k}, \phi, \beta)$ is the probability density which encodes the distribution of the parameters $\omega$, $\vec{k}$, $\phi$ and $\beta$, which define the noise.
\subsection{QFIM for the initial state within DFS}
This is a straightforward generalization of \cite{hamannMultiparameter2024}.
The DFS is given by $\{\bm z | g_{\boldsymbol{z}}(\omega, \vec{k}, \phi) \ \forall \omega, \vec{k}, \phi \}$, where $\omega$, $\vec{k}$ and $\phi$ are the parameters descibing the noise waves. 
The QFIM elements for a signal $w=(\omega,\vec{k},\phi)$ and $w'=(\omega',\vec{k}',\phi')$ and for a initial state $\psi = \sum_{\bm s} c_{\bm s} \ket{\bm s}$ are given by
\begin{equation}
\mathcal{F}_{w,w'} = 4 \sum_{\bm z} |c_{\bm z}|^2 g_{\bm z}(\omega,\vec{k},\phi) g_{\bm z}(\omega',\vec{k}',\phi') -  4 \left(\sum_{\bm z} |c_{\bm z}|^2 g_{\bm z}(\omega,\vec{k},\phi)\right)\left( \sum_{\bm z} |c_{\bm z}|^2 g_{\bm z}(\omega',\vec{k'},\phi')\right).
\end{equation}
Notice that $d=1$ if and only (upto $2\pi$ periodicity) if $g_{\bm z}=0 \ \forall \, p(\omega,  \vec{k},\phi,\beta)>0 \, \text{and} \,  \beta >0$. 
\subsection{QFI for classical strategies}
\begin{align}
	\mathcal{F}^+_{w} & := 4 \sum_{\boldsymbol{z}\in\{\pm1\}^N} \frac{1}{2^N} g_{\bm z}(\omega,\Vec{k},\phi)^2                                                                                                                                 \\
	                  & = 4 \sum_{\boldsymbol{z}\in\{\pm1\}^N} \frac{1}{2^N} g_{\bm z}(\omega,\Vec{k},\phi)^2                                                                                                                               \\
	                  & =  4 \sum_{\boldsymbol{z}\in\{\pm1\}^N} \frac{1}{2^N} \mathrm{Re}\left(\sum_{i=1}^N z_i e^{\ii \phi}\mathfrak{F}(C_i(\vec{x},t))\right)^2                                                                           \\
	\intertext{introducing $\bm F = \mathrm{Re}( e^{\ii \phi}(\mathfrak{F}(C_1(\vec{x},t)), \mathfrak{F}(C_2(\vec{x},t)),...)^\intercal)  $}
	                  & = \frac{4}{2^N} \sum_{\boldsymbol{z}\in\{\pm1\}^N} (\bm z^\intercal \bm F )^2                                                                                                                                       \\
	                  & =  \frac{4}{2^N} \sum_{\boldsymbol{z}\in\{\pm1\}^N}  (\bm F^\intercal \bm z )(\bm z^\intercal \bm F ) =  \frac{4}{2^N} \bm F^\intercal  \left(\sum_{\boldsymbol{z}\in\{\pm1\}^N} \bm z \bm z^\intercal\right) \bm F \\
	                  & =  4 \vert \bm F \vert^2                                                                                                                                                                                            \\
	                  & = \sum_i \cos^2(\phi + \gamma_i) \vert\mathfrak{F}(C_i(\vec{x},t))\vert^2                                                                                                                                           
	\intertext{with $\gamma_i = -\ii \log(\frac{\mathfrak{F}(C_i(\vec{x},t))}{\vert\mathfrak{F}(C_i(\vec{x},t))\vert})$ being the phase of $\mathfrak{F}(C_i(\vec{x},t))$}
    \intertext{For $C_i(\vec{x},t) = \delta(\vec{x}-\vec{x}_i \cos(\phi_i-\omega*t)$}
    &\propto \sum_i \cos(\phi - \phi_i +\vec{k}^\intercal\vec{x}_i)^2
\end{align}

%

\section{Overlap of the slow control sequence}\label{appendix:slow_control}

\begin{align*}
	\langle C^{fast}_i, f\rangle = \frac{1}{T}\int_{-\frac{T}{2}}^{\frac{T}{2}} C^{fast}_i(t) \cos(\underbrace{\vec{k}^\intercal \vec{x}_i +\phi}_{\alpha_i} - \omega t) dt & = \frac{A_i}{2}\cos(\alpha_i - \varphi_i) \\
	\langle C^{slow}_i, f\rangle =\frac{1}{T}\int_{-\frac{T}{2}}^{\frac{T}{2}} C^{slow}_i(t) \cos(\alpha_i-\omega t)                                                        & = \frac{1}{T}\int_{-\frac{T}{2}}^{\frac{T}{2}} \Pi_{\arcsin(A_i)}(\omega t + \varphi_i) \cos(\alpha_i-\omega t) \\
	                                                                                                                                                                        & = \frac{2}{T}\int_{-\frac{\arcsin(A_i)}{\omega}-\frac{\varphi_i}{\omega}}^{\frac{\arcsin( A_i)}{\omega}-\frac{\varphi_i}{\omega}}\cos(\alpha_i-\omega t) - \underbrace{\frac{1}{T}\int_{-\frac{T}{2}}^{\frac{T}{2}}\cos(\alpha_i - \omega t)}_{0} \\
	                                                                                                                                                                        & = \frac{2}{T\omega}\left(\sin(\alpha_i - \arcsin(A_i)-\varphi_i)- \sin(\alpha_i +  \arcsin(A_i)-\varphi_i)  \right)                                                                                                                               \\
	                                                                                                                                                                        & = \frac{2}{\pi}\cos({\alpha_i - \varphi_i})\sin(\arcsin{A_i})                                                                                                                                                                                     \\
	                                                                                                                                                                        & = \frac{2}{\pi} A_i \cos({\alpha_i - \varphi_i}) = \frac{4}{\pi} \langle C^{fast}_i, f\rangle                                                                                                                                                     
\end{align*}

\section{Examples}
\subsection{General DFS}\label{appendix:example:generalDFS}
Here, we explicitly provide the 3 steps to construct a DFS for the scenario in Section~\ref{sec:example:generalDFS}.
\subsubsection{Determine the field matrix}\label{sec:example_general_phases_knwon}
The field matrix contains the local field strength and its temporal correlation. Determining it differs depending on whether the phases are known or unknown.
\paragraph{Known phases:}
We will start by assuming that the phases are known and 0. This leads to the field matrix $F_k$ in (\ref{equ:Fk}).

\paragraph{Unknown phases:}
For unknown phases, we have to add a row where we replace every noise cos wave with a sin wave. This leads to a field matrix $F_u$ in (\ref{equ:Fu}).
Notice that only the noise waves are duplicated, as the signal phase is assumed to be known/chosen. 

\begin{align}
    \label{equ:Fk}
    F_k&=\left(
    \begin{array}{cccccc}
        \cos (t-0.87) & \cos (t-1.51) & \cos (t-0.74) & \cos (t-0.17) & \cos (t+1.58) & \cos (t+0.24) \\
        \cos (t+1.16) & \cos (t+0.75) & \cos (t+0.57) & \cos (t-0.66) & \cos (t-0.90) & \cos (t-1.35) \\
        \cos (t+0.87) & \cos (t+1.51) & \cos (t+0.74) & \cos (t+0.17) & \cos (t-1.58) & \cos (t-0.24) \\
    \end{array}
    \right) \\
    \label{equ:Fu}
    F_u&=\left(
    \begin{array}{cccccc}
        \cos (t-0.87)  & \cos (t-1.51)  & \cos (t-0.74)  & \cos (t-0.17) & \cos (t+1.58)  & \cos (t+0.24)  \\
        \cos (t+1.16)  & \cos (t+0.75)  & \cos (t+0.57)  & \cos (t-0.66) & \cos (t-0.90)  & \cos (t-1.35)  \\
        -\sin (t-0.87)  & -\sin (t-1.51)  & -\sin (t-0.74)  & -\sin (t-0.17) & -\sin (t+1.58) & -\sin (t+0.24) \\
        -\sin (t+1.16) & -\sin (t+0.75) & -\sin (t+0.57) & -\sin (t-0.66) & -\sin (t-0.90)  & -\sin (t-1.35)  \\
        \cos (t+0.87)  & \cos (t+1.51)  & \cos (t+0.74)  & \cos (t+0.17) & \cos (t-1.58)  & \cos (t-0.24)  \\
    \end{array}
    \right).
\end{align}

\subsubsection{Orthogonalize $F$}
We choose the states $\ket{1}^{\otimes n}$ and $\ket{-1}^{\otimes n}$ to be within the DFS.
Then, we apply Gram-Schmidt with the scalar product 
 \begin{equation}
\langle x,y \rangle = \sum_{i=1}^n \int_{-3\pi}^{3\pi} x_i(t)y_i(t) dt
\end{equation}
to the rows of $F_k$ and $F_u$.
This will give us the corresponding orthogonal signal components

\begin{align}
	s_\perp^k & = 
	\left(
    \begin{array}{c}
        0.07 \cos (t+0.32) \\
        0.15 \cos (t+2.33) \\
        0.06 \cos (t+1.40) \\
        0.15 \cos (t+1.12) \\
        0.14 \cos (t-2.39) \\
        0.18 \cos (t+0.59) \\
    \end{array}
    \right)
	\intertext{for the known and}
	s_\perp^u & = 
	\left(
	\begin{array}{c}
    	0.10 \cos (t-0.46) \\
    	0.18 \cos (t+2.62) \\
    	0.04 \cos (t-1.46) \\
    	0.03 \cos (t+0.94) \\
    	0.21 \cos (t-2.11) \\
    	0.13 \cos (t+0.21) \\
	\end{array}
	\right)
\end{align}
for the unknown phases.

\subsubsection{Identify the control sequence}
We normalize the orthogonal signal components to obtain the fast control sequences. Here, the normalization is done for the amplitudes, such that after the normalization, the biggest amplitude is one.
The resulting fast control sequences
\begin{align}
	C_k^{fast} & = \left(
    \begin{array}{c}
        0.36 \cos (t+0.32) \\
        0.82 \cos (t+2.33) \\
        0.33 \cos (t+1.40) \\
        0.81 \cos (t+1.12) \\
        0.74 \cos (t-2.39) \\
        1.00 \cos (t+0.59) \\
    \end{array}
    \right)\\
	C_u^{fast} & = \left(                   
	\begin{array}{c}
    	0.49 \cos (t-0.46) \\
    	0.84 \cos (t+2.62) \\
    	0.20 \cos (t-1.46) \\
    	0.16 \cos (t+0.94) \\
    	1.00 \cos (t-2.11) \\
    	0.64 \cos (t+0.21) \\
	\end{array}
	\right)
	\intertext{are local oscillations. We replace the oscillations with the rectangular $\Pi$ pulses for the slow control.}
	C_k^{slow} & =  \left(\begin{array}{cc} 
    	\Pi_{\arcsin(0.36)} (t+0.32)\\
    	\Pi_{\arcsin{0.82}} (t+2.33)\\
    	\Pi_{\arcsin(0.33)}(t+1.40)\\
    	\Pi_{\arcsin(0.81)}(t+1.12)\\
    	\Pi_{\arcsin(0.74)}(t-2.39)\\
    	\Pi_{\pi/2}(t+0.59)
	\end{array}
	\right) \\
	C_u^{slow} & =  \left(\begin{array}{cc} 
    	\Pi_{\arcsin(0.49)} (t-0.46)\\
    	\Pi_{\arcsin{0.84}} (t+2.64)\\
    	\Pi_{\arcsin{0.20}}(t-1.46)\\
    	\Pi_{\arcsin(0.16)}(t+0.94)\\
    	\Pi_{\pi/2}(t-2.11)\\
    	\Pi_{\arcsin(0.64)}(t+0.21)
	\end{array}
	\right).
\end{align}

Figure~\ref{fig:example_general_phases_knwon_control} and \ref{fig:example_general_phases_unknwon_control} shows the filliping sequences for the example presented in section~\ref{sec:example:generalDFS}. 
\begin{figure*}
	\begin{minipage}{0.45\linewidth}
		\centering
		\includegraphics[width=0.49\linewidth]{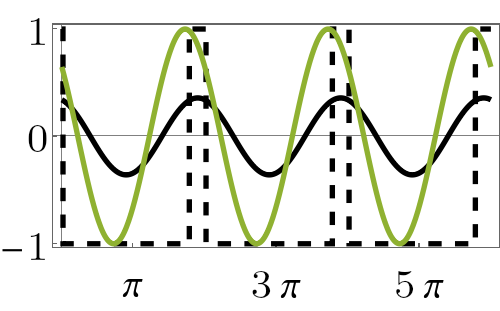}
		\includegraphics[width=0.49\linewidth]{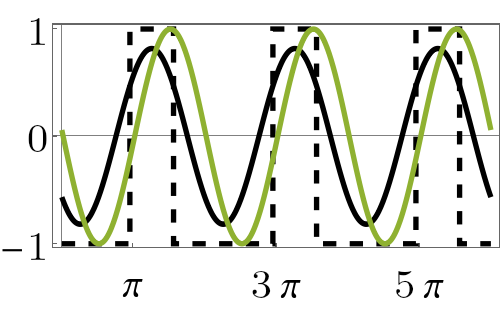}\\
		\includegraphics[width=0.49\linewidth]{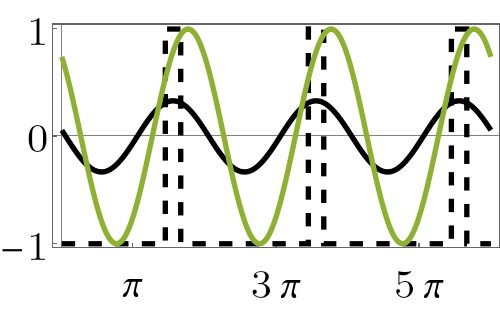}
		\includegraphics[width=0.49\linewidth]{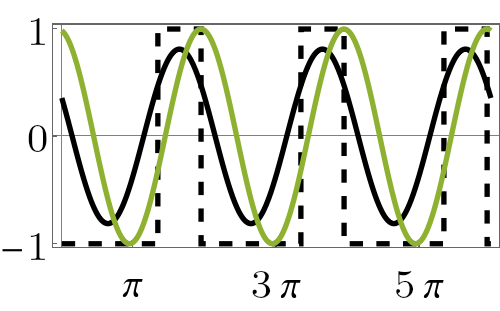}\\
		\includegraphics[width=0.49\linewidth]{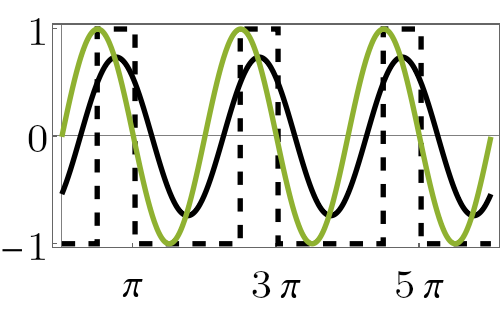}
		\includegraphics[width=0.49\linewidth]{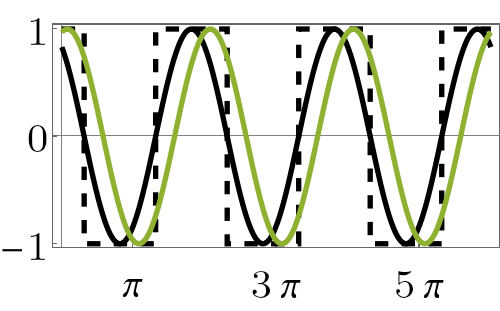}
		\caption{The plots show the control sequences $C_k^{fast}$,  $C_k^{slow}$ for the example \ref{sec:example:generalDFS} with known phases. The sensor locations are ordered from left to right and top to bottom. Additionally, the signal wave is shown. 
		}
		\label{fig:example_general_phases_knwon_control}
	\end{minipage}
	\hspace{1cm}
	\begin{minipage}{0.45\linewidth}
		\centering
		\includegraphics[width=0.49\linewidth]{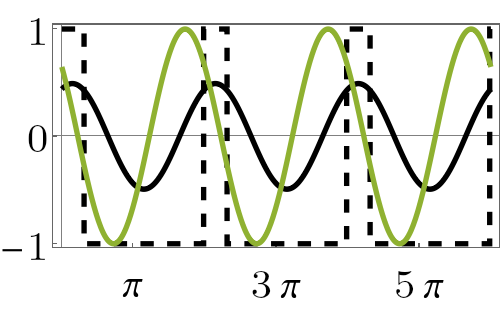}
		\includegraphics[width=0.49\linewidth]{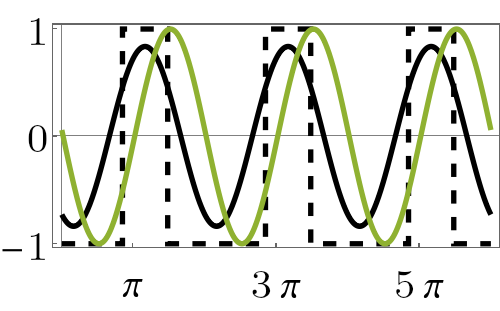}\\
		\includegraphics[width=0.49\linewidth]{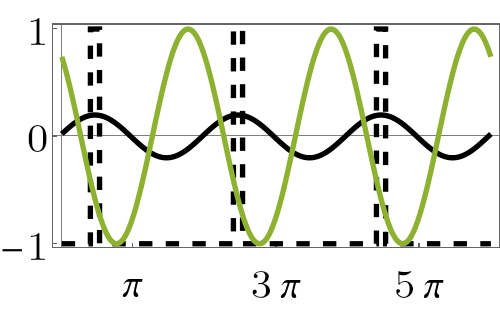}
		\includegraphics[width=0.49\linewidth]{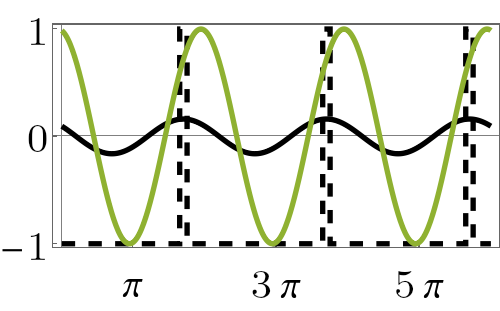}\\
		\includegraphics[width=0.49\linewidth]{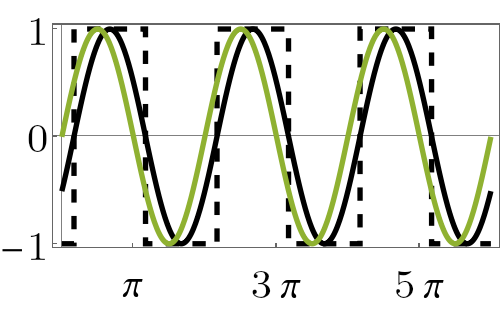}
		\includegraphics[width=0.49\linewidth]{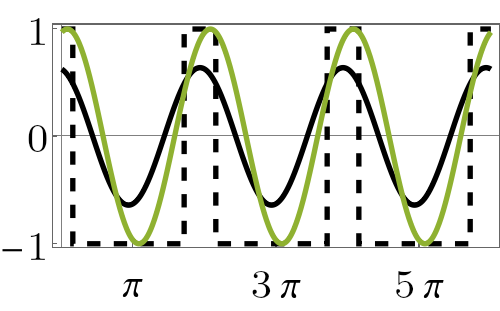}
		\caption{The plots show the control sequences $C_u^{fast}$,  $C_u^{slow}$ for the example \ref{sec:example:generalDFS} with unknown phases. The sensor locations are ordered from left to right and top to bottom. Additionally, the signal wave is shown. 
		}
		\label{fig:example_general_phases_unknwon_control}
	\end{minipage}
\end{figure*}

\section{Wave locking}\label{sec:wave_locking}
This method is motivated by the frequency locking. A flip is applied to each sensor whenever the signal sign (with $\phi=0$) changes.
Therefore flips are applied to the $i$-th sensor, at times
\begin{equation}
	\frac{\Vec{k}_s^\intercal\vec{x}_i}{\omega_s} +\left(\mathbb{Z}+\frac{1}{2}\right)\frac{\pi}{\omega_s},
\end{equation}
where $\omega_s$ and $\Vec{k}_s$ denote the signal frequency and the signal wave vector respectively.
A signal without phase gets effectively a positive function, and the phase will accumulate over the interaction time.
Other flips will mostly be off-resonance and average to zero for long interaction times. Therefore, it will effectively reduce the noise.

\section{Proofs and derivations}
\label{app: proof DFS}
\subsection{Preparatory results for separable strategies}

Let DFS be a set containing some bitstrings of lenght $n$. We say that it has Hamming distance $m$ 
\begin{equation}
    D_H(\bm z ,\bm z') \geq m \qquad \forall \bm z \neq \bm z'\in {\rm DFS}.
\end{equation}
We assume this property in the rest of the section.\\

To estimate the maximal number of elements in such set note that around each $\bm z \in {\rm DFS}$ we can define a Hamming distance Ball
\begin{equation}
B_d(\bm z) =\{ \bm x | D_H(\bm x,\bm z)\leq d\}.
\end{equation}
Furtheremore, for $d= \left\lfloor{\frac{m-1}{2}}\right\rfloor$
the balls $B_{\left\lfloor{\frac{m-1}{2}}\right\rfloor}(\bm z)$ and $B_{\left\lfloor{\frac{m-1}{2}}\right\rfloor}(\bm z')$ must be disjoint, otherwise there would exist a bitstring $\bm x$ such that
\begin{equation}
D_H(\bm z, \bm z')\leq D_H(\bm x, \bm z)+ D_H(\bm x, \bm z') \leq 2 \left\lfloor{\frac{m-1}{2}}\right\rfloor < m
\end{equation}
contradicting the assumption. Because the sets are disjoint, if there are $k$ elements in the DFS we must have
\begin{equation}
   k |B_{d(m)}(\bm z)| \leq 2^n \implies k \leq \frac{2^n}{|B_{d(m)}(\bm z)|} 
\end{equation}
For convenience let us also define the sets 
\begin{equation}
    C_d(\bm z):= \{\bm x | D_H(\bm x,\bm z)= d\},
\end{equation}
containing the bitstrings at a fixed Hamming distance $d$ form $\bm z$.
\bigskip

Now let us define a probability density $P({\bm x})$ on all bitstrings. We call it product if 
\begin{equation}
P({\bm x}) =\prod_{i=1}^n p_{x_i}^{(i)},\end{equation}
for some nonegative $p_{0(1)}^{(i)}$ with $p_{0}^{(i)}+p_1^{(i)}=1$. In the following we always have a product probability in mind.

Given a product probability density, take any $\bm z$ and $\bm z'$ from the DFS, by assumption they must be at Hamming distance at least $m$. Then $\Delta(\bm z,\bm z')$ be the set of position on which they differ, we have $|\Delta(\bm z,\bm z')|\geq m$. The strings can now be decomposed as $\bm z = \bm z_{|\Delta} \times \bm z_{|\Delta^C}$ and $\bm z' = \bm z'_{|\Delta} \times \bm z'_{|\Delta^C}$ with $ \bm z_{|\Delta}= \lnot \bm z'_{|\Delta}$ and $ \bm z_{|\Delta^C}= \bm z'_{|\Delta^C}$. For a product probability density, the probabilities of these strings factorize over the probailities of the substrings with 
\begin{align}\label{eq: whatever 1}
   P(\bm z) &= P(\bm z_{|\Delta^C}) P(\bm z_{|\Delta}) \\
   \label{eq: whatever 2}
    P(\bm z') &= P(\bm z'_{|\Delta^C}) P(\bm z'_{|\Delta}) =P(\bm z_{|\Delta^C}) P(\lnot \bm z_{|\Delta}).
\end{align}
To understand the relation between these probabilities we now show the following lemma

\begin{lemma} \label{Lemma:1}
For any product probability density over bitstrings of length $|\Delta|$, one has
\begin{equation}
P(\bm z)^\frac{1}{|\Delta|} + P(\lnot \bm z)^\frac{1}{|\Delta|}\leq 1.    
\end{equation}
\end{lemma}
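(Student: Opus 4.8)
The plan is to recognize Lemma~\ref{Lemma:1} as an immediate consequence of the arithmetic-mean--geometric-mean (AM-GM) inequality applied separately to the two products. Writing $\ell = |\Delta|$ and using that $P$ is a product density, the probabilities of $\bm z$ and of its bitwise complement factorize. Since $p_0^{(i)}+p_1^{(i)}=1$, one has
\begin{equation}
P(\bm z) = \prod_{i=1}^\ell p_{z_i}^{(i)}, \qquad P(\lnot \bm z) = \prod_{i=1}^\ell p_{\lnot z_i}^{(i)} = \prod_{i=1}^\ell \bigl(1 - p_{z_i}^{(i)}\bigr).
\end{equation}
Setting $a_i := p_{z_i}^{(i)} \in [0,1]$, the claim to be proven becomes
\begin{equation}
\Bigl(\prod_{i=1}^\ell a_i\Bigr)^{1/\ell} + \Bigl(\prod_{i=1}^\ell (1-a_i)\Bigr)^{1/\ell} \leq 1.
\end{equation}

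First I would apply AM-GM to each of the two geometric means on the left-hand side, which yields
\begin{equation}
\Bigl(\prod_{i=1}^\ell a_i\Bigr)^{1/\ell} \leq \frac{1}{\ell}\sum_{i=1}^\ell a_i, \qquad \Bigl(\prod_{i=1}^\ell (1-a_i)\Bigr)^{1/\ell} \leq \frac{1}{\ell}\sum_{i=1}^\ell (1-a_i).
\end{equation}
Adding these two inequalities and using the complementarity $a_i + (1-a_i) = 1$ for every $i$, the two arithmetic means combine into $\frac{1}{\ell}\sum_{i=1}^\ell 1 = 1$, which is exactly the desired bound. The entire argument thus rests on the single structural observation that the two arithmetic means are complementary and hence sum to one.

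There is essentially no serious obstacle here; the only points worth checking are the degenerate edge cases where some $a_i \in \{0,1\}$, which cause no trouble since both sides extend continuously and AM-GM holds for nonnegative reals. For robustness, and in case a sharper companion statement is wanted later, I would alternatively note that the lemma is the specialization $b_i = 1-a_i$ of the superadditivity of the geometric mean, $\bigl(\prod_i a_i\bigr)^{1/\ell} + \bigl(\prod_i b_i\bigr)^{1/\ell} \leq \bigl(\prod_i (a_i+b_i)\bigr)^{1/\ell}$, which follows by normalizing each factor as $a_i/(a_i+b_i)$ and $b_i/(a_i+b_i)$ and applying AM-GM to the two normalized products; here $a_i+b_i=1$ makes the right-hand side equal to $1$ and also rules out division by zero. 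I expect the direct AM-GM route to be the cleanest to present.
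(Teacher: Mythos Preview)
Your proof is correct and considerably more direct than the paper's. The paper fixes $P(\bm z)=\prod_i p_i$ and maximizes $P(\lnot\bm z)=\prod_i(1-p_i)$ by a local deformation argument: perturbing any pair $(p_1,p_2)$ while keeping $p_1p_2$ fixed shows that $P(\lnot\bm z)$ increases as $p_1+p_2$ decreases, hence the maximizer is the iid distribution $p_i=P(\bm z)^{1/|\Delta|}$, giving $P(\lnot\bm z)\leq\bigl(1-P(\bm z)^{1/|\Delta|}\bigr)^{|\Delta|}$, which is then rearranged into the stated inequality. Your route, by contrast, applies AM--GM once to each of the two products and uses the complementarity $a_i+(1-a_i)=1$ to collapse the sum of arithmetic means to $1$, bypassing the variational step entirely. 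Both arguments identify the same equality case (all $a_i$ equal), but yours gets there in two lines; the paper's argument is more in the spirit of the deformation techniques it reuses in the subsequent Lemmas~\ref{Lemma:3} and~\ref{Lemma:4}, where an AM--GM shortcut is not available and the pairwise-smoothing method is genuinely needed.
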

\begin{proof}
For any fixed $P(\bm z)= \prod_{i=1}^{|\Delta|} p_i\in[0,1]$ let us maximize $P(\lnot \bm z)$. To do so we consider a deformation of $p_1,p_2$ leaving their product  and $P(\bm z)$ fixed. We have
\begin{align}
P(\lnot \bm z) =(1-p_1)(1-p_2) P(\lnot \bm z_{3}^{|\Delta|}) = -(p_1+p_2)P(\lnot \bm z_{3}^{|\Delta|}) +(1+p_1 p_2)P(\lnot \bm z_{3}^{|\Delta|}),
\end{align}
where the second term is not affected by the deformation.
We thus see that this probability is increased by decreasing $p_1+p_2$, which is minimized by setting them equal. By repeating argument of all pairs of indicies, we conclude that $P(\lnot \bm z)$ is maximized by setting $p_i = P(\bm z)^\frac{1}{|\Delta|}$, we thus find the
\begin{equation}
  P(\lnot \bm z) \leq  (1-P(\bm z)^\frac{1}{|\Delta|})^{|\Delta|} 
\end{equation}
to be tight for the iid probability distribution. With straightforward algebra one see that this bound is equivalent to 
\begin{equation}
P(\bm z)^\frac{1}{|\Delta|} + P(\lnot \bm z)^\frac{1}{|\Delta|}\leq 1,   
\end{equation}
concluding the proof.
\end{proof}

\begin{lemma} \label{Lemma:2} For a product probability density and two bitstrings $\bm z,\bm z'$ at Hamming distance $|\Delta|$ we have
    \begin{equation}\label{eq: corr pairwise}
        P(\bm z)^{\frac{1}{|\Delta|}}+
        P(\bm z')^{\frac{1}{|\Delta|}}\leq 1.
    \end{equation}
\end{lemma}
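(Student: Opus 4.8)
The plan is to reduce Lemma~\ref{Lemma:2} to Lemma~\ref{Lemma:1} via the product factorization already recorded in Eqs.~\eqref{eq: whatever 1}--\eqref{eq: whatever 2}. First I would split both strings across the set $\Delta$ of positions on which they differ and its complement $\Delta^C$ on which they agree. Because $\bm z$ and $\bm z'$ sit at Hamming distance $|\Delta|$, on $\Delta$ the restrictions are exact bitwise complements, $\bm z'_{|\Delta}=\lnot\bm z_{|\Delta}$, while on $\Delta^C$ they coincide. Factorizing the product density then yields $P(\bm z)=P(\bm z_{|\Delta^C})P(\bm z_{|\Delta})$ and $P(\bm z')=P(\bm z_{|\Delta^C})P(\lnot\bm z_{|\Delta})$.

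Next I would take the $|\Delta|$-th root of each probability and add them, pulling out the shared agreement factor:
\begin{equation}
P(\bm z)^{\frac{1}{|\Delta|}}+P(\bm z')^{\frac{1}{|\Delta|}}
= P(\bm z_{|\Delta^C})^{\frac{1}{|\Delta|}}\left(P(\bm z_{|\Delta})^{\frac{1}{|\Delta|}}+P(\lnot\bm z_{|\Delta})^{\frac{1}{|\Delta|}}\right).
\end{equation}
The restriction of the density to the $|\Delta|$ coordinates in $\Delta$ is itself a product probability density over bitstrings of length exactly $|\Delta|$, so Lemma~\ref{Lemma:1} applies verbatim and bounds the bracketed sum by $1$. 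Since $P(\bm z_{|\Delta^C})$ is a genuine marginal probability it lies in $[0,1]$, hence its $|\Delta|$-th root is also at most $1$, and the two bounds combine to give the claim.

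I expect no real obstacle here; the single point that needs care is bookkeeping, namely confirming that the marginal of a product density onto the coordinates of $\Delta$ remains a product density over precisely $|\Delta|$ bits, so that the exponent $1/|\Delta|$ in Lemma~\ref{Lemma:1} matches the Hamming distance, and checking that the common agreement factor $P(\bm z_{|\Delta^C})^{1/|\Delta|}\leq 1$ can only help. The essential content is therefore entirely carried by Lemma~\ref{Lemma:1}, with this lemma being the natural generalization from a string and its complement to any two strings at a prescribed distance.
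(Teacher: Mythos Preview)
Your proposal is correct and is essentially identical to the paper's argument: the paper also derives Lemma~\ref{Lemma:2} directly from Lemma~\ref{Lemma:1} via the factorizations in Eqs.~\eqref{eq: whatever 1}--\eqref{eq: whatever 2} together with the bound $P(\bm z_{|\Delta^C})\leq 1$. You have simply spelled out the one-line justification in full detail.
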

This follows from the lemma and the equations (\ref{eq: whatever 1},\ref{eq: whatever 2}), by imposing $P(\bm z|_{\Delta^C})\leq 1$.\bigskip

Now we would like to bound the probability of a Hamming distance Ball around a bitstrings. First we prove the following general bound.

\begin{lemma}  \label{Lemma:3}For any product probability density on bitstrings $P({\bm x})$ and any bitstring $\bm z$ one has 
\begin{align}
{\rm Pr}(\bm x \in B_m) &\geq F_{m;n}(P({\bm z})^{\frac{1}{n}})\\
F_{m;n}(p)&:= \sum_{k=0}^m \binom{n}{k} p^{n-k}(1-p)^k.
\end{align}
\end{lemma}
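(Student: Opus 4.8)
The plan is to recast the statement as a comparison between a Poisson--binomial and a binomial lower tail, and then to show that the binomial with the matched \emph{geometric} mean is the minimizing case. First I would fix the center $\bm z$ and set $p_i := p_{z_i}^{(i)}$, the probability that coordinate $i$ of $\bm x$ agrees with $z_i$. Under the product density the Hamming distance $M:=D_H(\bm x,\bm z)=\sum_{i=1}^n B_i$ is a sum of independent Bernoulli variables with $\Pr(B_i=1)=1-p_i$, so that $\Pr(\bm x\in B_m)=\Pr(M\le m)$ for the ball $B_m=B_m(\bm z)$. Since $P(\bm z)=\prod_i p_i$, the target argument is $P(\bm z)^{1/n}=\bar p:=(\prod_i p_i)^{1/n}$, the geometric mean of the agreement probabilities, and one checks directly that in the homogeneous case $p_1=\dots=p_n=\bar p$ one has $M\sim\mathrm{Binom}(n,1-\bar p)$ and hence $\Pr(M\le m)=F_{m;n}(\bar p)$. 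Thus the lemma is equivalent to the claim that, among all product densities with a fixed value of $\prod_i p_i$, the homogeneous one minimizes $\Pr(M\le m)$.

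The core step is a pairwise ``equalization'' estimate in the spirit of Lemma~\ref{Lemma:1}. I would freeze all coordinates except two, say $i$ and $j$, keeping the product $p_ip_j$ fixed, and write
\begin{equation}
\Pr(M\le m)=\sum_{\ell}\Pr\!\big(M^{(ij)}=\ell\big)\,\Pr\!\big(B_i+B_j\le m-\ell\big),
\end{equation}
where $M^{(ij)}$ counts disagreements among the remaining $n-2$ coordinates and its law does not depend on $p_i,p_j$. A direct computation gives $\Pr(B_i+B_j\le 0)=p_ip_j$, $\Pr(B_i+B_j\le 1)=p_i+p_j-p_ip_j$ and $\Pr(B_i+B_j\le 2)=1$, so with $p_ip_j$ held fixed only the $k=1$ term varies, and by AM--GM it is minimized when $p_i=p_j=\sqrt{p_ip_j}$. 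Since the coefficients $\Pr(M^{(ij)}=\ell)\ge 0$ are unaffected, replacing $(p_i,p_j)$ by $(\sqrt{p_ip_j},\sqrt{p_ip_j})$ never increases $\Pr(M\le m)$.

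Finally I would globalize this local move. The replacement preserves $\prod_i p_i=P(\bm z)$ and, in the coordinates $u_i=\log p_i$, is exactly arithmetic averaging of a pair. To conclude cleanly I would invoke compactness: $\Pr(M\le m)$ is continuous on the compact set $\{\bm p\in[0,1]^n:\prod_i p_i=P(\bm z)\}$ and attains its minimum, and among all minimizers I select one that also minimizes the log-variance $\sum_i (u_i-\bar u)^2$, with $\bar u=\tfrac1n\sum_i u_i$. If such a minimizer were not homogeneous there would be a pair with $u_i\ne u_j$, and equalizing it would strictly reduce the log-variance, since $(u_i-\bar u)^2+(u_j-\bar u)^2-\tfrac12(u_i+u_j-2\bar u)^2=\tfrac12(u_i-u_j)^2>0$, without increasing $\Pr(M\le m)$, contradicting the choice. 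Hence the minimizer is homogeneous, where the objective equals $F_{m;n}(\bar p)$, giving $\Pr(\bm x\in B_m)\ge F_{m;n}(P(\bm z)^{1/n})$.

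I expect the main obstacle to be precisely this last globalization step. Naive iteration of the pairwise average only converges to the log-mean rather than reaching it in finitely many moves, and one must also rule out the degenerate situations in which an equalization leaves $\Pr(M\le m)$ unchanged (for instance when $\Pr(M^{(ij)}=m-1)=0$). Both difficulties are sidestepped by the secondary minimization of the log-variance, which furnishes a genuine contradiction at any non-homogeneous candidate minimizer instead of relying on a limiting argument.
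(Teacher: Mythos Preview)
Your proposal is correct and follows essentially the same approach as the paper: both reduce to a pairwise equalization argument in which one fixes the product $p_ip_j$, observes that the only term that varies is the one where exactly one of the two coordinates disagrees, and uses AM--GM to conclude that setting $p_i=p_j=\sqrt{p_ip_j}$ does not increase $\Pr(M\le m)$. The paper handles the globalization informally (``this can be repeated for any pair of indices''), whereas your compactness plus secondary log-variance minimization makes that step rigorous; this is a genuine improvement, since as you anticipated the naive iteration only converges to the homogeneous point.
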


\begin{proof}
Without loss of generality we can set $\bm z$ to be the string of zeroes. Then we have
\begin{equation}
    P_{\bm z} = \prod_{i=1}^n p_i
\end{equation}
where $p_i = p_{0}^{(i)}$. 
The probability density is thus given by the vector $\bm p  = (p_{1}\dots p_n)$ of local probabilities. 
Keeping all of these values except $p_1,p_2$ fixed we consider the following deformation of the probability density
\begin{equation}
(p_1(\delta),p_2(\delta)) = (p_1 (1+\delta), p_2 \frac{1}{1+\delta})
\end{equation}
such that $P_{\bm z}$ and $p_1p_2$ remains unchanged. Let us analyze how this affects $\text{Pr}(\bm x \in B_m(\bm z))$ we have
\begin{align}
    &\text{Pr}(\bm x \in B_m(\bm z)) = p_1 p_2 \text{Pr}(\bm x_3^n \in B_m(\bm z_3^n))\\
    &(p_1 (1-p_2)+(1-p_1) p_2) \text{Pr}(\bm x_3^n \in B_{m-1}(\bm z_3^n))\\
    & +(1-p_1)(1-p_2)
     \text{Pr}(\bm x_3^n \in B_{m-2}(\bm z_3^n)) 
     \\
&= (p_1+p_2) (\text{Pr}(\bm x_3^n \in B_{m-1}(\bm z_3^n))-\text{Pr}(\bm x_3^n \in B_{m-2}(\bm z_3^n)))\\
&+\dots
\end{align}
where the dots contain all the terms which are not changed by the perturbation (independent of $p_1,p_2$ and proportional to $p_1p_2$). Since $(\text{Pr}(\bm x_3^n \in B_{m-1}(\bm z_3^n))-\text{Pr}(\bm x_3^n \in B_{m-2}(\bm z_3^n)))$ is positive we see that $\text{Pr}(\bm x \in B_m(\bm z))$ decreases by the perturbation iff $p_1+p_2$ decreases. This is the case whenever the perturbation makes $p_1$ and $p_2$ closer to each other . We conclude that $\text{Pr}(\bm x \in B_m(\bm z))$ decreased by setting  $p_1=p_2 =\sqrt{p_1 p_2}$.

This can be repeated for any pair of indices. Hence, for a given $P_{\bm z}$, the probability $\text{Pr}(\bm x \in B_m(\bm z))$ is minimized by the ``iid strategy'' $p_i =p=(P_{\bm z})^{\frac{1}{n}}$. In this case we find
\begin{align}
\text{Pr}(\bm x \in B_m(\bm z)) = \sum_{k=0}^m \binom{n}{k} p^{n-k} (1-p)^k
\end{align}
Subtracting $P_{\bm z}=p^n$ removes the term $k=0$ in the sum and concludes the proof.
\end{proof}

It is intuitive to see that this bound can be improved if we know that there is another bitstrings that has a higher probability that $\bm z$. This is fomralized by the following two results

\subsection{Lemmas required for Result~\ref{result:single_dfs_exponantially_small}}
\begin{lemma} \label{Lemma:4}For any product probability density on $m$-bitstrings (with $m\geq d \geq 0$), and some bitstrings $\bm z$ such that $P(\bm z)\geq P(\lnot \bm z)$ one has
\begin{equation}
{\rm Pr}(C_d(\lnot \bm z))\geq  P(\lnot \bm z)\binom{m}{d} \left(\frac{P(\bm z)}{P(\lnot \bm z)}\right)^\frac{d}{m} \geq P(\lnot \bm z)\binom{m}{d}.
\end{equation}
\end{lemma}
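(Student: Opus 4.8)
The plan is to reduce the statement to a Maclaurin-type inequality on elementary symmetric polynomials. First I would relabel the bits so that $\lnot \bm z$ becomes the all-zeros string $\bm 0$ (hence $\bm z$ becomes the all-ones string $\bm 1$); since both the Hamming sphere $C_d(\lnot \bm z)$ and the hypothesis $P(\bm z)\geq P(\lnot \bm z)$ are invariant under this bitwise relabelling, no generality is lost. Writing the product density through its single-bit marginals $q_i = P(x_i=1)$, one has $P(\lnot \bm z)=\prod_i(1-q_i)$ and $P(\bm z)=\prod_i q_i$. A string lies in $C_d(\lnot\bm z)$ iff it has exactly $d$ ones, so factoring $P(\lnot \bm z)$ out of the sum over all such strings gives
\begin{equation}
{\rm Pr}(C_d(\lnot \bm z)) = P(\lnot \bm z)\, e_d(r_1,\dots,r_m), \qquad r_i := \frac{q_i}{1-q_i},
\end{equation}
where $e_d$ is the degree-$d$ elementary symmetric polynomial, and I note that $\prod_i r_i = P(\bm z)/P(\lnot \bm z)$.

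The heart of the argument is then the inequality $e_d(r_1,\dots,r_m)\geq \binom{m}{d}\,(\prod_i r_i)^{d/m}$, valid for non-negative $r_i$. I would obtain it by applying AM-GM to the $\binom{m}{d}$ monomials $\prod_{i\in S} r_i$, one for each size-$d$ subset $S$. The geometric mean of these monomials equals $(\prod_i r_i)^{\binom{m-1}{d-1}/\binom{m}{d}}$, because each index $i$ occurs in exactly $\binom{m-1}{d-1}$ of the subsets, and the exponent simplifies to $d/m$. Multiplying the AM-GM bound through by $\binom{m}{d}$ yields the claimed inequality, and substituting back $\prod_i r_i = P(\bm z)/P(\lnot \bm z)$ produces the first inequality of the lemma.

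The second inequality is then immediate: since $P(\bm z)\geq P(\lnot \bm z)$ by hypothesis and $d/m\geq 0$, the factor $(P(\bm z)/P(\lnot \bm z))^{d/m}\geq 1$, so dropping it only weakens the bound. I expect no serious obstacle here; the only points requiring care are the combinatorial multiplicity count $\binom{m-1}{d-1}$ in the AM-GM step and the boundary cases $q_i\in\{0,1\}$, which either force $P(\lnot \bm z)=0$ (rendering both sides zero) or make $\prod_i r_i$ vanish, and which are handled either by continuity or by the observation that AM-GM holds for all non-negative arguments regardless.
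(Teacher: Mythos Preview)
Your proof is correct and shares the same setup as the paper's: relabel so that $\lnot\bm z$ is the all-zeros string, factor out $P(\lnot\bm z)$, and express ${\rm Pr}(C_d(\lnot\bm z))$ as $P(\lnot\bm z)$ times the degree-$d$ elementary symmetric polynomial in the odds ratios $r_i=q_i/(1-q_i)$, with $\prod_i r_i = P(\bm z)/P(\lnot\bm z)$. The divergence is in how the core inequality $e_d(r_1,\dots,r_m)\geq \binom{m}{d}(\prod_i r_i)^{d/m}$ is established. The paper sets up a constrained minimization of $e_d$ subject to $\prod_i r_i$ fixed, then uses a two-variable smoothing deformation $(r_1,r_2)\mapsto(r_1(1+\delta),r_2/(1+\delta))$ to argue that the minimum is attained when all $r_i$ are equal; this forces the authors to treat $d\in\{0,m\}$ and $(m,d)=(2,1)$ as separate cases. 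Your route---AM--GM on the $\binom{m}{d}$ monomials, together with the multiplicity count $\binom{m-1}{d-1}/\binom{m}{d}=d/m$---dispatches the inequality in one line, handles all $d$ uniformly, and avoids the infinitesimal analysis. So you arrive at the same bound by a strictly shorter argument; the paper's version makes the extremal configuration (all $r_i$ equal) more visibly the minimizer, but otherwise your approach is preferable.
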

\begin{proof}
    Without loss of generality we can chose $\bm z=\bm 0$ so that $\lnot \bm z=\bm 1$. The product distribution is defined by $m$ probabilities $p_i := p_0^{(i)}$ with
    \begin{equation}
        P(\bm 0) = \prod_{i=1}^m p_i \qquad P(\bm 1) = \prod_{i=1}^m (1-p_i).
    \end{equation}

\textbf{Case} $d=0$ or $d=m$: Here the lemma is true by the assumption $P(\bm z)\geq P(\lnot \bm z)$.

\textbf{Case} $m=2$ and $d=1$: 
\begin{align}
    {\rm Pr}(C_1(\bm 1)) &\geq  2 P(\bm 1) \sqrt{\frac{P(\bm 0)}{P(\bm 1)}} \\
    \Longleftrightarrow 0&\leq 
    \left({\rm Pr}(C_1(\bm 1))\right)^2 - \left(2 P(\bm 1) \sqrt{\frac{P(\bm 0)}{P(\bm 1)}} \right)^2 \\
    &= \left( p_1 + p_2 -2 p_1 p_2\right)^2 - 4 p_1 p_4 (1-p_1)(1-p2)\\
    &= p_1^2+2 p_1 p_2 +p_2^2 \geq 0
\end{align}
\textbf{Case} $m \geq 3$:

Now let us also define $f_i = \frac{p_i}{1-p_i}$. By assumption we have 
\begin{equation}\label{eq: cons f}
\prod_i f_i = \frac{P(\bm 0)}{P(\bm 1)}\geq 1.
\end{equation}
Next, note that $C_d(\bm 1)$ contains all the bitstrings with $d$ zeroes and $m-d$ ones. To express the probability of $C_d(\bm 1)$ let us define the set $\mathds{S}^m$ containing all sets $S\subset \{1\dots,m\}$ with $d$ elements (with $S^C =\{1,\dots,m\}\setminus S$). There are $|\mathds{S}_d|=\binom{m}{d}$ such sets, and now we can write
\begin{equation}
{\rm Pr}(C_d(\bm 1))= \sum_{S\in \mathds{S}_d}\prod_{i\in S} p_i \prod_{i\in S^C} (1-p_i) = P(\bm 1) \sum_{S\in \mathds{S}_d}\prod_{i\in S} f_i.
\end{equation}
We want to lower bound this probability formally this can be done by defining the following program 
\begin{align}
    {\rm Pr}(C_d(\bm 1)) \geq  (\ast):=\min_{0\leq p_1,\dots, p_m \leq 1} \quad & P(\bm 1) \sum_{S\in \mathds{S}_d}\prod_{i\in S} f_i \\
    \text{such that} \quad & \prod_{i=1}^m p_i =P(\bm 0) \\ &
   \prod_{i=1}^m (1-p_i) = P(\bm 1)
   \\ &\prod_{i=1}^m f_i = \frac{P(\bm 0)}{P(\bm 1)}.
\end{align}
Here, the constraints are redundant since the first two imply the last one. However, the reason for writing it like this is that we have now relaxed the minimization problem by dropping the first two constraints. The minimum of the relaxed problem is necessarily lower than the original one; hence we get
\begin{align}
    (\ast)\geq  \min_{0\leq f_1,\dots, f_m} \quad & P(\bm 1) \sum_{S\in \mathds{S}_d}\prod_{i\in S} f_i \\
    \text{such that} \quad & \prod_{i=1}^m f_i = \frac{P(\bm 0)}{P(\bm 1)}. \label{eq: const app}
\end{align}
To solve this minimization, focus on two values, say $f_1$ and $f_2$. The probability can be written as 
\begin{align}\label{eq: P app}
\text P=P(\bm 1) \sum_{S\in \mathds{S}_d}\prod_{i\in S} f_i 
= P(\bm 1)(  f_1 f_2 \sum_{S\in \mathds{S}_{d-2} '}\prod_{i\in S} f_i +(f_1 + f_2) \sum_{S\in \mathds{S}_{d-1}''}\prod_{i\in S} f_i +\dots)
\end{align}
where $\mathds{S}_{d-2}'$ contains all subsets of $\{3,\dots,m\}$ with $d-2$ elements, $\mathds{S}_{d-1}''$  contains  all subsets of $\{3,\dots,m\}$ with $d-1$ elements, and $\dots$ collects all terms where neither $f_1$ nor $f_2$ appear. We consider the following deformation of these values
\begin{equation}
    (f_1,f_2) \to (f_1 (1+\delta) , \frac{f_2}{ (1+\delta)} ),
\end{equation}
which leaves the product $f_1,f_2$ unchanged an thus also does not break the constraint in Eq.~\eqref{eq: const app}.
Yet, it does change the value of P in Eq.~\eqref{eq: P app} to
\begin{equation}
    P \to P + \delta (f_1-f_2) \sum_{S\in \mathds{S}_{d-1}''}\prod_{i\in S} f_i + O(\delta^2).
\end{equation}
We see that P can be decreased by a positive $\delta$ (increasing $f_1$) if $f_2>f_1$. In other words, the value of P is decreased if we move the values of $f_1,f_2$ closer to each other. 
This argument can be repeated with all pairs of indices, leading to the conclusion that P is minimized when all $f_i$ are equal. By~\eqref{eq: const app} their values are then  
$f_i = \left(\frac{P(\bm 0)}{P(\bm 1)}\right)^\frac{1}{m}$.
Plugging these values back in the expression we finally find
\begin{equation}
{\rm Pr}(C_d( \bm 1))\geq  P(\bm 1) \sum_{S\in\mathds{S}_d}f_i^d =  P(\bm 1)\sum_{S\in \mathds{S}_d}\left(\frac{P(\bm 0)}{P(\bm 1)}\right)^\frac{d}{m} = P(\bm 1) \binom{m}{d} \left(\frac{P(\bm 0)}{P(\bm 1)}\right)^\frac{d}{m} \geq P(\bm 1) \binom{m}{d} .
\end{equation}
Which proves the lemma.
\end{proof}

We now use the lemma to derive the following corrollary.
\begin{lemma} \label{lemma: prob ball tight}
   Consider a product probability distribution and two n-bitstrings $\bm z$ and $\bm z'$, such that $P(\bm z)>P(\bm z')$ and $D_H(\bm z,\bm z')=|\Delta|$ (with $|\Delta|>d\geq 0$). For any $d < |\Delta|$ we have
   \begin{equation}
       {\rm Pr} (B_d(\bm z')) \geq P(\bm z') \sum_{k=0}^d \binom{|\Delta|}{k}
   \end{equation}
\end{lemma}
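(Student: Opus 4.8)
The plan is to reduce the statement to a direct application of Lemma~\ref{Lemma:4}. The key observation is to split the $n$ positions into the set $\Delta$ on which $\bm z$ and $\bm z'$ disagree (so $|\Delta| = D_H(\bm z,\bm z')$) and its complement $\Delta^C$ on which they coincide. Since the distribution is product, every probability factorizes across $\Delta$ and $\Delta^C$; writing $P_\Delta$ for the marginal product distribution on the $\Delta$ coordinates and using $\bm z_{|\Delta^C} = \bm z'_{|\Delta^C}$, we have $P(\bm z) = P(\bm z_{|\Delta^C})\, P_\Delta(\bm z_{|\Delta})$ and $P(\bm z') = P(\bm z_{|\Delta^C})\, P_\Delta(\bm z'_{|\Delta})$.

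First I would lower bound ${\rm Pr}(B_d(\bm z'))$ by keeping only those bitstrings that coincide with $\bm z'$ on $\Delta^C$ and lie within Hamming distance $d$ of $\bm z'$ on the coordinates of $\Delta$. Any such string is genuinely in $B_d(\bm z')$, since its total distance to $\bm z'$ equals its distance on $\Delta$ alone, so discarding all other contributions is a legitimate one-sided bound. The probability of this subset factorizes as $P(\bm z_{|\Delta^C})\cdot {\rm Pr}_\Delta\!\big(B_d(\bm z'_{|\Delta})\big)$, reducing the problem to the marginal over $\Delta$.

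Next I would analyze the marginal ball. Because $\bm z$ and $\bm z'$ differ on all of $\Delta$, one has $\bm z'_{|\Delta} = \lnot\,\bm z_{|\Delta}$, so the ball decomposes into spheres, ${\rm Pr}_\Delta\!\big(B_d(\bm z'_{|\Delta})\big) = \sum_{k=0}^d {\rm Pr}_\Delta\!\big(C_k(\lnot\,\bm z_{|\Delta})\big)$. The hypothesis transfers to the marginal: $P(\bm z) > P(\bm z')$ together with the common factor $P(\bm z_{|\Delta^C})$ forces $P_\Delta(\bm z_{|\Delta}) > P_\Delta(\lnot\,\bm z_{|\Delta})$, which is exactly the ordering required to invoke Lemma~\ref{Lemma:4} on the $|\Delta|$ coordinates with high-probability string $\bm z_{|\Delta}$. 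Applying it to each sphere (valid for every $k \le d < |\Delta|$) gives ${\rm Pr}_\Delta\!\big(C_k(\lnot\,\bm z_{|\Delta})\big) \ge P_\Delta(\lnot\,\bm z_{|\Delta})\binom{|\Delta|}{k} = P_\Delta(\bm z'_{|\Delta})\binom{|\Delta|}{k}$.

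Finally I would assemble the pieces: summing over $k$ and multiplying back by $P(\bm z_{|\Delta^C})$ yields ${\rm Pr}(B_d(\bm z')) \ge P(\bm z_{|\Delta^C})\, P_\Delta(\bm z'_{|\Delta}) \sum_{k=0}^d \binom{|\Delta|}{k} = P(\bm z') \sum_{k=0}^d \binom{|\Delta|}{k}$, which is the claim. I do not anticipate a genuine obstacle; the only care needed is in the bookkeeping of the restriction step and in confirming that the strict ordering $P(\bm z) > P(\bm z')$ descends to the $\Delta$-marginal so that Lemma~\ref{Lemma:4}'s hypothesis holds. The conceptual crux is recognizing that the lemma's spheres-around-the-antipode bound is precisely what controls a ball around $\bm z'$ once one conditions on agreement off $\Delta$.
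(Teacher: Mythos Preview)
Your proposal is correct and follows essentially the same route as the paper: restrict to bitstrings agreeing with $\bm z'$ on $\Delta^C$, factor out $P(\bm z'_{|\Delta^C})$, decompose the remaining $\Delta$-ball into spheres, and apply Lemma~\ref{Lemma:4} term by term using the marginal ordering inherited from $P(\bm z)>P(\bm z')$. The paper additionally records the slightly sharper intermediate bound $\sum_{k=0}^d \binom{|\Delta|}{k} P(\bm z)^{k/|\Delta|}P(\bm z')^{1-k/|\Delta|}$ before dropping the ratio factor, but the structure of the argument is identical.
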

\begin{proof}
Let $\Delta$ be the set of indices on which the strings differ, they can then be  factored as $\bm z = \bm z_{|\Delta} \times \bm z_{|\Delta^C}$ and $\bm z' = \bm z'_{|\Delta} \times \bm z'_{|\Delta^C}$, with $\bm z'_{|\Delta} =\lnot \bm z_{|\Delta}$ such that 
\begin{align}
   P(\bm z) &= P(\bm z_{|\Delta^C}) P(\bm z_{|\Delta}) \\
    P(\bm z') &= P(\bm z'_{|\Delta^C}) P(\bm z'_{|\Delta}) =P(\bm z_{|\Delta^C}) P(\lnot \bm z_{|\Delta}).
\end{align}
and $P(\lnot \bm z_{|\Delta^C}) \leq P (\bm z_{|\Delta^C})$ by assumption. The ball $B_d(\bm z'_{|\Delta})$ contains all the bitstrings where only up to $d$ the bits within $\Delta$ have been flipped. Hence we find
\begin{align}
    {\rm Pr} (B_d(\bm z')) &\geq  P(\bm z'_{|\Delta^C}) {\rm Pr} (B_d(\bm z'_{|\Delta})) \\
    &\geq  P(\bm z'_{|\Delta^C}) \sum_{k=0}^d {\rm Pr}(C_d(\bm z'_{|\Delta}))\\
&\geq P(\bm z'_{|\Delta^C})\sum_{k=0}^d P(\bm z'_{|\Delta}) \binom{|\Delta|}{k} \left(\frac{P(\bm z_{|\Delta})}{P(\bm z'_{|\Delta})}\right)^{\frac{k}{\Delta}} \\
&= \sum_{k=0}^d P(\bm z'_{|\Delta^C})P(\bm z'_{|\Delta}) \binom{|\Delta|}{k} \left(\frac{P(\bm z_{|\Delta})}{P(\bm z'_{|\Delta})}\right)^{\frac{k}{\Delta}} \left(\frac{P(\bm z_{|\Delta^C})}{P(\bm z'_{|\Delta^C})} \right)^{\frac{k}{\Delta}}\\
& = \sum_{k=0}^d \binom{|\Delta|}{k} P(\bm z)^{\frac{k}{|\Delta|}}P(\bm z')^{1-\frac{k}{|\Delta|}}\\
&\geq P(\bm z') \sum_{k=0}^d \binom{|\Delta|}{k} .
\end{align}
Where in in the line before the last we give a bound which is more cumersome but tighter than in the statement of the lemma.
\end{proof}

\subsection{Proof of the Result~\ref{result:single_dfs_exponantially_small}}\label{section:single_dfs_exponantially_small}
\begin{proof}\label{proof:single_dfs_exponantially_small}
    Let us now come back to our set DFS= $\{\bm z_1,\dots,\bm z_k\}$. For a given product probability density, we would like to understand what are the possibility values of $P(\bm z_i)$. Without loss of generality we can arrange the vecotors such that these probabilities are decreasing $P(\bm z_i)\geq P(\bm z_{i+1})$. 
    
    We have seen that the balls $B_{\left\lfloor{\frac{m-1}{2}}\right\rfloor}(\bm z_i)$ around the bitsrings $\bm z_i$ are disjoint. Moreover each $z_i$ with $i\geq 1$ is at Hamming distance $|\Delta|\geq m$ away from $\bm z_1$, and $P(z_i)\leq P(z_1)$. Hence by lemma~\ref{lemma: prob ball tight} it follows that for all $i\geq 2$
    \begin{align}
        {\rm Pr}( B_{\left\lfloor{\frac{m-1}{2}}\right\rfloor}(\bm z_i)) \geq P(\bm z_i) \sum_{l=0}^{\left\lfloor{\frac{m-1}{2}}\right\rfloor} \binom{|\Delta|}{l} \geq P(\bm z_i) \sum_{l=0}^{\left\lfloor{\frac{m-1}{2}}\right\rfloor} \binom{m}{l}.
    \end{align}
    And since these balls are disjoint we must also have
    \begin{align}
    1\geq \sum_{i=1}^k {\rm Pr}( B_{\left\lfloor{\frac{m-1}{2}}\right\rfloor}(\bm z_i)) \geq  \sum_{i=2}^k {\rm Pr}( B_{\left\lfloor{\frac{m-1}{2}}\right\rfloor}(\bm z_i)) \geq \left(\sum_{i=2}^k P(z_i)\right)\left(\sum_{l=0}^{\left\lfloor{\frac{m-1}{2}}\right\rfloor} \binom{m}{l}\right).
    \end{align}
    Or simply 
    \begin{equation}
      \sum_{i=2}^k P(z_i) \leq  \frac{1}{\sum_{l=0}^{\left\lfloor{\frac{m-1}{2}}\right\rfloor} \binom{m}{l}}
    \end{equation}
    
    This essentially proves the result~\ref{result:single_dfs_exponantially_small} in the main text.
\end{proof}

\subsection{Further derivations within Section~\ref{sec:product_general}}
\subsubsection{Explicit derivation of equation (\ref{equ:FischerdecDFS})} \label{Proof:equ:FischerdecDFS}
\begin{align*}
 \mathcal{F}_G &= \mathcal{F}\left(  \bigoplus_{{\rm DFS}_{\bm \kappa}} \Pi_{\bm \kappa} \ket{\Psi}_0 \bra{\Psi}_0 \Pi_{\bm \kappa}\right) \\
 &= \sum_{\bm  \kappa} p_{\bm \kappa} \mathcal{F}(\ket{\Psi}_{\bm \kappa} \bra{\Psi}_{\bm \kappa}) \ \text{(QFI linear in direct sums)} \\
 &= 4\sum_{\bm  \kappa} p_{\bm \kappa} {\rm Var}_{\Psi_{\bm  \kappa}} (G)  \ (\text{QFI formula}) \\
 &= 4\sum_{\bm  \kappa} p_{\bm \kappa} \left(\frac{\bra{\Psi}_0 \Pi_{\bm \kappa} G^2 \Pi_{\bm \kappa} \ket{\Psi}_0  }{p_{\bm \kappa}}- \frac{\bra{\Psi}_0 \Pi_{\bm \kappa} G \Pi_{\bm \kappa} \ket{\Psi}_0  }{p_{\bm \kappa}}^2  \right) \ (\text{Definition variance}) \\
 &= 4\sum_{\bm  \kappa} p_{\bm \kappa} \left(\frac{\bra{\Psi}_0 \Pi_{\bm \kappa} (\Pi_{\bm \kappa}G \Pi_{\bm \kappa})^2 \Pi_{\bm \kappa} \ket{\Psi}_0  }{p_{\bm \kappa}}- \frac{\bra{\Psi}_0 \Pi_{\bm \kappa} (\Pi_{\bm \kappa}G\Pi_{\bm \kappa}) \Pi_{\bm \kappa} \ket{\Psi}_0  }{p_{\bm \kappa}}^2  \right) \ (\text{using } \Pi_{\bm \kappa}^2=\Pi_{\bm \kappa} \land  [\Pi_{\bm \kappa},G]=0) \\
 &=4\sum_{\bm  \kappa} p_{\bm \kappa} {\rm Var}_{\Psi_{\bm \kappa}} (G_{\bm \kappa}) \ (\text{Definition variance})
\end{align*}

\subsubsection{Derivation of last inequality of equation (\ref{equ: Boundprodprobdfs})} \label{Proof:equ: Boundprodprobdfs}
\begin{align*}
    \left(\frac{ m}{\lfloor \frac{m-1}{2}\rfloor}\right)^{-m} &\overset{(m-1) \ \text{even}}{=} \left(\frac{ m}{ \frac{m-1}{2}}\right)^{-m} = 2^{-m} \left( \frac{m-1}{m}\right)^m\leq 2^{-m}\\
    &\overset{(m-1) \ \text{odd}}{=} \left(\frac{ m}{ \frac{m-2}{2}}\right)^{-m} = 2^{-m} \left( \frac{m-2}{m}\right)^m\leq 2^{-m}
\end{align*}

\subsubsection{Deriviation equation (\ref{equ:fisher_info_term_ineq})}\label{Proof:equ:fisher_info_term_ineq} 
In the main text we find equation $4 p_{\bm\kappa}\mathrm{Var}_{\Psi_{\bm{\kappa}}}(G_{\bm\kappa}) \leq 2 \| G_{\bm \kappa}  \|_{\rm Spec}^2 \left(\sum_{i=2}^k p_i\right)$, which is proven in this section.
\begin{proof}
    As in the main text we set $p_{\bm  \kappa}:= \tr(\Psi_0 \Pi_{\bm  \kappa})$. We want to maximize \begin{equation}
    {F}_{\rm DFS}:= 4 p_{\bm  \kappa} {\rm Var}_{\Psi_{\bm  \kappa}} (G_\kappa) 
    \end{equation}
    We decompose the projector $\Pi_{\bm  \kappa}$ into two orthogonal projectors 
    \begin{equation}
        \Pi_{\bm  \kappa}=\Pi_{+} + \Pi_{-}, \ \Pi_{+} \Pi_{-}=0,
    \end{equation}
    where we set $p_{+}:=\tr(\Psi_0 \Pi_{+})$ and $p_{-}:=\tr(\Psi_0 \Pi_{-})$.
    We choose the decomposition such that the projector $\Pi_{+}$ projects $\ket{\Psi_0}$ onto the positive eigenspace of the generator $G$, where $\Pi_{-}$ projects $\ket{\Psi_0}$ onto the negative eigenspace of the generator $G$.
    Notice that $\mathrm{Var}_{\Psi_{\bm{\kappa}}}(G_{\bm \kappa}) \leq \mathrm{Var}_{\Psi_{\bm{\kappa}}}(g_{max} \Pi_+ + g_{min} \Pi_-)\leq \frac{||G_{\bm\kappa}||_{\rm spec}^2}{4}$, where $g_{max}$ and $g_{min}$ are the maximal and minimal eigenvalues of $G_{\bm \kappa}$ and without loss of generality we assume $g_{min}<0$.
    Using this observation, we find :
    \begin{align*}
        4 p_{\bm  \kappa}  {\rm Var}_{\Psi_{\bm  \kappa}} (G_\kappa) &\leq  4 p_{\bm  \kappa}  {\rm Var}_{\Psi_{\bm  \kappa}} (g_{max} \Pi_+ + g_{min} \Pi_-)\\
        &=4 p_{\bm \kappa} \left(\frac{\bra{\Psi_0} \Pi_{\bm \kappa} (g_{max}^2\Pi_++ g_{min}^2 \Pi_-) \Pi_{\bm \kappa} \ket{\Psi_0}  }{p_{\bm \kappa}}- \frac{\bra{\Psi_0} \Pi_{\bm \kappa} (g_{max} \Pi_+ + g_{min} \Pi_-) \Pi_{\bm \kappa} \ket{\Psi_0}  }{p_{\bm \kappa}}^2  \right) \\
        &= 4p_{\bm \kappa} \left( \frac{p_+ g_{max}^2+p_- g_{min}^2}{p_{\bm\kappa}} -\left( \frac{p_+ g_{max}+ p_- g_{min}}{p_{\bm \kappa}}\right)^2\right).
    \end{align*}
    The final expression for $ {F}_{\rm DFS}$ can be rewritten as 
    \begin{equation}
        {F}_{\rm DFS}= 4 \frac{p_+p_-}{p_++p_-} ||G_{\bm \kappa}||_{\rm spec}^2.
    \end{equation}
    The equation above can be upper bounded by $4 \ \text{min}\{p_+,p_-\} ||G_{\bm \kappa}||_{\rm spec}^2$.
    We introduce the probability $p_j$ as the probability to project onto one state of the $\text{DFS}_{\bm \kappa}$ labelled by $j$. We introduce $p_\perp$ as the probability to project onto any other state in the $\text{DFS}_{\bm \kappa}$, which is not labelled by $j$. Therefore, we have 
    \begin{equation}
        p_\kappa=p_j+p_\perp.
    \end{equation}
    Using this definition we rewrite $p_+$ and $p_-$ as
    \begin{equation}
        p_+=p_j+x, \ p_-=p_\perp-x,
    \end{equation}
    where $x$ is a parameter out of the interval $[0,p_\perp]$. So, we obtain 
    \begin{equation}
    {F}_{\rm DFS} \leq \underset{0\leq x \leq p_\perp}{\text{max}} 4 \frac{(p_j+x)(p_\perp-x)}{p_{\bm \kappa}} ||G_{\bm \kappa}||_{\rm spec}^2 \equiv {F}_{\rm DFS}^{\rm max}
    \end{equation}
    We perform the maximization via differentiating between the cases $p_j > \frac{p_{\bm \kappa}}{2}$ and $p_j \leq \frac{p_{\bm \kappa}}{2}$.
    \\
    \textbf{Case 1} $p_j > \frac{p_{\bm \kappa}}{2}$: \\
    In this case, more than half of the probability to project onto the DFS is concentrated on one state. Here we find that the maximum is obtained for $x_\text{max}=0$ and the maximum is 
    \begin{equation}
        {F}_{\rm DFS}^{\rm max}=4\left(p_j-\frac{p_j^2}{p_{\bm \kappa}} \right) ||G_{\bm \kappa}||_{\rm spec}^2.
    \end{equation}
    This expression, in turn, is maximized at its boundary for $p_j^{\rm max}=p_{\bm \kappa}/2$. Which directly leads to case 2.\\
    \textbf{Case 2} $p_j \leq \frac{p_{\bm \kappa}}{2}$: \\
     In this case, the probability to project onto a single state in the DFS contributes at most half of the overall probability to project onto the DFS. Here we find that the maximum is obtained for $x_\text{max}=p_j-\frac{p_\kappa}{2}$ and we find 
     \begin{equation}
         {F}_{\rm DFS}^{\rm max}=p_\kappa ||G_{\bm \kappa}||_{\rm spec}^2
     \end{equation}
    From above we know that $4 p_{\bm \kappa} {\rm Var}_{\Psi_\kappa} (G_{\bm \kappa})$ is maximal for $p_1 \leq p_{\bm \kappa}/2$. Therefore, we know that 
    \begin{equation} 
    4 p_{\bm \kappa} {\rm Var}_{\Psi_\kappa} (G_{\bm \kappa}) \leq 2 \sum_{i=2}^{k} p_i {\rm Var}_{\Psi_\kappa} (G_{\bm \kappa}),
    \end{equation}
    since $p_{\bm\kappa}=\sum_{i=1}^{k} p_i$. \\
    Moreover, we can bound the variance ${\rm Var}_{\Psi_\kappa} (G_{\bm \kappa}) \leq ||G_{\bm \kappa}||_{\rm spec}^2/4$: 
    In total, we have 
    \begin{equation}
        4 p_{\bm \kappa} {\rm Var}_{\Psi_\kappa} (G_{\bm \kappa}) \leq 2 \sum_{i=2}^{k} p_i \| G_{\bm \kappa}  \|_{\rm Spec}^2.
    \end{equation}
\end{proof}

\subsubsection{One DFS for minimal sized sensor networks}\label{Proof:m=n_DFS} 
Here, we show that for the case of $m=n$, i.e number of sensors is equal to the minimal number of sensors for which a DFS exists, that $\exists_1$ DFS, with $\text{dim(DFS)}=2$. \\
Here, we will invoke Lemma 1 and find that for any $\bm z, \bm z' \in \text{DFS}$, we have $D_H(\bm z, \bm z')=m=n$. Therefore, we have $\bm z'= - \bm z$. No other bit string can have Hamming distance $n$, so we find for each DFS: \begin{equation}
    \text{DFS}=\{\pm \bm z\}.
\end{equation}
Since $\bm z,- \bm z \in \text{DFS}$, we know that $\text{DFS}=\text{DFS}_{\bm \kappa=0}$. Moreover, let us assume that there exists another DFS having the same structure $\tilde{\text{DFS}}=\{\pm  \tilde{\bm z}\}$. From the definition of the DFS it follows that also 
\begin{equation}
    \tilde{\tilde{\text{DFS}}}=\{\pm \bm z, \pm  \tilde{\bm z} \},
\end{equation}
has to be a DFS. However, from $D_H(\bm z,  \tilde{\bm z})=n$ follows that $\tilde{\bm z}=- \bm z$. So, we have $\text{DFS}=\tilde{\text{DFS}}$. 

\end{document}